\documentclass[10pt, letterpaper]{article}

\usepackage[letterpaper,margin=1in]{geometry} 
\usepackage{setspace} 

\usepackage{yhmath} 
\usepackage{amsmath}
\usepackage{amsthm}
\usepackage{amssymb}
\usepackage{dsfont} 

\usepackage[hyperfootnotes=False]{hyperref}
\usepackage{tabularx}
\usepackage{booktabs}
\usepackage{siunitx}  
\usepackage{graphicx} 
\usepackage{caption}
\usepackage{subcaption}
\usepackage{threeparttable}
\usepackage{xcolor}

\newtheorem{mainassumption}{Assumption}

\newtheorem{maintheorem}{Theorem}

\newtheorem{corollary}{Corollary}

\newtheorem{lemma}{Lemma}[section]
\newtheorem{assumption}{Assumption}[section]
\newtheorem{remark}{Remark}[section]

\DeclareMathOperator*{\argmin}{arg\,min}

\usepackage{natbib}
\usepackage[USenglish]{babel}

\usepackage{orcidlink}

\title{
  Transformer-based CoVaR:\\
    Systemic Risk in Textual Information
}

\author{%
  Junyu Chen \footnotemark[1]%
  \and
  Tom Boot \footnotemark[1]%
  \and
  Lingwei Kong \footnotemark[1]%
  \and
  Weining Wang \footnotemark[2]%
}

\begin{document}

\renewcommand{\thefootnote}{\fnsymbol{footnote}}
\maketitle
\footnotetext[1]{%
    Department of Economics, Econometrics and Finance, Faculty of Economics and Business, University of Groningen, The Netherlands. 
    Emails: 
    \href{mailto:junyu.chen@rug.nl}{junyu.chen@rug.nl}, 
    \href{mailto:t.boot@rug.nl}{t.boot@rug.nl},
    \href{mailto:l.kong@rug.nl}{l.kong@rug.nl}.
}

\footnotetext[2]{%
    Corresponding author. School of Economics, University of Bristol, UK. Address: Beacon House, Queens Road, Bristol BS8 1QU, UK.
    Email: \href{mailto:weining.wang@bristol.ac.uk}{weining.wang@bristol.ac.uk}. Phone: +44 7413 466969.
}

\begin{abstract}
\noindent 
Conditional Value-at-Risk (CoVaR) quantifies systemic financial risk by measuring the loss quantile of one asset, conditional on another asset experiencing distress. We develop a Transformer-based methodology that integrates financial news articles directly with market data to improve CoVaR estimates. Unlike approaches that use predefined sentiment scores, our method incorporates raw text embeddings generated by a large language model (LLM). We prove explicit error bounds for our Transformer CoVaR estimator, showing that accurate CoVaR learning is possible even with small datasets. Using U.S. market returns and Reuters news items from 2006--2013, our out-of-sample results show that textual information impacts the CoVaR forecasts. With better predictive performance, we identify a pronounced negative dip during market stress periods across several equity assets when comparing the Transformer-based CoVaR to both the CoVaR without text and the CoVaR using traditional sentiment measures. Our results show that textual data can be used to effectively model systemic risk without requiring prohibitively large data sets.
\\

\noindent\textbf{Keywords}: Systemic Risk, Time-Series Forecasting, Transformer, Large Language Model (LLM), Textual Analysis
\\

\noindent\textbf{JEL Classification Codes}: C45, C58, G01

\end{abstract}
\newpage

\renewcommand{\thefootnote}{\arabic{footnote}}

\section{Introduction}

Between 2007 and 2009, the crisis in the U.S. mortgage market set off a chain reaction that destabilized financial institutions and the global economy. To better anticipate these events, regulators, policymakers, and researchers focused their attention on measuring the systemic risk that arises in a strongly connected economy. While some systemic risk factors can be effectively distilled from numerical market data, additional information can be extracted from textual data, one well-known example being “market sentiment” \citep{2017_Keiber_and_Samyschew}. However, this text-based risk information is inherently difficult to quantify, and a key open question is how we can effectively extract it from textual sources to obtain better measures and forecasts of systemic risk.



In this paper, we propose a formal econometric framework that uses a Transformer-based architecture \citep{2017_Vaswani_et_al} to improve systemic risk analysis by integrating structured financial variables and unstructured textual information such as financial news articles. Our approach proceeds in two steps. First, we employ a large language embedding model that maps variable-length news articles into fixed-size, high-dimensional vectors representing semantic content. Second, we extract the key latent information from these high-dimensional input sequences by leveraging the Transformer's ability to capture nonlinear dependencies and contextual relationships across diverse data sources (i.e., \textit{multimodal data}\footnote{By multimodal data, we mean inputs of different types, combining structured numerical variables (e.g., returns) with unstructured information such as text.}).

A Transformer is a neural network that relies on the self-attention mechanism to compute latent representations of its input. Traditional neural network architectures (such as convolutional or recurrent networks) rely on fixed parameterized transformations of their inputs, whereas the self-attention mechanism in a Transformer dynamically adapts the transformation based on the input. In other words, self-attention allows the importance of a variable to change based on the surrounding data. For example, a negative news item may be weighted as highly significant for predicting risk when other variables mention "default", whereas the same news item might be down-weighted if the context is a "holiday". By identifying these context-dependent relationships, the Transformer provides the latent representations necessary for the risk prediction task.

A main motivation for using the Transformer is that traditional sentiment scores often miss critical information in textual data or may even misrepresent sentiment indicators relevant for market prediction \citep{2019_Ke_et_al, 2011_Loughran_and_McDonald, 2007_Tetlock}. 
Transformers can effectively learn signals from noisy, high-dimensional textual inputs by down-weighting irrelevant content \citep{2022_Edelman_et_al}, even when the sample size is limited \citet{2025_Mousavi-Hosseini_et_al}. To show that such results can be also be expected when estimating the CoVaR on relatively small financial data sets, we establish a new convergence result that complements recent theory on convergence rates for deep neural networks in high dimensions \citep{2020_Schmidt-Hieber, 2019_Suzuki}. Specifically, we show how the quantile regression framework of \citet{2022_Padilla_et_al} can be extended to incorporate text information through the Transformer architecture. We then show that under the architecture we propose, the convergence rate of the CoVaR estimator only depends logarithmically on the dimension of the input data. As a result, a researcher does not manually need to summarize unstructured text into a low-dimensional object, for example via a sentiment scores, and then specify how these summaries affect risk.

Empirically, we estimate the Conditional Value-at-Risk (CoVaR), a widely used measure of systemic risk, for eight Global Systemically Important Banks (G-SIBs) in the U.S. We consider daily returns between 2006--2013, including both the 2008 financial crisis and the 2011 debt ceiling crisis. We augment the standard numerical data used to estimate CoVaR with a rich set of \textit{Reuters} financial news articles. The Transformer-based method that incorporates the full text data yield more negative CoVaR and $\Delta$CoVaR values during crisis periods than approaches that either exclude news information or reduce it to sentiment scores prior to risk modeling. After addressing potential look-ahead bias, we show that our empirical results remain robust. Furthermore, we demonstrate that incorporating text into the Transformer architecture achieves generally lower prediction loss compared to the model relying only on structured financial return data.

In summary, this paper makes three contributions to advancing the understanding and practical application of multimodal data in systemic risk assessment. First, to our knowledge, we are the first to introduce an econometric framework for analyzing CoVaR using the Transformer-based architecture. Second, we establish upper bounds for its $\ell_2$ risk and also generalization error. Finally, we empirically demonstrate that incorporating textual information leads to better predictive performance and alters the CoVaR risk measures. In particular, we identify a pronounced negative dip during the crisis period across several equity assets when comparing the Transformer-based CoVaR to both the CoVaR without text and the CoVaR using the traditional sentiment measure. 

\textbf{Outline.} The remainder of the paper is organized as follows. Section \ref{sec:literature_review} reviews the related literature. Section \ref{sec:2_methodology} introduces the model and estimators. Section \ref{sec:theory} presents the main theoretical results. Section \ref{sec:application} reports the findings of the empirical application. Section \ref{sec:conclusion_future_work} concludes and discusses directions for future research. Appendix \ref{sec:transformer_details} provides the description of an alternative Transformer architecture. Appendix \ref{sec:news_content} discusses the interaction between news content and risk prediction. Appendix \ref{sec:simulation} provides additional results from the simulation study. Proofs of all theoretical results are provided in the Supplemental Appendix (SA) to streamline the main exposition; these results may also be of independent technical interest.

\section{Literature Review} \label{sec:literature_review}

Broadly speaking, we can distinguish three approaches to risk modeling in a connected world which we refer to as network-based models, simulation-based approaches and market-based approaches. Early network-based models \citep{2000_Allen_and_Gale, 2012_Billio_et_al, 2014_Diebold_and_Yilmaz, 2015_Acemoglu_et_al} highlight the importance of capturing interconnectedness and nonlinear contagion effects. Simulation-based approaches \citep{2011_Gai_et_al, 2015_Greenwood_et_al} underscore amplification mechanisms like fire sales and liquidity spirals, yet they often assume simplified dynamics.

Market-based approaches offer another perspective by deriving systemic risk measures from asset prices and market information: \cite{2017_Acharya_et_al} proposed Systemic Expected Shortfall (SES) and its variant to estimate capital shortfalls in systemic crises, and \cite{2009_Segoviano_and_Goodhart} developed the Distress Insurance Premium (DIP) using option-implied risk-neutral probabilities. Among market-based approaches, the CoVaR framework proposed by \citet{2016_Adrian_and_Brunnermeier} remains a central tool for quantifying systemic risk. Subsequent contributions, including \citet{2015_Hautsch_et_al}, \citet{2016_Haerdle_et_al}, and \citet{2022_Keilbar_and_Wang}, extend CoVaR to high-dimensional environments. The high-dimensional setting is essential in applied practice, as systemic risk emerges from the complex interconnectedness of global markets and macroeconomic factors that low-dimensional models fail to address simultaneously.
However, these methodologies do not accommodate multimodal data structures, which are increasingly relevant in modern financial systems. 

A popular strategy to accommodate information from text is the use of sentiment analysis, where texts are for example classified as positive, negative, or neutral. Early work constructs sentiment indices using dictionary-based word counts: \cite{2007_Tetlock} shows that pessimistic media tone predicts short-term market declines, while \cite{2011_Loughran_and_McDonald} refine sentiment lexica to better capture financial context. Later studies, such as \cite{2019_Ke_et_al}, employ topic modeling to assign sentiment weights to terms and aggregate them into document-level scores for return prediction. Even recent neural-network-based studies are sentiment-focused: \cite{2024_Schuettler_et_al} find that LLMs enhance stock return prediction, \cite{2024_Wang_and_Ma} propose MANA-Net to aggregate sentiment dynamically by market relevance, and \cite{2025_Jha_et_al} measure popular financial sentiment across languages using LLMs. These findings highlight the value of richer linguistic and contextual signals through market sentiment. However, reducing text to low-dimensional sentiment categories may overlook other dimensions of information.

Much of the literature has found evidence of other channels affects financial risk and market outcomes beyond sentiment. For example, linguistic characteristics such as grammatical structure, linguistic complexity, etc. affect how information is processed and priced by markets \citep{2014_Loughran_and_McDonald, 2024_Fedyk, 2024_AlexKim_et_al}. In particular, \cite{2011_Engelberg_and_Parsons} and \cite{2012_Dougal_et_al} establish causal links between differential reporting of events by media or journalists and financial markets. Beyond linguistic characteristics, recent studies demonstrate the informational value of textual data for asset pricing and financial risk (e.g., \cite{2025_Adammer_et_al}; \cite{2023_van_Dijk_and_de_Winter}; \cite{2020_Bybee_et_al}), consistent with psychological evidence linking emotions to financial decisions \citep{2000_Lerner_and_Keltner, 2001_Lerner_and_Keltner}. In addition, topic-based analyses such as \citet{2009_Kogan_et_al} and \citet{2019_Larsen_and_Thorsrud} show that latent news topics predict volatility and macroeconomic variables. Thus, focusing solely on simplified sentiment analysis may be insufficient to capture the dynamics of risk, particularly during periods of financial stress. 

Recent advances in multimodal learning reinforce the view that textual information can provide valuable predictive signals for financial markets beyond traditional sentiment indicators. Transformers and LLMs have been applied to financial prediction using diverse inputs, including earnings-call audio, social media, and news \citep{2020_Li_et_al, 2023_Zou_and_Herremans, 2024_Wang_et_al, 2024_Omranpour_et_al, 2025_Wang_et_al, 2020_Yang_et_al, 2025_Cao_et_al}. Beyond multimodal prediction, \citet{2026_Pele_et_al} examine LLMs guided by human prompts for financial forecasting. Our empirical application builds on these approaches by employing a large language embedding model that maps news articles into high-dimensional vectors representing semantic content, rather than collapsing them into a single sentiment score.
\section{Methodology} \label{sec:2_methodology}

In this section, we first introduce the systemic risk modeling framework and then the Transformer-based architecture used for approximating a quantile function that integrates both financial variables and market-related textual information. In the next section, we formally study its approximation and generalization properties.

\subsection{Systemic Risk Modeling}


Our CoVaR estimation procedure proceeds in two steps, which builds on the framework of \cite{2022_Keilbar_and_Wang} but modifies the second estimation step. In the first step,  we estimate the VaR for each bank using a linear quantile regression model. In the second step, we estimate CoVaR by first fitting a Transformer-based quantile regression conditional on other banks’ returns and all financial news, and then plugging in the corresponding VaR estimates from the first step to replace the returns. We do not incorporate text data in the first step to maintain the standard definition of VaR as a quantile of the marginal return distribution. Our focus is therefore on the second step, where we use the Transformer's ability to process textual data. 

In the following, we let $R_{j,t}\in\mathbb{R}$ denote the return of bank $j$ at time $t$, where $j=1,\ldots,J$ with $J$ fixed and finite, and $t=1,\ldots,T$. To incorporate textual information, we utilize an embedding mapping, which is referred to in the deep learning literature as an \emph{embedding layer}. This embedding layer maps individual news items into $d_e$--dimensional numerical vectors (i.e., embedding vectors) that represent their semantic content in a continuous coordinate space. Specifically, let $n$ denote the number of news items observed for bank $j$ at time $t$. Applying the embedding layer to each news item yields a news-embedding matrix $E_{j,t} \in \mathbb{R}^{d_e \times n}$, where each column represents a single news item and $d_e$ is the embedding dimension. In our empirical application (see Section \ref{sec:application}), we employ a \textit{pre-trained}\footnote{A pre-trained model is trained in advance on large, general data.} mapping $\pi_e$ and treat its parameters as fixed. In practice, one may instead \textit{fine-tune}\footnote{Fine-tuning means starting from a pre-trained model and then slightly updating its parameters using task-specific data.} its embedding parameters.

\textbf{Step 1: Estimation of VaR}\\
A variety of methods has been developed to estimate VaR, including the historical simulation method \citep{1999_Hull_and_White}, parametric volatility models such as ARCH and GARCH \citep{1982_Engle_et_al,1986_Bollerslev}, approaches based on Extreme Value Theory \citep{2000_McNeil_and_Frey}, and linear quantile regression \citep{2015_Chao_et_al}. Any of the approaches described above can be used to construct our VaR estimator. Among these approaches, linear quantile regression provides a transparent, flexible, and easily interpretable alternative \citep{1978_Koenker}. It has been widely used in empirical studies linking macro-financial conditions to tail risks \citep{2016_Adrian_and_Brunnermeier, 2016_Haerdle_et_al}. Following \citet{2016_Adrian_and_Brunnermeier}, let $M_{t-1} \in \mathbb{R}^{m}$ denote a vector of lagged macroeconomic state variables that reflect the general state of the economy. The VaR for bank $i$ at time $t$ is defined as the $\tau$-quantile of its return distribution conditional on $M_{t-1}$:
\begin{equation}
    \mathrm{P}\bigl(R_{i,t} \leq \operatorname{VaR}_{i,t}^{\tau}|M_{t-1}\bigr) = \tau, \quad \tau \in (0,1).
\end{equation}
By convention, $\operatorname{VaR}_{i,t}^{\tau}$ is typically negative, reflecting a potential loss. Although alternative sign conventions exist, we follow the one stated above. 
We employ a linear quantile regression with lagged macroeconomic variables to estimate $\operatorname{VaR}_{i,t}^{\tau}$:
\begin{equation} \label{eq:var}
    R_{i,t} = \alpha_{i}(\tau) + \gamma_{i}(\tau) ^{\top}M_{t-1} + \epsilon_{i,t}(\tau)
\end{equation}
with $\alpha_{i}(\tau) \in \mathbb{R}$, $\gamma_{i}(\tau)\in \mathbb{R}^{m}$. We assume $F^{-1}_{\epsilon_{i,t}(\tau) \mid M_{t-1}}(\tau) = 0$ with $\epsilon_{i,t}(\tau)$ being the error term for the $\tau$-th quantile. The VaR estimate is then the fitted value of the quantile regression
\begin{equation}
\widehat{\operatorname{VaR}}_{i,t}^{\tau} = \widehat{\alpha}_{i}(\tau) + \widehat{\gamma}_{i}(\tau)^{\top} M_{t-1}.
\end{equation}

\textbf{Step 2: Estimation of CoVaR}\\
CoVaR, introduced as a systemic extension of standard VaR by \cite{2016_Adrian_and_Brunnermeier}, is defined as the VaR of bank $j$ conditional on a specific state of the financial system. In particular, we consider the specific state as the distress scenario of the financial system, while also taking the market-related textual information (i.e., financial news) into account. For the distress scenario of financial systems, we assume that all other banks are at their VaR level. Formally, we define CoVaR as 
\begin{equation}
    {\mathrm{P}\Bigl(R_{j,t} \leq \operatorname{CoVaR}_{j,t}^{\tau} \,\Big|\, R_{-j,t} = \operatorname{VaR}_{-j,t}^{\tau}, \,  E_{j,t}\Bigr) = \tau,}
\end{equation}
where $R_{-j,t}, \operatorname{VaR}_{-j,t}^{\tau} \in \mathbb{R}^{J-1}$ denote the returns and VaRs of all banks other than $j$ at time period $t$, respectively, and {$E_{j,t}$ relates to the embedded information for $j$-th bank up to at most time period $t$ (e.g., embedded financial news at time period $t-1$)}. We estimate $\operatorname{CoVaR}_{j,t}^{\tau}$ using a Transformer-based quantile regression described in Section~\ref{sec:transformer}. Specifically, the model is given by
\begin{equation} \label{eq:f_target}
    R_{j,t} = f_{j,\tau}^*\Bigl(R_{-j,t}, \, E_{j,t}\Bigr) + \nu_{j,t}(\tau),
\end{equation}
where $f_{j,\tau}^*(\cdot)$ is the true conditional quantile function for bank $j$, and $\nu_{j,t}$ is an error term satisfying $F^{-1}_{\nu_{j,t}(\tau) \mid R_{-j,t}, E_{j,t}}(\tau) = 0$. The function $f_{j,\tau}^*$ is essential for capturing the complex tail behavior of returns and news. It serves as the channel through which non-sentiment-related risk, as introduced earlier, transmits to the target bank return. Under the distress scenario, we replace $R_{-j,t}$ with the VaR estimates of all other banks $\widehat{\operatorname{VaR}}_{-j,t}^{\tau}$, so that the estimated CoVaR is computed as  
\begin{equation} \label{eq:fhat}
    \widehat{\operatorname{CoVaR}}_{j,t}^{\tau} = \widehat{f}_{j,\tau}\Bigl(\widehat{\operatorname{VaR}}_{-j,t}^{\tau}, \, E_{j,t}\Bigr),
\end{equation}
where $\widehat{f}_{j,\tau} \in \mathcal{G}$, with $\mathcal{G}$ denoting a class of Transformer-based neural network functions. The class $\mathcal{G}$, along with the model architecture and estimation procedure, is introduced in the next section.

\subsection{Transformer Architecture} \label{sec:transformer}

We adopt the Transformer architecture for the following reasons. Firstly, a key challenge for our CoVaR framework is integrating numerical and textual inputs without imposing restrictive dependence assumptions. Conventional architectures like CNNs (assuming spatial locality) and recurrent architectures like RNNs (assuming temporal order) \citep{2017_Yin_et_al} embed strong functional constraints that may not hold for our data. In contrast, the Transformer architecture enables direct pairwise interactions among all input elements, thereby capturing global dependencies \citep{2017_Vaswani_et_al, 2020_Cordonnier_et_al}. This may make the Transformer architecture better suited to settings where neither specific spatial nor temporal structure is clearly known a priori.

Secondly, our application necessitates a model that can learn sparse functions from noisy data. Financial news contains significant redundancy and irrelevance, requiring the model to identify and focus on meaningful information and signals. Additionally, the volume of news varies daily, requiring the use of \textit{padding}\footnote{Padding refers to extending shorter sequences with special meaningless placeholders so that all sequences have the same length. This is necessary because neural networks typically require inputs of uniform size.} to create uniform input sequences. This introduces artificial \textit{tokens}\footnote{In Transformers, a token is the basic unit of text that the model processes. It is analogous to a variable or feature in an econometric dataset. A token is typically represented as a vector. Padding tokens act as placeholders.} that the model must disregard. The Transformer’s self-attention mechanism directly addresses both requirements. It enables the model to dynamically weight input importance, effectively ignoring both irrelevant content and padding tokens. This property aligns with its theoretically established capacity to learn sparse functions (\citep{2022_Edelman_et_al}).

Thirdly, another challenge stems from the high-dimensional input, characterized by a large number of news articles $n$ (i.e., the model input length or number of tokens) and a large embedding dimensions $d_e$. Recent theoretical work suggests this challenge can be mitigated.  \cite{2024_Trauger_and_Tewari} demonstrate that, under suitable parameter constraints, the complexity of a Transformer model can even grow independently of the input dimension. In addition, \cite{2025_Mousavi-Hosseini_et_al} shows that Transformers can have smaller sample complexity that almost independent of the input length, implying superior data efficiency compared to feedforward or recurrent architectures. Inspired by these insights, we adopt a function class composed of Transformer blocks where the parameter norms are constrained. This design ensures that the model's complexity—and consequently,  the input dimensions can have little effect on the model complexity, as demonstrated in Corollary \ref{cor:convergence_rate}.

The Transformer architecture contains two components that we discuss in depth below: a multi-head self-attention mechanism (MSA) and a position-wise feed-forward network (FFN). We refer to these two components as sublayers. Each sublayer can be wrapped with a residual (skip) connection and layer normalization \citep{2016_He_et_al, 2016_Ba_Kiros_et_al}, which stabilize training and mitigate issues such as vanishing gradients. One MSA sublayer and one FFN sublayer together form a Transformer layer. A Transformer model can contain $L$ such layers stacked sequentially, indexed by $l=1, \ldots, L$. In the following, we denote the rectified linear unit (ReLU) activation function by $\sigma_{R}(x)=\max (x, \ 0)$ which operates element-wise, regardless of whether the input is a vector or a matrix. We also define the SoftMax function as
\begin{equation}
    \sigma_S: \mathbb{R}^{n \times n} \to [0,1]^{n \times n}, \quad Z \mapsto \left(\frac{e^{Z_{i,j}}}{\sum_{k} e^{Z_{k,j}}}\right)_{i,j=1}.
\end{equation}

We now zoom in on the Transformer architecture. We consider the Transformer architecture without residual connections and layer normalization in this paper. For a discussion on architectures including these components, please refer to \cite{2025_Jiao_et_al}. While these operations provide additional transformations within each sublayer, they do not alter the fundamental input–output mapping structure of the Transformer block \citep{2022_Edelman_et_al}. For completeness, we present in Appendix \ref{sec:transformer_details} the standard Transformer architecture, including residual connections and layer normalization.

We take the initial input to our Transformer-based neural network to be a matrix $Z_{j,t}\in\mathbb{R}^{d\times n}$ formed by concatenating the returns $R_{-j,t}$ and the text embeddings $E_{j,t}$. In practice, $R_{-j,t}$ and $E_{j,t}$ may not have compatible dimensions, so we apply a mapping $\Pi$ to obtain a representation of the desired size. In the empirical application in Section \ref{sec:application}, we consider a function $\Pi$ of the following form:
\begin{equation} \label{eq:concatenation}
    Z_{j,t} := \Pi(R_{-j,t}, E_{j,t}) = \left[
    \begin{array}{c}
        R_{-j,t} \cdot \mathbf{1}_{n}^{\top} \\ E_{j,t}
    \end{array}
    \right] \quad \in \mathbb{R}^{d \times n},
\end{equation}
where $\mathbf{1}_{n} \in \mathbb{R}^{n}$ is a vector of ones and $d = (J-1)+d_e$. This construction replicates the same return vector $R_{-j,t}$ across the $n$ columns and stacks it above each column of $E_{j,t}$. We refer to the resulting columns as \emph{return-augmented news embeddings}.

\subsubsection{Self-Attention Sublayer}
The self-attention sublayer allows each position (or token) in the input sequence to attend to, and aggregate information from, all other positions in the same sequence \citep{2017_Vaswani_et_al}. In our application, the input sequence is the concatenated vector of embedded news items and contemporaneous returns from other banks (see Equation \eqref{eq:concatenation}). The self-attention mechanism computes pairwise \textit{attention weights} between all elements in this sequence. Intuitively, these weights act as similarity or correlation scores, enabling the model to determine how much focus to place on a given news article when processing another (see Figure \ref{fig:words_score_matrix_news}). A weighted sum of the input values, using these attention weights, produces the sublayer's output, blending relevant information from across the entire input context.
\begin{figure}[t]
    \centering
    \caption{Example illustration of self-attention scores}
    \label{fig:words_score_matrix_news}
    \includegraphics[width=0.6\textwidth]{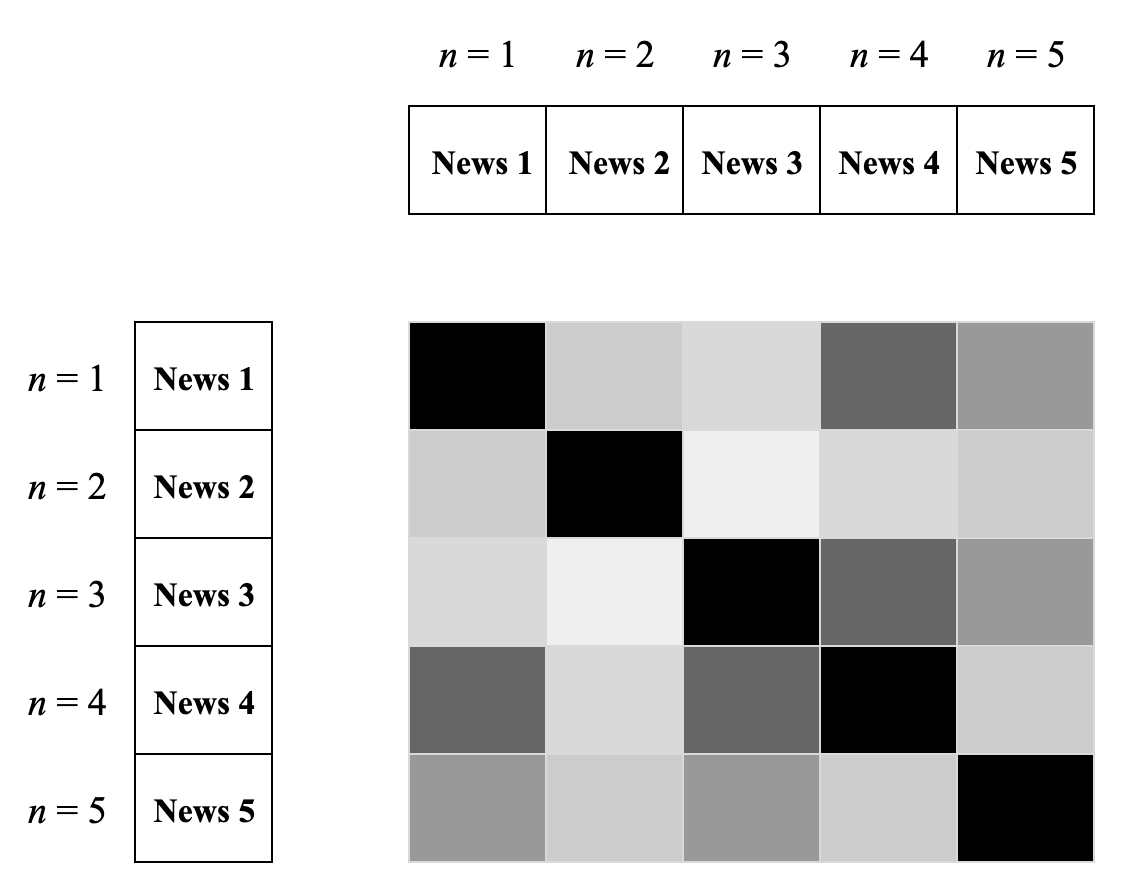}
    \caption*{\textit{Notes:} The self-attention mechanism computes pairwise \textit{attention weights} among all elements in the input sequence. In this figure, the attention mechanism learns similarity scores between (return-augmented) news items. Different cell colors correspond to different score values. Darker values indicate stronger interactions.}
\end{figure}

Multi-head self-attention extends the above mentioned idea by splitting the model dimensions $d$ into multiple heads. Let the number of heads be $H$. We define the $l$-th multi-head self-attention class $\mathcal{G}_{l}^{(MSA)}(n,d,H)$ as the set of map $g_l^{(msa)}: \mathbb{R}^{d \times n} \to \mathbb{R}^{d \times n}$ of the form
\begin{equation}
    g_l^{(msa)}(Z) = \sum_{h=1}^H W_{l,h}^{(O)} \, W_{l,h}^{(V)} Z \cdot \sigma_S \left(\frac{Z^{\top} {W_{l,h}^{(K)}}^{\top} W_{l,h}^{(Q)} Z}{\sqrt{d}} \right).
\end{equation}
For each head $h \in \{1, \ldots, H\}$, we denote by $W_{l, h}^{(V)}, W_{l, h}^{(K)}, W_{l, h}^{(Q)} \in \mathbb{R}^{\frac{d}{H} \times d}$, and ${W_{l,h}^{(O)}} \in \mathbb{R}^{d \times \frac{d}{H}}$ the value, key, query, and projection matrices, respectively. Both the query and key matrices are learned from data during training as weighted transformations of the input matrix $Z$. The term $Z^{\top} W_{l,h}^{(K)\top} W_{l,h}^{(Q)} Z$ captures pairwise interactions between all columns of $Z$, measuring how similar or relevant one observation (e.g., news item) is to another. Estimating \( W_{l,h}^{(K)\top} \) and \( W_{l,h}^{(Q)} \) helps us understand which inputs contribute to the risk of bank \( j \). For example, if there is a strong correlation between News~3 and News~4, the corresponding weights will reflect the importance of both components in determining the bank’s risk (Figure \ref{fig:words_score_matrix_news}). As another illustration, News~2 and News~3 exhibit weaker interactions relative to other pairs in Figure~\ref{fig:words_score_matrix_news}. After dividing by $\sqrt{d}$ and applying the softmax function to normalize the weights, the self-attention mechanism amplifies strong dependencies and filters out weak or noisy signals. 

In our risk analysis framework, the input matrix $Z_{j,t,s}$ jointly incorporates numerical return information and textual news content. By concatenating returns with news embeddings, we construct return-augmented news representations. The self-attention mechanism then learns the interactions among these return-augmented news inputs, allowing cross-bank spillovers and textual information to be modeled in a unified manner. In particular, the model identifies which latent information in bank returns and news articles the strongest influence on the risk exposure of firm $j$. Through this mechanism, the model uncovers dynamic transmission channels through which financial and informational shocks propagate across the system.

\subsubsection{Feed-Forward Network Sublayer}
Following the attention sublayer, a feed-forward network (FFN) sublayer applies the same operation to each column (i.e., token) of the input matrix. Its primary role is to provide additional non-linearity and transformation capacity. Recent work by \citet{2023_Sonkar_and_Baraniuk} shows that the FFN prevents the output from collapsing into a narrow subspace of the total embedding space, meaning they become highly correlated and lose their distinctiveness.

The $l$-th feed-forward network class $\mathcal{G}_{l}^{(FF)}(d, d_h)$ is defined as the set of maps $g_l^{(ff)}:\mathbb{R}^{d \times n} \to \mathbb{R}^{d \times n}$ of the form
\begin{equation}
    g_l^{(ff)}(Z) = W_l^{(2)} \, \sigma_{R} \left(W_l^{(1)} \cdot Z + b_l^{(1)} \right) + b_l^{(2)}.
\end{equation}
where $d_h$ denotes the hidden dimension of the feed-forward network. The parameters are $W_l^{(1)} \in \mathbb{R}^{d_h \times d}$, $b_l^{(1)} \in \mathbb{R}^{d_h}$ and $W_l^{(2)} \in \mathbb{R}^{d \times d_h}$, $b_l^{(2)} \in \mathbb{R}^{d}$. 

\subsection{Transformer-based Estimator} \label{sec:estimator}
\begin{figure}[!t]
    \centering
    \caption{Illustration of the Transformer-based CoVaR prediction model}
    \label{fig:transformer_details}
    \includegraphics[width=0.8\textwidth, keepaspectratio]{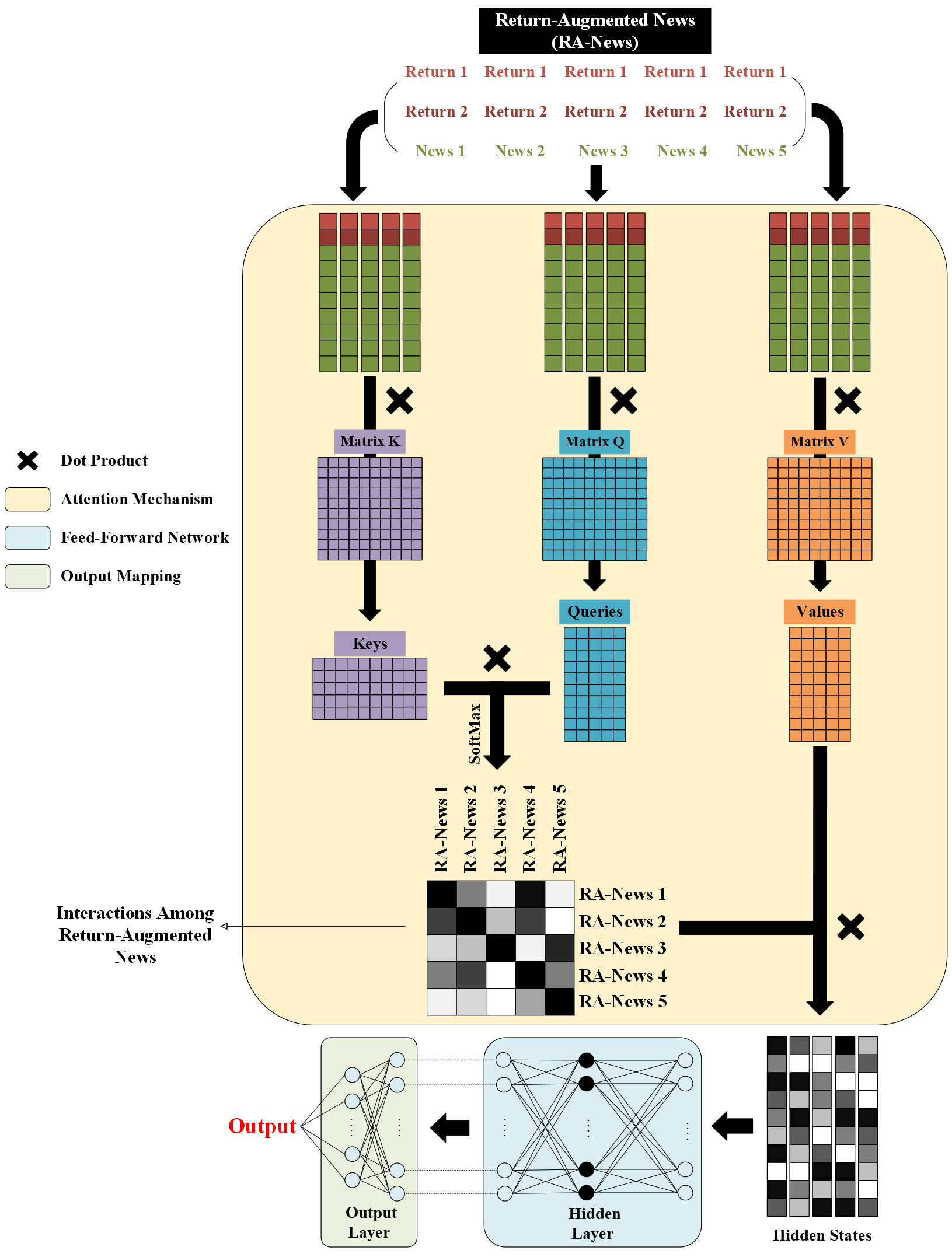}
    \caption*{\textit{Notes:} 
    The model takes numerical inputs (stock returns) and textual inputs (news articles). As an example, consider two bank returns and five news articles as model inputs. News are augmented with return information via concatenation, forming return-augmented news embeddings. The Transformer attention mechanism captures dependencies within the return-augmented news and learns latent representations (hidden states). These hidden states are subsequently passed through a feed-forward network (FFN) and a multilayer perceptron (MLP) to produce a scalar prediction.}
\end{figure}
We employ a Transformer-based estimator consisting of the previously introduced Transformer architecture followed by a multilayer perception (MLP). To bridge the Transformer output matrix $Z \in \mathbb{R}^{d \times n}$ and the MLP, we apply a linear projection. The linear projection maps the Transformer output onto a vector, which is then passed through the MLP to produce the scalar result. As shown in Figure~\ref{fig:transformer_details}, our estimator first lets the MSA learn the interactions among the return-augmented embeddings, which yields hidden representations that are then passed through the FFN and the MLP to produce the risk prediction.

Formally, we define the Transformer-block function class $\mathcal{G}^{(TF)}(n, d, H, d_h, L)$ as the set of maps $g^{(tf)}: \mathbb{R}^{d \times n} \to \mathbb{R}^{d \times n}$ formed by the composition of $L$ Transformer layers. Each layer consists of a multi-head self-attention sublayer and a feed-forward network sublayer:
\begin{equation}
    g^{(tf)} = g_L^{(ff)} \circ g_L^{(msa)} \circ \cdots \circ g_1^{(ff)} \circ g_1^{(msa)}.
\end{equation}
And we define the MLP function class $\mathcal{G}^{(MLP)}(D, \widetilde{d}_{m})$ as the set of maps $g^{(mlp)}: \mathbb{R}^d \to \mathbb{R}$ of the form
\begin{equation} \label{eq:mlp_class}
    g^{(mlp)}(x) = W_{D} \sigma_R( \dots \sigma_R(W_1 x + b_1) \dots) + b_D,
\end{equation}
where the dots denote repeated compositions across layers, $D$ is the MLP network depth and $\widetilde{d}_{m}:=\left(d, d_{m}^{(1)}, \ldots, d_{m}^{(D-1)}, 1\right)$ represents the MLP network width\footnote{Depth refers to the number of layers in the MLP, and width indicates the number of neurons within each of those layers.}. If all hidden layers have the same width, we denote it by $d_m$. For $i=1, \ldots, D$, we have $W_i$ as a $d_{m}^{(i+1)} \times d_{m}^{(i)}$ weight matrix and $b_i$ as a vector of dimension $d_{m}^{(i)}$. Thus, the full class of our Transformer-based neural network functions of a given architecture $(n, d, H, d_h, L, D, \widetilde{d}_{m})$ takes the following form:
\begin{equation} \label{eq:function_class}
    \mathcal{G} := \mathcal{G}(n, d, H, d_h, L, D, \widetilde{d}_{m}) := \bigl\{ g: \mathbb{R}^{d \times n} \to \mathbb{R} \mid g := g^{(mlp)} \circ (g^{(tf)} \cdot w) \bigl\}.
\end{equation}

For convenience, we additionally define for a given concatenation function $\Pi$ (see, e.g., equation (\ref{eq:concatenation})),
\begin{equation} \label{eq:function_class II}
    \mathcal{F} := \mathcal{F}(n, d, H, d_h, L, D, \widetilde{d}_{m}) := \bigl\{ g \circ \Pi: g \in \mathcal{G} \bigl\},
\end{equation}
so that for any $f \in \mathcal{F}(n, d, H, d_h, L, D, \widetilde{d}_{m})$, there exists $g \in \mathcal{G}(n, d, H, d_h, L, D, \widetilde{d}_{m})$ such that
\begin{equation} \label{eq:g_and_f}
    f(R_{-j,t},E_{j,t}) = g \circ \Pi (R_{-j,t},E_{j,t}) = g(Z_{j,t}).
\end{equation}
The Transformer-based quantile estimator is the solution to the following minimization problem for a given class $\mathcal{F}$:
\begin{equation}
     \widehat{f}_{j, \tau}= \argmin_{f \in \mathcal{F}} \sum_{t=1}^{T} \rho_{\tau} \Bigl( R_{j,t} - f\left(R_{-j,t},\, E_{j,t} \right)\Bigr),
\end{equation}
We apply the stochastic gradient descent (SGD) algorithm, see, e.g.,  \citep{2022_Garg_et_al, 2023_Bai_et_al, 2024_Li_et_al}, for solving the optimization problem in the later applications.

In summary, we construct a Transformer-based quantile regression model for CoVaR that combines peer returns and financial news within a unified architecture (see Figure \ref{fig:transformer_details}). By explicitly defining the model architecture and its associated function class, we establish a foundation for analyzing both the theoretical properties and empirical performance of the proposed estimator. {In the following section, we introduce the consistency property of our CoVaR estimator.}



\section{Theoretical Results} \label{sec:theory}


\textbf{Notation.} We consider the following notation. Let $\mathbb{R}^{+} := \{x \in \mathbb{R} : x > 0\}$ denote the set of positive real numbers. For vectors, $\|\cdot\|_{p}$ denotes the $\ell_p$ norm, and $\|\cdot\|_\infty := \max_{1 \le i \le d} |\cdot|$ is the sup-norm. For a matrix $A \in \mathbb{R}^{a \times b}$, $\|\cdot\|_{2}$ denotes the spectral norm. Let $\|\cdot\|_{p, q}$ denote the ($p, q$) matrix norm defined by first taking the $p$-norm of each row of the matrix, and then taking the $q$-norm of the resulting vector. We also define $\|A\|_0 = \left|\left((i, j): A_{i, j} \neq 0\right)\right|$ and $\|A\|_{\infty} := \max _{i=1, \ldots, p, j=1, \ldots, q}\left|A_{i, j}\right|$. For two sequences, $\{a_n\}_n$ and $\{b_n\}_n$, we write $a_n \lesssim b_n$ if there exists a constant $C$ such that $a_n \leq C b_n$ for all $n$. Moreover, for deterministic sequences $\{a_n\}_n$ and $\{b_n\}_n$ with $b_n>0$, we write $a_n = O(b_n)$ if there exists a constant $C>0$ such that $|a_n| \le C b_n$ for all sufficiently large $n$. For random variables $\{X_n\}_n$ and deterministic $\{b_n\}_n$, 
we write $X_n = O_p(b_n)$ if $X_n/b_n$ is bounded in probability, and write $X_n \lesssim_P b_n$ if $X_n = O_p(b_n)$.


Define the functional class $V_{(\mathcal{F}, \|\cdot\|_{\infty})}(\delta)$ as the logarithm of the $\delta$-covering number of a function class $\mathcal{F}$ under the sup-norm, which measures the complexity of $\mathcal{F}$ by counting the minimum number of balls of radius $\delta$ needed to cover it. Specifically, it is defined as $V_{(\mathcal{F}, \|\cdot\|_{\infty})}(\delta) := \log N_{\infty}(\delta,\mathcal{F})$, where $N$ is the smallest number of functions $f_1, \dots, f_N$ such that every $f \in \mathcal{F}$ is within $\delta$ (in sup-norm) of some $f_i$. 


Our main theorem is in the fashion of \cite{2019_Suzuki}, \cite{2020_Schmidt-Hieber} and \cite{2022_Padilla_et_al} but we adapt these results to the CoVaR framework for Transformers. We first outline the required assumptions.

\begin{mainassumption}[Cumulative Distribution Function Boundedness] \label{assp:cdf} 
    There exists a constant $\kappa>0$ such that for any $\delta_t>0$ satisfying $\max_t |\delta_t|_{\infty} \leq \kappa$ we have that, a.s.,  for all $j = 1,\dots,J$,
    \begin{equation}
          \left| F_{R_{j,t} \mid x_{j,t}} (f_\tau^*(x_{j,t})+\delta_t) - F_{R_{j,t} \mid x_{j,t}} (f_\tau^*(x_{j,t})) \right| \, \geq \, \underline{p} \cdot \left|\delta_t\right|
    \end{equation}
    for $t=1, \ldots, T$ and for some constant $\underline{p}>0$ with $F_{R_{j,t} \mid x_{j,t}}$ being cumulative distribution function (CDF) of $R_{j,t}$ conditioning on $x_{j,t}$. We also require that  for all $j = 1,\dots,J$,
    \begin{equation}
 \sup_{z \in \mathbb{R}} p_{R_{j,t} \mid x_{j,t}}(z) \leq \bar{p}, \quad \quad \text{a.s.}
    \end{equation}
    for some constant $\bar{p}>0$, where $p_{R_{j,t} \mid x_{j,t}}$ is the probability density function of $R_{j,t}$ conditioning on $x_{j,t}$.
\end{mainassumption}

The above CDF Boundedness assumption ensures that the conditional distribution of returns is sufficiently smooth and strictly increasing near the quantile of interest \citep{2022_Padilla_et_al}. This allows well-defined CoVaR values and ensures that risk estimates respond meaningfully to fluctuations in the underlying data. 

\begin{mainassumption}[Lipschitz] \label{assp:Liptschiz}
Denote $\mathcal{E}_j$ the support of $E_{j,t}$, and for all $j = 1,\dots,J$, the target function in (\ref{eq:f_target}) satisfies that for all $r_1, r_2$ within the support of $R_{-j,t}$,
    \begin{equation}
 \sup_{e\in \mathcal{E}_j}  \big|f_{j,\tau}^*(r_1,e) - f_{j,\tau}^*(r_2,e)\big| \lesssim  \big\|r_1 - r_2\big\|_1,
    \end{equation}
    and for any $f \in \mathcal{F}$ for the given class $\mathcal{F}$ used for fitting, we have that
    \begin{equation}
 \sup_{e\in \mathcal{E}_j}   \big|f(r_1,e) - {f}(r_2,e)\big| \leq  L_{\mathcal{F}} \big\|r_{1} - r_{2}\big\|_1,
    \end{equation}
    with $L_{\mathcal{F}}$ depending upon the class $\mathcal{F}$.   
\end{mainassumption}


The second part in Assumption \ref{assp:Liptschiz} is trivial satisfied if we impose norm bounds over parameters used in $\mathcal{F}$, as the class $\mathcal{F}$ consists of functions $f$ formed by composing layers that are constructed via linear maps and Lipschitz nonlinear activations/functions. By imposing norm bounds on the parameters of each layer, we ensure each layer is Lipschitz-continuous. Consequently, the overall Lipschitz constant $L_{\mathcal{F}}$ for any $f \in \mathcal{F}$ is bounded by the product of the layer-wise constants. See Appendix~\ref{sec:corollary} for an explicit example where we consider a one-layer Transformer followed by an MLP with spectral norm bound $B$.

\begin{mainassumption}[Consistency of VaR]
    \label{assp:var}
    The estimated VaR is consistent, i.e., for all banks $j$, 
    \begin{equation}
  \left\| \widehat{\operatorname{VaR}}_{-j,t}^{\tau} - \operatorname{VaR}_{-j,t}^{\tau} \right\|_1 = O_p(b_{T}),
    \end{equation}
    where $b_T$ is a deterministic sequence such that $b_T \to 0$ as $T \to \infty$.
\end{mainassumption}


Under the Lipschitz Continuity and VaR Consistency assumptions, the sampling errors from VaR estimation vanish as the sample size increases, provided the overall Lipschitz constant $L_{\mathcal{F}}$ is of sufficiently small magnitude. This guarantees that the "plug-in" step does not introduce persistent bias, allowing the CoVaR estimator to remain stable and accurate \citep{2019_Patton_et_al, 2025_Dimitriadis_and_Hoga}.

\begin{maintheorem}[Out-of-Sample CoVaR Estimator Consistency] 
    \label{thm:covar_convergence}
    Suppose that Assumptions \ref{assp:cdf}--\ref{assp:var} and conditions in Lemma \ref{lem:theo1} hold, then the CoVaR prediction satisfies:
    \begin{equation}
        \max_{1\leq j\leq J} \left[ \left|\operatorname{CoVaR}_{j,t'}^{\tau} - \widehat{\operatorname{CoVaR}}_{j,t'}^{\tau}\right|\right] \lesssim_{P} \max_{1\leq j\leq J}\widetilde{R}_{j,\delta_T,T} + L_{\mathcal{F}} \, b_T,
    \end{equation}
    where $\widetilde{R}_{j,\delta_T,T} =\; \inf_{f  \in \mathcal{F}} \Bigl\|f  - f_{j,\tau}^*\Bigl\|_{\infty}^2 + \Biggl( \delta_T +   \sqrt{\frac{V_{\left(\mathcal{F},\|\cdot\|_{\infty}\right)}(\delta_T)}{T}} \Biggl)$ with $ \delta_T>0,  \delta_T \rightarrow 0, \delta_T^2 T \rightarrow \infty$ as $T\to \infty$, and $\widehat{\operatorname{CoVaR}}_{j,t'}^{\tau} = \widehat{f}_{j,\tau}(\widehat{\operatorname{VaR}}_{-j,t'}^{\tau}, E_{j,t'})$.
\end{maintheorem}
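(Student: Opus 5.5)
We split the CoVaR error into a term from estimating the quantile function and a term from plugging in estimated VaRs. For fixed $j$, since $\operatorname{CoVaR}_{j,t'}^{\tau} = f_{j,\tau}^*(\operatorname{VaR}_{-j,t'}^{\tau}, E_{j,t'})$ and $\widehat{\operatorname{CoVaR}}_{j,t'}^{\tau} = \widehat f_{j,\tau}(\widehat{\operatorname{VaR}}_{-j,t'}^{\tau}, E_{j,t'})$, the triangle inequality gives
\begin{equation*}
\bigl|\operatorname{CoVaR}_{j,t'}^{\tau} - \widehat{\operatorname{CoVaR}}_{j,t'}^{\tau}\bigr| \le \bigl|f_{j,\tau}^*(\operatorname{VaR}_{-j,t'}^{\tau}, E_{j,t'}) - \widehat f_{j,\tau}(\operatorname{VaR}_{-j,t'}^{\tau}, E_{j,t'})\bigr| + \bigl|\widehat f_{j,\tau}(\operatorname{VaR}_{-j,t'}^{\tau}, E_{j,t'}) - \widehat f_{j,\tau}(\widehat{\operatorname{VaR}}_{-j,t'}^{\tau}, E_{j,t'})\bigr|.
\end{equation*}

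\textbf{Plug-in term.} Because $\widehat f_{j,\tau}\in\mathcal{F}$, the second part of Assumption \ref{assp:Liptschiz} applies uniformly over $e\in\mathcal{E}_j$. It bounds the second term by $L_{\mathcal{F}}\|\widehat{\operatorname{VaR}}_{-j,t'}^{\tau}-\operatorname{VaR}_{-j,t'}^{\tau}\|_1$. By Assumption \ref{assp:var}, this is $O_p(L_{\mathcal{F}} b_T)$.

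\textbf{Estimation term.} For the first term I would invoke Lemma \ref{lem:theo1}, whose conditions are assumed. I take it to be the generalization/risk bound for the Transformer quantile estimator in the style of \cite{2022_Padilla_et_al}. Its proof would compare the empirical check-loss risk of $\widehat f_{j,\tau}$ with that of the best approximant $f_0=\arg\inf_{f\in\mathcal{F}}\|f-f^*_{j,\tau}\|_\infty$. It would control the empirical process over a $\delta_T$-cover of $\mathcal{F}$, which contributes $\delta_T+\sqrt{V_{(\mathcal{F},\|\cdot\|_\infty)}(\delta_T)/T}$. Assumption \ref{assp:cdf} links excess quantile risk to distance from both sides: the density upper bound $\bar p$ turns the approximation error into $\|f_0-f^*\|_\infty^2$, and the lower bound $\underline p$ gives curvature for small deviations, with a linear regime for large deviations. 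The lemma should therefore deliver a bound on the excess check-loss risk, or a distance such as $\mathbb{E}\,\Delta^2(\widehat f - f^*)$, of order $\widetilde R_{j,\delta_T,T}$.

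\textbf{Main obstacle.} The difficulty is converting this averaged (population or empirical $\ell_2$-type) bound into a bound at the single out-of-sample point $(\operatorname{VaR}_{-j,t'}^{\tau},E_{j,t'})$. I would first check whether Lemma \ref{lem:theo1} is stated for a new, independent draw $x_{j,t'}$. In that case Markov's inequality gives $|\widehat f(x_{j,t'})-f^*(x_{j,t'})|\lesssim_P \widetilde R_{j,\delta_T,T}$, using the $\ell_1$-type (Padilla) distance that is linear for large errors, so the rate is $\widetilde R$ rather than its square root. One must then move from the random point $x_{j,t'}$, with $R_{-j,t'}$ as first argument, to the distress point with $\operatorname{VaR}_{-j,t'}^{\tau}$. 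This shift costs at most $(L_{\mathcal{F}}+C)\|R_{-j,t'}-\operatorname{VaR}^\tau_{-j,t'}\|_1$ by Assumption \ref{assp:Liptschiz}, which is not small. So instead I would rely on the lemma holding uniformly (sup-norm over the support, e.g.\ via the sup-norm covering argument), or conditionally on the regressor value. The fallback is to interpret the result pointwise, conditional on the evaluation point lying in the support.

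\textbf{Conclusion.} Finally, since $J$ is fixed, a maximum over finitely many $O_p$ terms is $O_p$ of the maximum, which yields the stated bound.
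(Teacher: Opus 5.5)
Your decomposition, the plug-in bound via the $L_{\mathcal F}$-Lipschitz property of $\widehat f_{j,\tau}\in\mathcal F$ and Assumption~\ref{assp:var}, and the use of Lemma~\ref{lem:theo1} plus Markov for the remaining term all follow the paper. Your two-term split is in fact the cleaner form of the paper's three-term decomposition, whose extra term through $f^*_{j,\tau}$ is not needed.

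The gap is in the step you call the main obstacle: the fixes you propose would not work. The sup-norm covering in Lemma~\ref{lemma:19_padilla} only controls the empirical process and by itself gives no bound on $\|\widehat f_{j,\tau}-f^*_{j,\tau}\|_\infty$. The excess check-loss risk is an average over the design, so nothing uniform, or conditional on a fixed regressor value, can be extracted from it. The missing idea is a change of measure, which the paper places among the conditions of Lemma~\ref{lem:theo1}. Assumption~\ref{assum:density_VaR} requires $\mathbb{E}[(f-f^*_{j,\tau})(\operatorname{VaR}^{\tau}_{-j,t},E_{j,t})]^2\lesssim\mathbb{E}[(f-f^*_{j,\tau})(R_{-j,t},E_{j,t})]^2$ for all $f\in\mathcal F$. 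This holds, for example, when the law of $(\operatorname{VaR}^{\tau}_{-j,t},E_{j,t})$ has density bounded by a constant times that of $(R_{-j,t},E_{j,t})$. Hence Lemma~\ref{lem:theo1} already bounds the mean squared error at the distress point $(\operatorname{VaR}^{\tau}_{-j,t'},E_{j,t'})$ for an independent test draw, on a training event of probability at least $1-e^{-\delta_T^2T}\to1$. No shift from $R_{-j,t'}$ to $\operatorname{VaR}^{\tau}_{-j,t'}$ is ever needed.

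Your rate argument is also incorrect. Write $B$ for the error at the distress point. Lemma~\ref{lem:theo1} gives $\mathbb{E}B^2\lesssim\widetilde R_{j,\delta_T,T}$, so Chebyshev (conditionally on the training sample) gives $P(|B|>M)\lesssim\widetilde R_{j,\delta_T,T}/M^2$, i.e.\ only $B=O_p(\widetilde R_{j,\delta_T,T}^{1/2})$. Padilla's distance does not rescue this: $\min\{|v|,v^2\}$ is linear only for $|v|>1$. Here the error shrinks to zero, where the distance is quadratic, and Markov again gives the square root. Nothing better can follow from a second-moment bound, since a constant error equal to $\widetilde R_{j,\delta_T,T}^{1/2}$ has second moment $\widetilde R_{j,\delta_T,T}$.

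The paper's own proof contains the same slip: from the tail bound $a_T^2/M^2$ it concludes $B_t=O_p(a_T^2)$, whereas that bound yields only $O_p(a_T)$. So neither argument delivers the displayed rate. What this route actually proves is $\max_j|\operatorname{CoVaR}^{\tau}_{j,t'}-\widehat{\operatorname{CoVaR}}^{\tau}_{j,t'}|\lesssim_P\max_j\widetilde R_{j,\delta_T,T}^{1/2}+L_{\mathcal F}b_T$, which still gives consistency.
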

\begin{proof}
    See Appendix Section \ref{sec:covar_consistency}.
\end{proof}

Theorem \ref{thm:covar_convergence} connects the complexity of the estimator to an upper bound on the generalization error of our estimator. In this sense, it is comparable to Theorem 2.6 in \cite{2020_Hayakawa_and_Suzuki}, but is established here under a quantile-loss optimization framework. The result is useful for deriving convergence rates once we impose structure on both the target function and the function class $\mathcal{F}$ used for fitting. Corollary \ref{cor:convergence_rate} further shows that, under explicit sparsity and smoothness restrictions on the target function, one can obtain a rate whose dependence on the input dimension $d$ and input length $n$ is at most logarithmic.

Although we acknowledge that this rate stated in Corollary \ref{cor:convergence_rate} can be further sharpened, it already indicates that a plausible approximation remains feasible even with relatively high-dimensional input. We defer the full proof to Appendices \ref{sec:proof_cover} and \ref{sec:bias}, but first briefly summarize the main mechanisms that make such a rate attainable. 

In Corollary \ref{cor:convergence_rate}, we follow two core setups similar to the ones employed in \cite{2022_Edelman_et_al}. First, the complexity of the function class $\mathcal{F}$ is governed by the combined capacity of the Transformer blocks and the MLP output layer, and we use norm-based Transformer blocks with sparse MLP, where the $\ell_{2,1}$ norms of the Transformer weights are explicitly bounded. This enables the model complexity to scale only logarithmically in \(d\) and \(n\), both of which may be large in practice (e.g., \(d = 71\) and \(n = 97\) in our application of Section~\ref{sec:application}), as captured by the bound
\begin{equation}\label{eq:variance-cover-simplified}
    V_{\left(\mathcal{F},\|\cdot\|_{\infty}\right)}(\delta)  = O \left( d_{m} \log(d_m) \cdot \log \Big( \delta^{-1} D \, d \, d_{m}^D\Big) + \frac{\log(dn)}{\delta^2} \right),
\end{equation}  
where $d_m, D$ are the network width and depth of MLP that possibly grow with $T$, e.g., in Corollary \ref{cor:convergence_rate} we choose $D \lesssim \log T$. Second, we assume that the target function has a low-dimensional sparse structure: it depends only on an unknown subset of $s$ relevant input coordinates with $s$ being a finite number, and is $\beta$-H\"older smooth. Together, these assumptions imply that self-attention can identify and aggregate the relevant information from the high-dimensional input, while the MLP only needs to learn a smooth mapping on an $s$-dimensional signal.  Hence, the following bias rate is attainable under Assumptions~\ref{assp:lemma_target_function} and \ref{assp:lemma_recoverability} for a given class $\mathcal{F}$ satisfying the aforementioned complexity bound:
    \begin{equation}
        \inf_{f \in \mathcal{F}} \Bigl\|f  - f_{j,\tau}^*\Bigl\|_{\infty}^2 = O\left( d_{m}^{-\beta / s} \right).
    \end{equation}

Based on the bias-variance decomposition provided in Theorem \ref{thm:covar_convergence} and results outlined above, we can achieve the following convergence rate.
\begin{corollary}[Convergence Rate] \label{cor:convergence_rate} 
    Suppose that the assumptions of Theorem \ref{thm:covar_convergence} and Lemma \ref{lemma:edelman_B2_continuous} hold with $b_T = O(T^{-\alpha})$ and $\alpha\ge \frac{2\beta}{s+4\beta}$, then out-of-sample CoVaR predictor achieves the convergence rate: 
    \begin{equation} 
        \max_{1\leq j\leq J} \left| \operatorname{CoVaR}_{j,t'}^{\tau} - \widehat{\operatorname{CoVaR}}_{j,t'}^{\tau} \right| = O_p\left( C_{n,d,T}(T^{-\frac{2\beta}{s+4\beta}} + T^{-1/4}) \right),
    \end{equation}
     where $ C_{n,d,T}$ is a rate that grows logarithmically with $n$, $d$, and $T$.
\end{corollary}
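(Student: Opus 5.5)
The corollary follows by plugging the bias and covering-number bounds into Theorem~\ref{thm:covar_convergence} and balancing terms. The theorem gives
\[
\max_j|\operatorname{CoVaR}_{j,t'}^{\tau}-\widehat{\operatorname{CoVaR}}_{j,t'}^{\tau}|\lesssim_P \max_j\Bigl[\inf_{f\in\mathcal{F}}\|f-f^*_{j,\tau}\|_\infty^2+\delta_T+\sqrt{V_{(\mathcal{F},\|\cdot\|_\infty)}(\delta_T)/T}\Bigr]+L_{\mathcal{F}}b_T .
\]
The plan is to fix the architecture as in Lemma~\ref{lemma:edelman_B2_continuous}: a norm-bounded Transformer block with $\ell_{2,1}$-bounded weights followed by a sparse MLP of width $d_m$ and depth $D\lesssim\log T$. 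I would then bound each of the four terms separately. Since $J$ is fixed, the maximum over $j$ costs only a constant, and all bounds below hold uniformly in $j$.

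\textbf{Bias term.} Under Assumptions~\ref{assp:lemma_target_function} and \ref{assp:lemma_recoverability}, Lemma~\ref{lemma:edelman_B2_continuous} gives $\inf_f\|f-f^*\|_\infty^2=O(d_m^{-\beta/s})$.

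\textbf{Variance term.} The covering bound \eqref{eq:variance-cover-simplified} gives
\[
\sqrt{V(\delta_T)/T}\lesssim \sqrt{d_m\log d_m\,\log(\delta_T^{-1}Dd\,d_m^D)/T}+\sqrt{\log(dn)}/(\delta_T\sqrt T).
\]
With $D\asymp\log T$ and $d_m$, $\delta_T^{-1}$ polynomial in $T$, the factor $\log(\delta_T^{-1}Dd\,d_m^D)$ is $O(\log^2 T+\log d)$. All these logarithmic factors, together with $\sqrt{\log(dn)}$, go into $C_{n,d,T}$.

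\textbf{Balancing.} The terms to trade off are $\delta_T$ against $1/(\delta_T\sqrt T)$. Choosing $\delta_T=T^{-1/4}$ makes both of order $T^{-1/4}$, up to logs. This choice satisfies $\delta_T\to0$ and $\delta_T^2T=T^{1/2}\to\infty$, as the theorem requires. Next, balance $d_m^{-\beta/s}$ against $\sqrt{d_m/T}$ (up to logs) by setting $d_m\asymp T^{s/(s+2\beta)}$, which gives order $T^{-\beta/(s+2\beta)}$. The stated exponent $2\beta/(s+4\beta)$ corresponds instead to the bias scaling $d_m^{-2\beta/s}$ in the variance-balancing. So I would re-derive from Lemma~\ref{lemma:edelman_B2_continuous} whether the sup-norm approximation error is $d_m^{-\beta/s}$ before squaring; balancing the squared bias $d_m^{-2\beta/s}$ with $\sqrt{d_m/T}$ yields $d_m\asymp T^{s/(s+4\beta)}$ and rate $T^{-2\beta/(s+4\beta)}$. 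I would adopt that reading, so the bias and MLP-variance terms are both $O(\text{polylog}\cdot T^{-2\beta/(s+4\beta)})$.

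\textbf{Plug-in term.} I need $L_{\mathcal{F}}b_T$ to be no larger than the other terms. By Assumption~\ref{assp:Liptschiz} and Appendix~\ref{sec:corollary}, $L_{\mathcal{F}}$ is a product of layer-wise spectral-norm bounds. Under the norm constraints of the lemma's class this should be at most polylogarithmic in $T$; it is absorbed into $C_{n,d,T}$ provided the MLP weight norms are kept bounded as $D$ grows. Since $b_T=O(T^{-\alpha})$ with $\alpha\ge 2\beta/(s+4\beta)$, the plug-in term is dominated. Summing the four pieces gives $O_p\bigl(C_{n,d,T}(T^{-2\beta/(s+4\beta)}+T^{-1/4})\bigr)$.

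\textbf{Main obstacle.} The hardest step is checking that the approximating network from Lemma~\ref{lemma:edelman_B2_continuous} actually lies in the norm-constrained class. Specifically, its Transformer $\ell_{2,1}$ bounds and MLP sparsity must be compatible with the covering bound \eqref{eq:variance-cover-simplified}, while $L_{\mathcal{F}}$ and the norm radii grow at most logarithmically. If the lemma's construction needs weights growing polynomially in $d_m$, the first thing I would try is to rescale the softmax attention so that it selects the $s$ relevant coordinates with bounded norm. Any leftover polynomial factor would then have to be tracked explicitly, and it would worsen the exponent.
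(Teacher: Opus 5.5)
\textbf{Gap: the plug-in term $L_{\mathcal{F}} b_T$.} Your bias and variance steps match the paper's proof. The paper also squares the sup-norm error of Lemma~\ref{lemma:edelman_B2_continuous} to get $d_m^{-2\beta/s}$, so your ``adopted reading'' is the right one. It takes $d_m \asymp (T/\log^3(dT))^{s/(s+4\beta)}$ and $\delta_T \asymp T^{-1/4}(\log(dn))^{1/4}$; your $\delta_T = T^{-1/4}$ only moves a $\sqrt{\log(dn)}$ into $C_{n,d,T}$.

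The problem is your claim that $L_{\mathcal{F}}$ is polylogarithmic in $T$ ``provided the MLP weight norms are kept bounded as $D$ grows.'' That proviso does not give it. $L_{\mathcal{F}}$ is bounded by a product of $D$ per-layer factors, and the approximation step of Lemma~\ref{lemma:edelman_B2_continuous} forces $D \asymp \log d_m \asymp \log T$. The paper's own estimate is $L_{\mathcal{F}} \le B^D s^3\log(4n/\gamma)/(1-2\Delta) = O(B^D\log n)$. With $D = c\log d_m$ you get $B^D = d_m^{c\log B}$, which is a power of $T$, not a logarithm. The only exception is when every layer is nonexpansive, and nothing in the Schmidt-Hieber-type construction guarantees that. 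So in your argument the plug-in term is really of order $T^{c'-\alpha}$ (times logarithms) for some $c'>0$. The hypothesis $\alpha \ge 2\beta/(s+4\beta)$ alone does not make it negligible.

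\textbf{Missing idea: pay for $B^D$ with the VaR rate.} This polynomial factor multiplies only $b_T$, not the bias or entropy terms. It can therefore be paid for with slack in the VaR rate rather than by worsening the exponent, contrary to the worry in your last paragraph. The paper tunes the depth constant as $D \asymp \frac{\log d_m}{\log(2\vee B)}\cdot\frac{\alpha(s+4\beta)-2\beta}{s}$. This gives $B^D \lesssim d_m^{(\alpha(s+4\beta)-2\beta)/s} \asymp (T/\log^3(dT))^{\alpha-2\beta/(s+4\beta)}$, hence $B^D b_T \lesssim T^{-2\beta/(s+4\beta)}$. Up to logarithmic factors, $L_{\mathcal{F}} b_T \lesssim T^{-2\beta/(s+4\beta)}\log n$. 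This is exactly where the hypothesis on $\alpha$ is used.

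The trade-off needs genuine slack. At $\alpha = 2\beta/(s+4\beta)$ the prescribed depth constant vanishes. That clashes with the approximation requirement $D \gtrsim \log d_m$ unless $B \le 1$, so the paper's argument is loose at that boundary.
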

\begin{proof}
    See Appendix \ref{sec:corollary}.
\end{proof}

\section{Empirical Study} \label{sec:application}


\subsection{Data}

We study eight (see Table~\ref{tab:gsib_banks}) U.S. global systemically important banks (G-SIBs), as identified by the Financial Stability Board (FSB) with the Basel Committee on Banking Supervision (BCBS) and national regulators. These banks are considered as ``too big to fail" because of their role in the global financial system. Our sample covers daily log returns from October 2006 to November 2013, giving $T = 1776$ observations. The period includes major stress events or market drawdowns such as the 2008 financial crisis, the Goldman Sachs SEC lawsuit in 2010, the Greek debt crisis, and the U.S. debt ceiling disputes of 2011 and 2012. The sample period ensures us the diversity in news topics to study a set of systemic risks.
\begin{table}[!t]
    \centering
    \caption{List of U.S. Global Systemically Important Banks}
    \label{tab:gsib_banks}
    \begin{tabular}{l r}
        \toprule
        \textbf{Ticker} & \textbf{Bank Name} \\
        \midrule
        GS   & Goldman Sachs \\
        WFC  & Wells Fargo \\
        JPM  & JPMorgan Chase \\
        BAC  & Bank of America \\
        C    & Citigroup \\
        BK   & Bank of New York Mellon \\
        MS   & Morgan Stanley \\
        STT  & State Street \\
        \bottomrule
    \end{tabular}
\end{table}

We use a set of common macro-state variables as risk factors for VaR estimation in \eqref{eq:var} \citep{2016_Adrian_and_Brunnermeier, 2016_Haerdle_et_al}: the Implied Volatility Index (VIX), daily returns on the S\&P 500, the spread between Moody’s Baa Corporate Bond Yield and the 10-Year Treasury yield, and the yield spread between the 10-Year and 3-Month Treasury yields\footnote{VIX, S\&P 500 returns, and stock prices are from Yahoo Finance. Credit spreads and yield spreads are from the Federal Reserve Bank of St. Louis (FRED).}. For CoVaR estimation, we use both bank stock returns with the financial news articles released from \textit{Reuters}. The news dataset contains 105,359 articles, originally collected by \citet{2014_Ding_et_al}. After removing duplicates, we retain 97,264 articles.

On average, 38 articles are published per day, with daily counts ranging from 1 to 97 and a median of 39. Figure~\ref{fig:daily_frequency} shows that article volume rises sharply during market stress and drawdowns, such as the 2008 crisis and the 2011--2012 U.S. debt ceiling standoff. These spikes coincide with declines in the S\&P 500, suggesting that media coverage intensifies in volatile periods. The average article length is 449 words, with a median of 415.
\begin{figure}[!t]
    \centering
    \caption{Daily news article frequency and stock market index}
    \label{fig:daily_frequency}
    \includegraphics[width=0.95\textwidth]{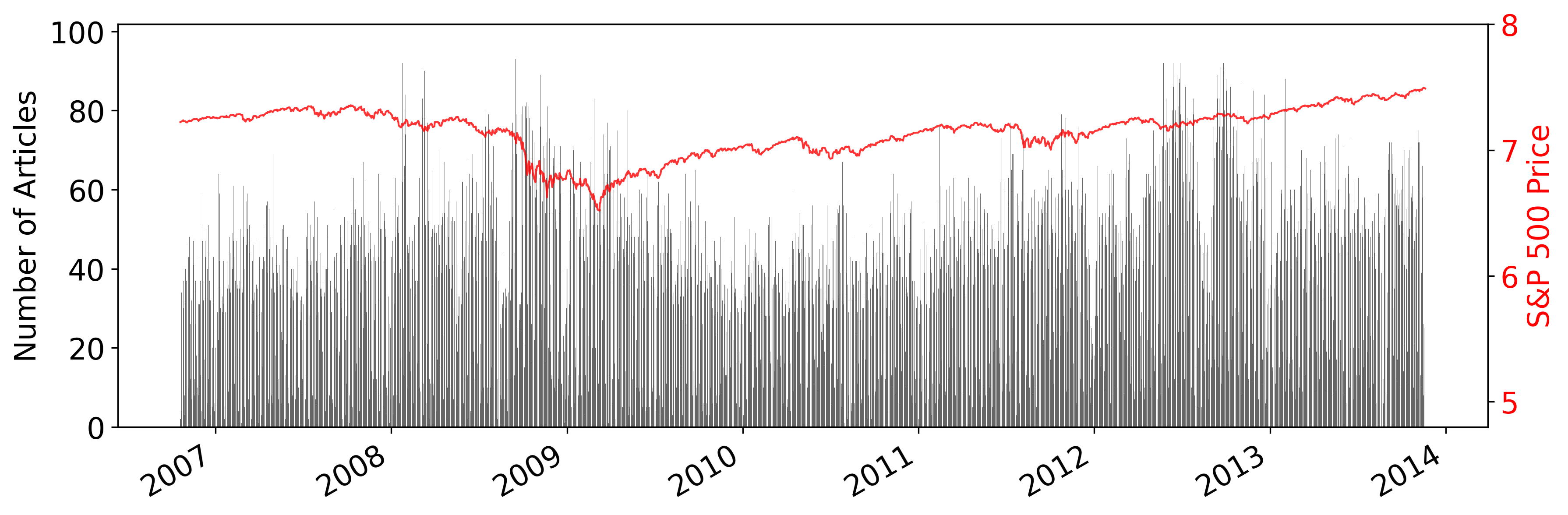}
    \caption*{\textit{Notes:} The left y-axis shows the daily frequency of \textit{Reuters} financial news articles.
    The right y-axis shows the logarithm of the S\&P 500 price index.}
\end{figure}

\subsection{Data Preprocessing}

{Preparing the input for our model proceeds in three steps. First, we employ the pre-trained embedding model \texttt{gemini-embedding-001} as the embedding mapping $\pi_{e}$ to convert $n$ ($n=97$) textual news used at time $t$ into numerical news embeddings $E_{j,t} \in \mathbb{R}^{d_{e} \times n}$ (we choose $d_{e}=64$). {For days on which $n$ is smaller than 97, we apply padding to the news embeddings (see the explanation of padding in the footnote in Section~\ref{sec:transformer}).} We consider all news articles observed up to time $t$ within a five-day look-back window. This choice is motivated by \cite{2013_Garcia}, who documents that market reactions to news typically reverse within four trading days, suggesting that such an effect is short-lived and largely driven by non-fundamental factors.  Second, because the embeddings do not inherently have temporal order, we inject positional information to preserve the chronological information of the news sequence. Following \cite{2017_Vaswani_et_al}, we add fixed sinusoidal positional encodings (PE)\footnote{The positional encoding for position $k$ and dimension $i$ is defined as:
$$\begin{aligned}
    PE(k, 2i) & =\sin \left(\frac{k}{10000^{2 i / d_{e}}}\right), \\
    PE(k, 2i+1) & =\cos \left(\frac{k}{10000^{2 i / d_{e}}}\right),
\end{aligned}$$
where $k$ is the temporal position within the $5$-day look-back window. These encodings are scaled by the norm of the input embeddings.} to the sequence $E_{j,t}$. This preserves the temporal sequence information of news articles. Third, we augment the news embeddings with numerical return information by concatenating the returns of the remaining $J-1$ ($J = 8$) banks with $E_{j,t}$, yielding the return-augmented news embeddings $Z_{j,t} = \Pi(R_{-j,t}, E_{j,t})$. We then take $Z_{j,t}$ as our model input, which consists of 97 input tokens, each with 71 (i.e., 64 plus 7) dimensions.}

As mentioned above, we use \texttt{gemini-embedding-001}, a pre-trained embedding model released in June 2025 by \textit{GoogleAI}. This model generates dense vectors of 3072 dimensions. It takes a maximum of 2048 tokens per input, which corresponds to about 1228--1638 English words\footnote{According to GoogleAI documentation, 100 tokens are about 60–80 English words.}. The majority of articles in our dataset are shorter than this limit, with only about 1\% exceeding it (see Figure~\ref{fig:article_length}). The longest article has 9471 words.
\begin{figure}[!t]
    \centering
    \caption{Distribution of article lengths}
    \label{fig:article_length}
    \includegraphics[width=0.6\textwidth]{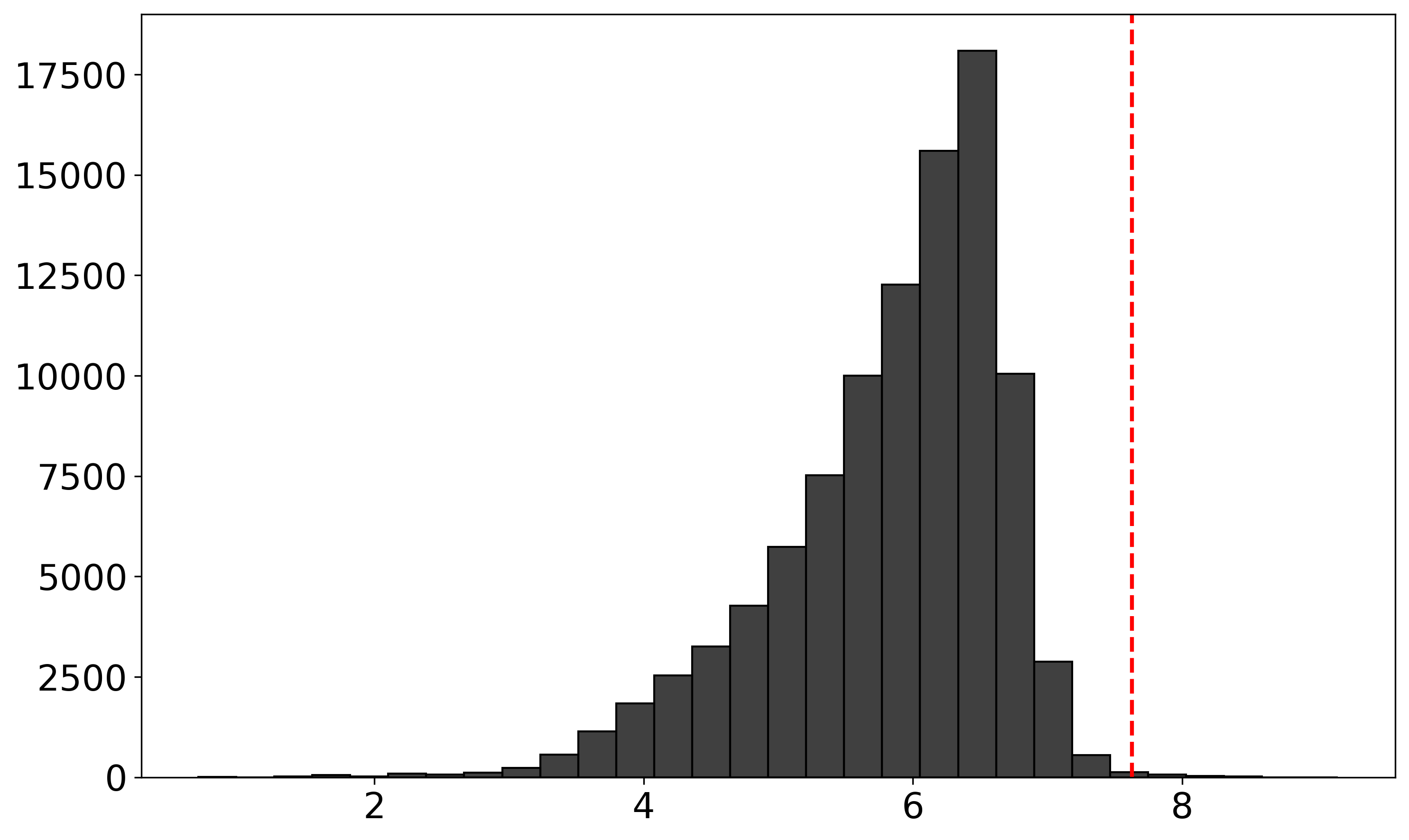}
    \caption*{\textit{Notes:} Article length is measured in number of words. The x-axis shows the logarithm of article length, and the y-axis shows frequency. The dashed red line indicates the 1228 word cutoff, corresponding to 2048 tokens.}
\end{figure}

Pre-trained embeddings are known to be effective general-purpose semantic representations \citep{2025_Zhao_et_al} and have shown good performance in various natural language processing tasks such as semantic search, text classification, and clustering \citep{2025_Lee_et_al}. An important property of \texttt{gemini-embedding-001} is that the embedding dimensions are ordered by importance, with the most informative content appearing in the first dimensions. This property enables dimensionality reduction by keeping only the top leading dimensions of each vector embedding \citep{2024_Kusupati_et_al, 2024_Kim_et_al}. Following this idea, we keep the top 64 dimensions of each embedding, since using all dimensions would make the model too large and require much more training data. We can then represent a collection of $n$ articles by a semantic embedding matrix with size $\mathbb{R}^{d_{e} \times n}$, where each row is a $d_{e}=64$ dimensional embedding. This matrix is then used as the input in the downstream risk prediction task. 

To validate that the embeddings capture meaningful semantic patterns in the text data, we study their spatial structure in the embedding space. Since the embedding model is designed to map semantically similar texts to nearby points in the vector space, we expect that texts with similar content will produce embeddings with smaller distances (e.g. Euclidean, cosine distances etc.). For example, the sentences ``The market is going to crash" and ``The market is expected to plunge" should yield embedding vectors that are close to each other, reflecting their semantic similarity.

First, we apply the t-distributed Stochastic Neighbor Embedding (t-SNE) algorithm to project the high-dimensional embeddings into two dimensions for visualization purposes. t-SNE is one of the most popular methods for revealing intrinsic structures in high-dimensional datasets, including trends and patterns, through a nonlinear dimensionality reduction technique \citep{2022_Cai_and_Ma}. This is useful for high-dimensional data that lie on several different but related low-dimensional manifolds \citep{2008_Van_der_Maaten_and_Hinton}. For example, news items can report various sub-events or highlight different facets within the same topic. 

\begin{figure}[!t]
    \centering
    \caption{t-SNE visualization of Goldman Sachs–related news articles}
    \label{fig:tsne}
    \includegraphics[width=0.95\textwidth]{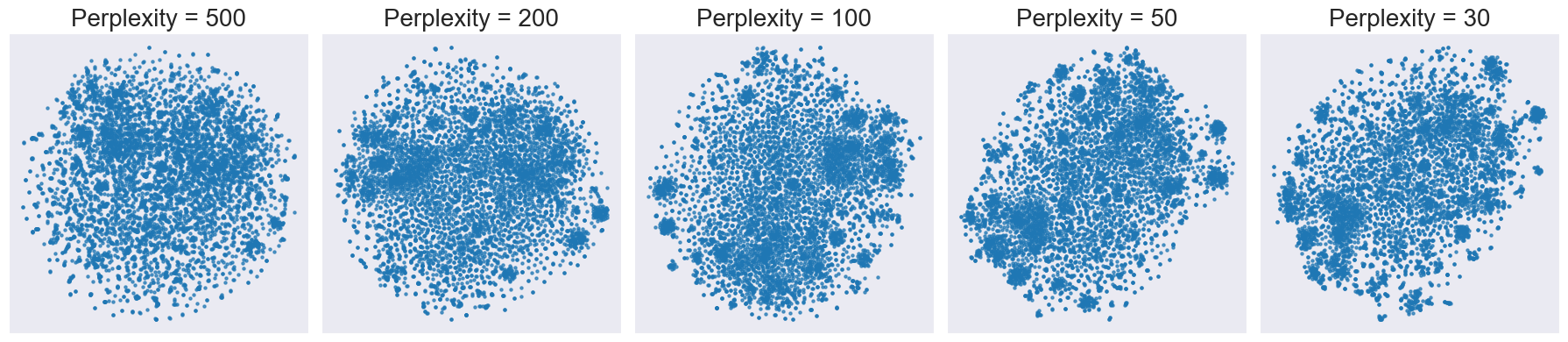}
    \caption*{\textit{Notes:} The figure shows t-SNE embeddings of 6{,}124 news articles related to Goldman Sachs under different perplexity parameter settings.}
\end{figure}
t-SNE includes a parameter called ``perplexity”, which balances the focus between local and global structures among data points. Local structure refers to the relationships between each point and its nearest neighbors, capturing fine-grained similarities. Global structure refers to the broader arrangement of clusters relative to one another. Lower perplexity values emphasize local relationships, producing smaller, tighter clusters, whereas higher perplexity values emphasize global structure, resulting in larger, more diffuse clusters. As shown in Figure \ref{fig:tsne}, clusters become smaller and more concentrated with lower perplexity values, and larger and more diffuse when the algorithm emphasizes global structure. However, in all cases, the visualizations reveal a pattern of clustering, suggesting that the embeddings form coherent groups in the reduced space.

Next, we apply the KMeans clustering algorithm directly in the original high-dimensional embedding space. Since KMeans is an unsupervised learning algorithm, the number of clusters must be specified in advance. Choosing a smaller number of clusters results in broader, more general themes, while a larger number yields more granular clusters with narrower topics. As an example, we examine all 6124 news articles related to Goldman Sachs\footnote{These articles are selected based on whether the title, summary, or body text contains the phrase “Goldman Sachs.”}. Figure \ref{fig:kmeans} shows the result of applying KMeans with 30 clusters to these articles. For each cluster, we use the LLM \texttt{gemini-2.0-flash} to assign a common theme to the 15 most representative articles. These representative articles are defined as those closest to the cluster centroid. The LLM generates a thematic summary based on the topic shared among the 15 selected articles.

\begin{figure}[!t]
    \centering
    \caption{KMeans clustering on Goldman Sachs–related news (30 clusters)}
    \label{fig:kmeans}
    \includegraphics[width=0.9\textwidth]{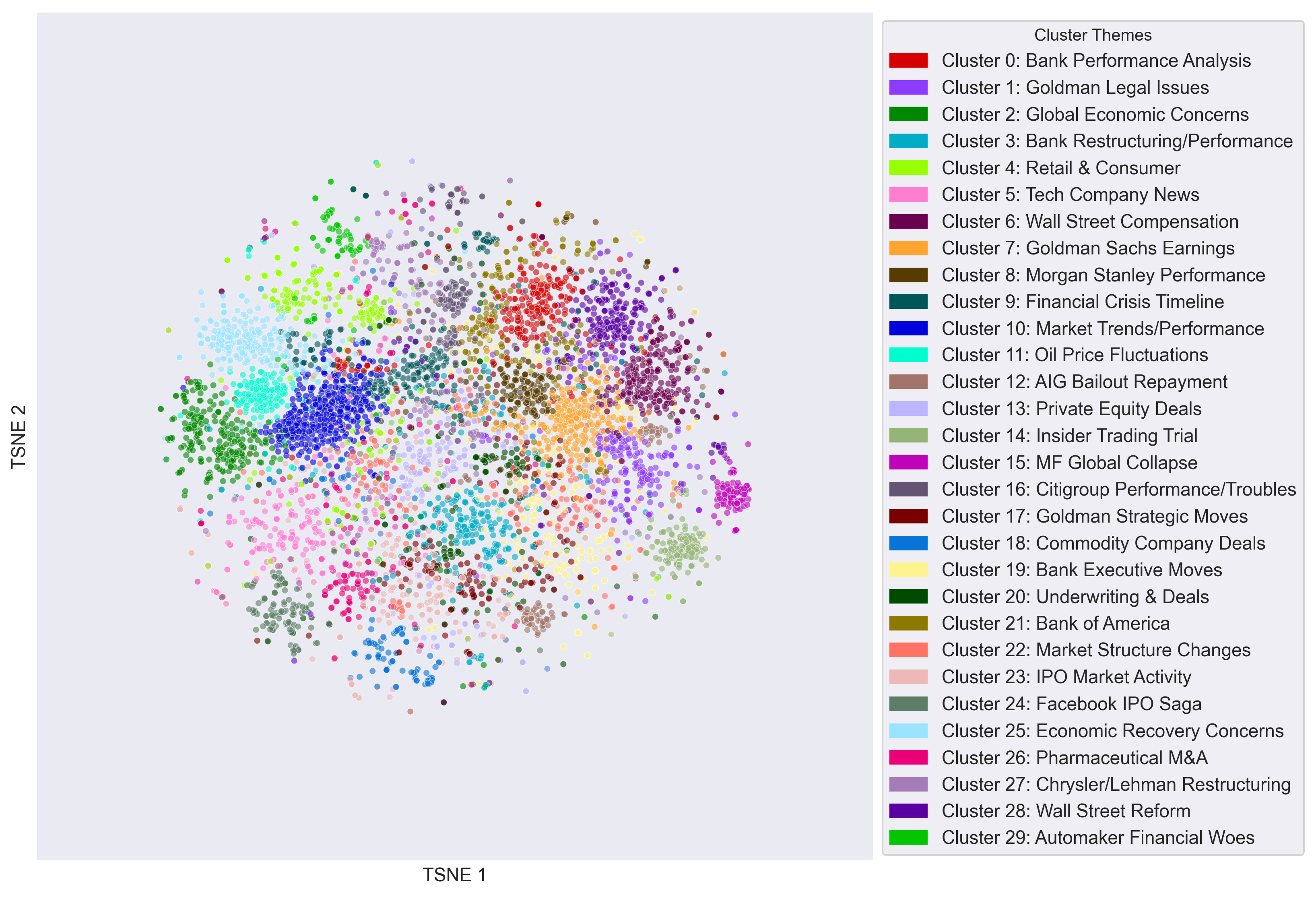}
    \caption*{
    \textit{Notes:} The clustering is visualized using the t-SNE algorithm, and each cluster represents a group of semantically similar news articles.}
\end{figure}
Taken together, the t-SNE visualizations and the KMeans clustering results provide qualitative evidence that the embeddings capture meaningful semantic structure in the news corpus. The presence of stable clustering patterns across different perplexity settings, along with the emergence of interpretable themes within clusters in the original embedding space, suggests that semantically related articles are mapped to nearby regions.

\subsection{CoVaR Prediction}
We input the high-dimensional return-augmented news embeddings into our Transformer-based model with one Transformer-block layer and a two-layer MLP. We set the hidden dimensions of both the FFN and the MLP to $d_h = d_m = 64$. We estimate CoVaR in two steps as described in Section \ref{sec:2_methodology}. First, we estimate $\operatorname{VaR}$ for all banks using linear quantile regression on macro-state variables, following \citet{2016_Adrian_and_Brunnermeier} and \citet{2016_Haerdle_et_al}. Second, we estimate $\operatorname{CoVaR}$ for our target bank using a Transformer-based model that incorporates both numerical (i.e., stock returns) and textual information (i.e., financial news articles). 

Our primary objective is to evaluate the added value of incorporating textual information in the second step of CoVaR framework. The dependent variable is the log of the stock return of the target bank. The independent variables include the contemporaneous log stock returns of other banks and news embeddings from the past five days. 

To evaluate out-of-sample performance, we split the dataset into three subsets: training ($\mathcal{D}_{1}$), validation ($\mathcal{D}_{2}$), and test ($\mathcal{D}_{3}$). The training set covers 40\% of the full dataset and is used to fit the model. The validation set takes 20\% and is used for regularization (i.e., early stopping) and hyperparameter tuning. We use the remaining 40\% as the test set to evaluate out-of-sample predictive performance of our Transformer. Each subset contains at least one major stock market downturn. We highlight the first and third big market downturns with gray-shaded areas in the following figures and skip the second one, since the second occurs in the validation set and is not of interest to us. The first downturn corresponds to the 2008 financial crisis, which is determined by the NBER (National Bureau of Economic Research) recession period. The third downturn corresponds to the 2011 U.S. debt ceiling crisis, which we determine using quarterly windows. Note that we train the model only once and then use the trained model to perform one-step-ahead predictions on the test data in a rolling-window manner. This means that for each test period, we use all available information up to time $t$ to predict the outcome at time $t+1$.

We train our Transformer-based model using SGD with mini-batches \citep{2025_Marek_et_al}. In the hyperparameter search, we experiment with three learning rates (0.00015, 0.0015, and 0.015) and three batch sizes (32, 64, 128). We then select the optimal hyperparameters that achieve the lowest loss on the validation data. We choose 200 training epochs with early stopping when the loss does not decrease for 50 consecutive epochs. Because our Transformer incorporates textual information, the optimization becomes much more difficult, and the training objective may include multiple local minima. We acknowledge that some training runs may not converge to the same solution. 

\begin{figure}[t]
    \centering
    \caption{VaR, CoVaR, and $\Delta$CoVaR estimates for Goldman Sachs at $\tau=5\%$}
    \label{covar_gs}
    \includegraphics[width=0.95\textwidth]{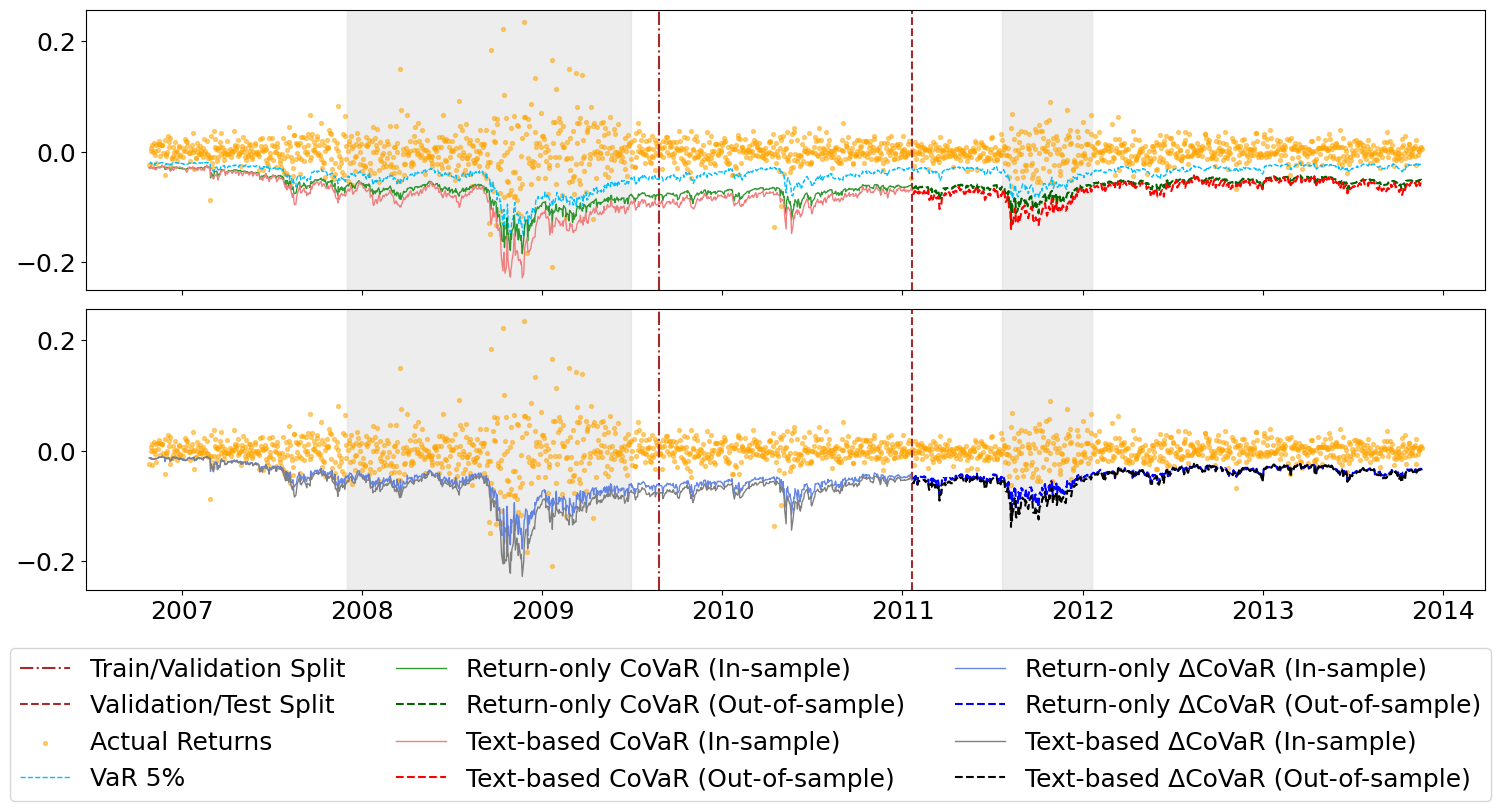}
    \caption*{
    \textit{Notes:} The estimates are shown with and without textual information. 
    The time series is divided into three parts: the training period (in-sample), the validation period (in-sample), and the test period (out-of-sample), separated by red lines. The \textbf{top panel} shows returns, VaR, and CoVaR, while the \textbf{bottom panel} shows $\Delta$CoVaR. Grey areas indicate major market downturns.}
\end{figure}
Figure~\ref{covar_gs} shows the $\operatorname{CoVaR}$ and $\operatorname{\Delta CoVaR}$ estimates for Goldman Sachs, with and without textual information (see Figure \ref{fig:all_banks} in Appendix \ref{sec:tables_and_figures} for other assets). {$\operatorname{\Delta CoVaR}$ is an alternative risk measure proposed by \cite{2016_Adrian_and_Brunnermeier} that builds upon the CoVaR framework. It quantifies the spillover effect by measuring the change in the system's risk
(CoVaR) when a specific bank $j$ shifts from its median state to a state of financial distress. We denote $\operatorname{\Delta CoVaR}$ as
\begin{equation}
    \operatorname{\Delta CoVaR}_{j, t}^\tau = \operatorname{CoVaR}_{j, t}^\tau\left(\operatorname{VaR}_{-j, t}^\tau\right) - \operatorname{CoVaR}_{j, t}^\tau\left(\operatorname{VaR}_{-j, t}^{0.5}\right).
\end{equation}
Importantly in this context, the $\tau$ used for the conditional systemic risk (CoVaR) and the $\tau$ defining the banks' distress levels (VaR) can be distinct.}

The text-based CoVaR is estimated using our Transformer-based model introduced in Section~\ref{sec:transformer}, and the non-text-based (i.e., return-only) estimates are computed from a two-layer MLP neural network with hidden dimensions 64. During periods of non-crisis and normal market activity, the gap between VaR and CoVaR tends to widen, reflecting the silent accumulation of systemic risk. This supports the perspective of \citet{2016_Adrian_and_Brunnermeier}, who argue that systemic risk accumulates silently during calm periods and materializes during crises. The latter is evident during market downturns, where the gap narrows as individual institutions approach their VaR thresholds, reflecting the realization of systemic risk.

Comparing the Transformer-based CoVaR model with text against the baseline MLP without text shows that incorporating financial news leads to noticeably different risk estimates during periods of market stress (See Figure \ref{fig:covar_diff}). The largest absolute differences appear during the 2008 crisis, especially in October, when the differences reach about 0.06--0.065 for both CoVaR and $\Delta$CoVaR. At that time, the global markets were highly unstable and concern about the viability of major financial institutions intensified. 
\begin{figure}[!t]
    \centering
    \caption{Differences in CoVaR and $\Delta$CoVaR estimates for Goldman Sachs}
    \label{fig:covar_diff}
    \includegraphics[width=0.95\textwidth]{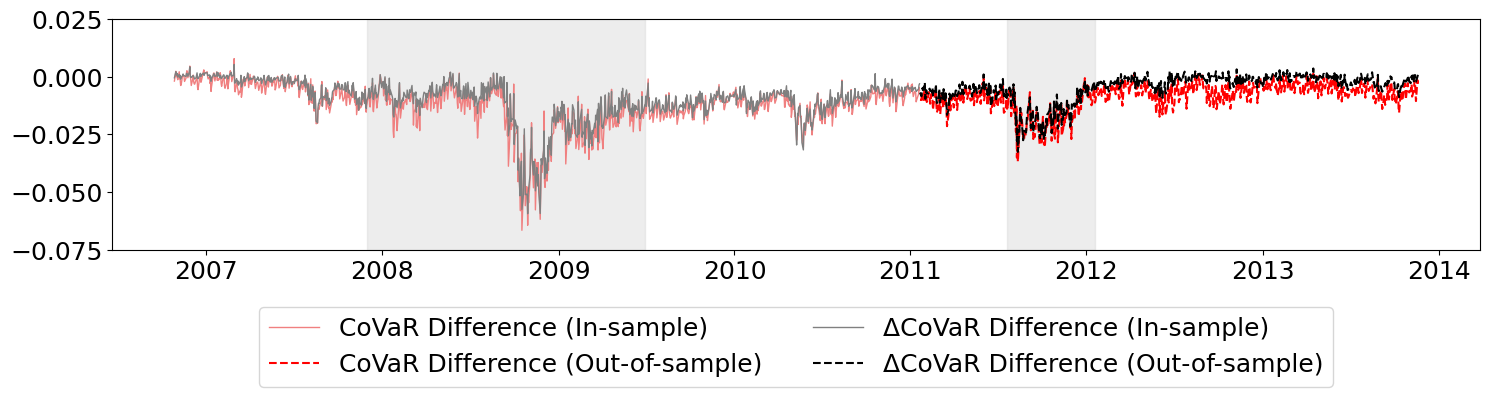}
    \caption*{\textit{Notes:} Differences are computed as the text-based Transformer estimates minus the return-only MLP estimates.}
\end{figure}

During the 2008 crisis, the media narrative was dominated by panic about the global banking system, fears of a deep recession, and widespread uncertainty about whether government interventions would be sufficient to stabilize financial markets. For example, \textit{Reuters} reported news such as ``Germany slashes 2009 growth forecast as crisis bites", ``Recession looms despite global interventions”, and ``Swiss banks raise emergency funds to fight crisis”. These news articles carry signals about systemic fragility, which push the model toward more negative CoVaR and $\Delta$CoVaR estimates compared with the model that relies purely on numerical variables. Other late-2008 dates (e.g., November, December) show similar gaps. In relative terms, the two models differ by 40--50\% during this crisis.

Another significant drop shows up in mid-2010. The news then focused on the euro-zone debt crisis and systemic risk in Europe. For example: ``EU urges Greece to stick to austerity, pension plan”, ``Euro-zone troubles may roil stocks”, and reports on flash-trading risks such as “Naked swaps, flash trading and other key terms.” This information again increase perceived risk, and the text-based model reflects that in more negative CoVaR estimates.

For the out-of-sample period, the largest absolute differences occur during the volatility episode from July 2011 to January 2012. The absolute gaps reach about 0.033--0.037 for CoVaR and $\Delta$CoVaR, with relative deviations of around 33\% to 40\% for $\Delta$CoVaR. The dominant news themes during this time concern the U.S. debt-ceiling crisis, such as ``United States loses prized AAA credit rating from S\&P” and ``Obama officials attack S\&P’s credibility after downgrade”, as well as the euro-zone sovereign crisis, such as ``Euro zone investors gloomiest for nearly 2 years” and ``ICE bars MF Global from floor, customers can unwind only”\footnote{Based on \textit{Reuters} news reports, the collapse of MF Global in 2011 was caused by risky 6 billion euro bets on European sovereign bonds, heavy leverage, and the alleged misuse of client funds to cover losses, culminating in bankruptcy.}.

Across both crisis periods (and also the one in 2010 included in the validation set), the sign of the differences is consistently negative, showing that the model with text produces more negative CoVaR and $\Delta$CoVaR estimates than the model without text. In other words, during crisis episodes, incorporating financial news systematically amplifies the perceived downside tail risk, particularly during crisis episodes. We show an extra news content analysis in Appendix \ref{sec:news_content} to better understand why the inclusion of news embeddings leads to more negative CoVaR predictions during periods of financial distress.

\subsection{Average Quantile Loss} \label{sec:backtesting}
We show that our CoVaR measure can be accurately estimated using a residual-based analysis. In particular, we evaluate the average quantile (AVQ) loss on the test data for CoVaR models with and without textual information. Starting from January 21, 2011 (the beginning of the test period), we compute quarterly average losses and extend the evaluation cumulatively every three months.

Table \ref{tab:three_month_loss_GS} reports the loss comparison results for Goldman Sachs. Over the full test period, the text-based CoVaR model exhibits slightly lower losses than the CoVaR model without text. However, we observe a noticeable increase in losses between the 9th and 15th months for both models. These periods cover exactly a major market downturn: on August 5, 2011, Standard \& Poor’s downgraded U.S. sovereign debt from AAA to AA+ for the first time in history; and later, until December 2011, the U.S. debt limit was raised three times after Congress failed to block the increases or pass a Balanced Budget Amendment\footnote{See details in Congressional Research Service \href{https://www.congress.gov/crs_external_products/RL/PDF/RL31967/RL31967.139.pdf}{\textit{The Debt Limit: History and Recent Increases}}, and U.S. Government Accountability Office \href{https://www.gao.gov/products/gao-12-701}{\textit{Debt Limit: Analysis of 2011–2012 Actions}}}.
\begin{table}[t]
    \centering
    \caption{Cumulative three-month average losses for Goldman Sachs}
    \label{tab:three_month_loss_GS}
    \begin{tabular}{l S[table-format=1.2] S[table-format=1.2]}
    \toprule
     & \textbf{Text-based Transformer} & \textbf{Return-only MLP} \\
    \midrule
    3 months   & \textbf{0.10} & 0.13 \\
    6 months   & \textbf{0.11} & 0.14 \\
    9 months   & \textbf{0.12} & 0.15 \\
    12 months  & \textbf{0.12} & 0.15 \\
    \addlinespace
    15 months  & \textbf{0.12} & 0.16 \\
    18 months  & \textbf{0.12} & 0.15 \\
    21 months  & \textbf{0.12} & 0.15 \\
    24 months  & \textbf{0.12} & 0.15 \\
    \addlinespace
    27 months  & \textbf{0.12} & 0.15 \\
    30 months  & \textbf{0.12} & 0.15 \\
    33 months  & \textbf{0.11} & 0.15 \\
    Full Test Period & \textbf{0.11} & 0.15 \\
    \bottomrule
    \end{tabular}
    \caption*{\textit{Notes:} The table reports cumulative three-month average losses of the test data for the text-based Transformer and return-only MLP CoVaR models over twelve quarterly evaluation periods for Goldman Sachs. Entries are reported in units of $10^{-2}$, (i.e., values are multiplied by 100).}
\end{table}
\textit{Reuters} news articles during this period documented the intense political conflict between Congress and the Obama administration over raising the debt ceiling, raising fears of a potential default. The textual data from this time contains information relevant to systemic stress: terms such as “default risk,” “debt ceiling,” and “downgrade” become frequent in financial news. The Transformer-based model takes this textual information as input and achieves a smaller increase in loss during this crisis period. 

We report the AVQ losses for all the banks in Table~\ref{tab:loss}. Overall, the two models show broadly similar total losses on test data. The inclusion of the financial news does not dramatically change overall predictive accuracy across the full test sample. The performance differences between the two models are generally small across the eight assets. However, we observe two things: 1) the differences between the text-based CoVaR and the CoVaR without text consistently display a negative dip for all banks during the crisis, regardless of whether the total test losses are higher or lower for either model (See Figure \ref{fig:all_banks}); and 2) most losses in the first 12 months, which include the major market downturn, tend to be lower or at least comparable for the Transformer-based model with text.
\begin{table}[!t]
    \centering
    \caption{Average quantile losses of CoVaR models for eight major banks}
    \label{tab:loss}
    \begin{tabular}{l  S[table-format=1.2] S[table-format=1.2]  S[table-format=1.2] S[table-format=1.2]}
        \toprule
        \textbf{Ticker} & \multicolumn{2}{c}{\textbf{Test Loss in First 12 Months}} & \multicolumn{2}{c}{\textbf{Total Test Loss}} \\
        \cmidrule(lr){2-3} \cmidrule(lr){4-5}
        & \textbf{Transformer} & \textbf{MLP} & \textbf{Transformer} & \textbf{MLP} \\
        \midrule
        GS  & \textbf{0.12} & 0.15 & \textbf{0.11} & 0.15 \\
        \addlinespace
        WFC & \textbf{0.16} & 0.16 & 0.14 & \textbf{0.12} \\
        \addlinespace
        JPM & \textbf{0.11} & 0.14 & \textbf{0.12} & 0.14 \\
        \addlinespace
        BAC & \textbf{0.19} & 0.19 & \textbf{0.16} & 0.16 \\
        \addlinespace
        C   & \textbf{0.19} & 0.20 & \textbf{0.18} & 0.18 \\
        \addlinespace
        BK  & \textbf{0.14} & 0.14 & 0.12 & \textbf{0.12} \\
        \addlinespace
        MS  & 0.21 & \textbf{0.18} & 0.18 & \textbf{0.14} \\
        \addlinespace
        STT & \textbf{0.15} & 0.15 & 0.15 & \textbf{0.15} \\
        \bottomrule
    \end{tabular}
    \caption*{\textit{Notes:} The table reports average quantile losses of CoVaR models with and without textual information. Test losses are shown for the first 12 months (crisis period) and for the total period. Bold numbers indicate the lower loss between the Transformer and MLP models for each bank. Entries are reported in units of $10^{-2}$, (i.e., values are multiplied by 100).}
\end{table}

The generally similar total losses suggest that news content may be noisier and provide limited incremental information for risk prediction during stable periods, which may offset the benefits of text integration in evaluations. However, when we focus on the first 12 months, which cover the crisis period, the lower losses observed during this time indicate that textual information becomes more valuable in more uncertain market environments.

\subsection{Comparison with Sentiment-Based Measures}

We compare our results with sentiment-based CoVaR estimates. Specifically, we use \texttt{FinBERT}, a financial-domain pre-trained language model, to classify each news article into one of three sentiment categories: positive, negative, or neutral. These sentiment information are then incorporated into two alternative modeling frameworks: MLP and Transformer. We estimate CoVaR using each sentiment-based model and compare the resulting MLP- and Transformer-based sentiment CoVaR estimates with our text-based CoVaR estimates.

For the first method, to let the MLP take sentiment as an input variable, we construct the daily sentiment index following a commonly used method in the literature \citep{2007_Tetlock, 2011_Loughran_and_McDonald, 2014_Nassirtoussi_et_al}:
\begin{equation}
    \text{Sentiment Index}_{t}=\frac{\text{\# Positive}_{t} - \text{\#Negative}_{t}}{\text{\# Positive}_{t} + \text{\# Neutral}_{t} + \text{\# Negative}_{t}},
\end{equation}
where ``\#" denotes the number of articles in each sentiment category on day $t$. The index reflects the net sentiment (positive minus negative) scaled by the total number of articles, representing the overall sentiment of financial news on that day (see the top panel of Figure~\ref{fig:mlp_sentiment_GS}). The observed sentiment index fluctuates around zero and lies within the range $[-0.5, +0.5]$. Negative sentiment occurs during crisis periods, but the magnitude is small. We augment the return-only MLP model by adding the sentiment index as an additional input variable.
\begin{figure}[t]
    \centering
    \caption{Differences in CoVaR and $\Delta$CoVaR for Goldman Sachs across models}
    \label{fig:mlp_sentiment_GS}
    \includegraphics[width=0.95\textwidth]{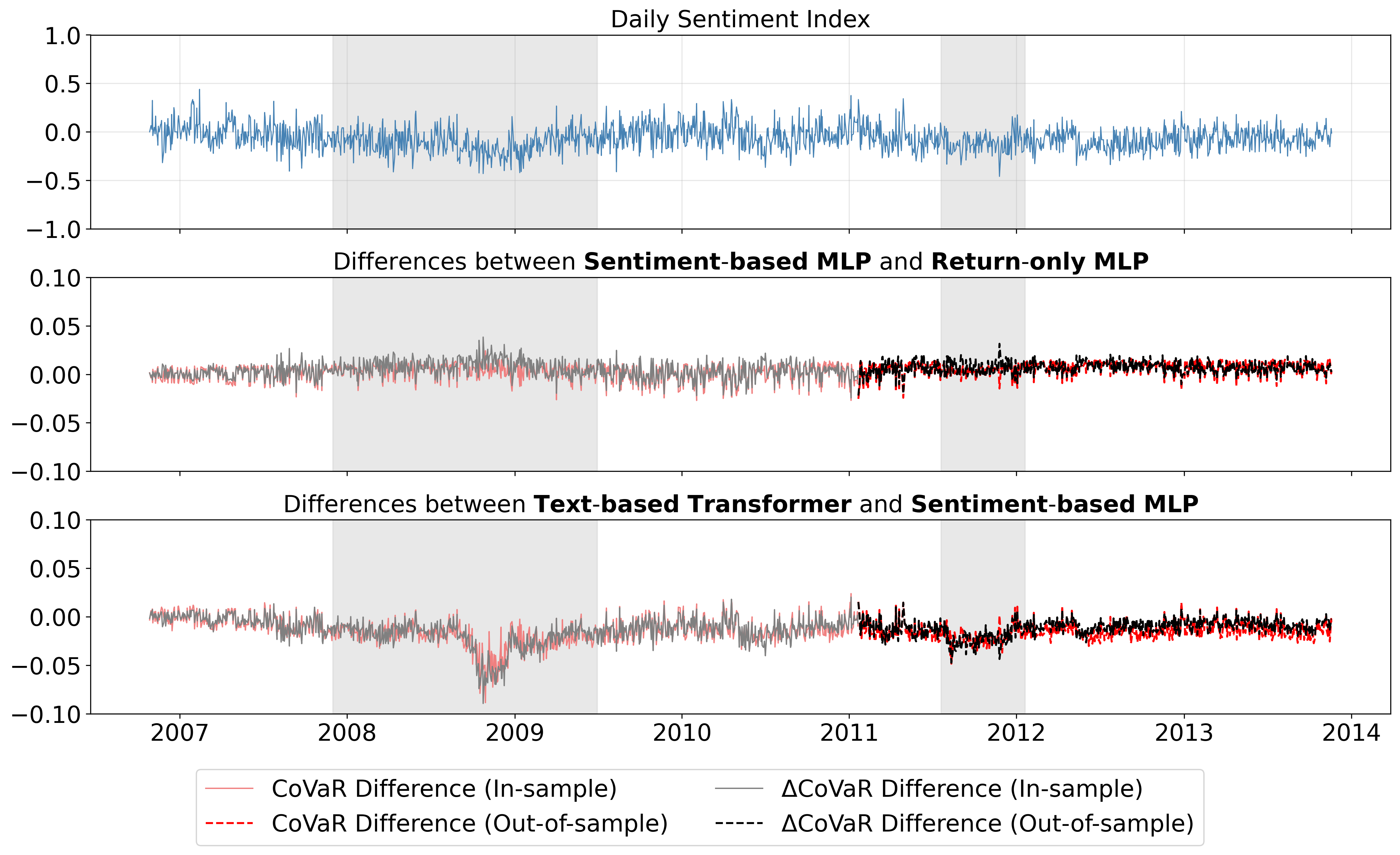}
    \caption*{\textit{Notes:} The \textbf{top panel} shows the daily sentiment index over time. 
    The \textbf{middle panel} shows differences between the sentiment-based MLP estimates and return-only MLP estimates. 
    The \textbf{bottom panel} shows differences between the text-based Transformer estimates and the sentiment-based MLP estimates. 
    Red lines indicate differences in CoVaR, and black lines indicate differences in $\Delta$CoVaR.}
\end{figure}
As shown in the middle panel of Figure~\ref{fig:mlp_sentiment_GS}, the sentiment-based CoVaR adds limited value to systemic risk estimation compared to the return-only MLP CoVaR model. We provide some potential explanations. First, sentiment is only one aspect of human language and cannot fully capture the complexity of language. Relying only on sentiment leads to the loss of important information relevant to financial markets. Second, even if we assume sentiment is the only key information in textual data, investor sentiment still varies significantly across contexts and is not uniformly informative \citep{2024_Wang_and_Ma}. Third, common aggregation methods such as averaging or polarity counts are typically static and equal-weighted. They treat all news sentiments as equally important through the time. This overlooks time-varying importance in news information. These aggregation processes lead to the homogenization of sentiment index \citep{2024_Wang_and_Ma}. This homogenization can be both cross-sectional and temporal: (1) The sentiment scores tend to converge toward the mean and thereby smoothing out the distinctive crucial information in individual news items. The extreme sentiments are diluted, causing predictive information loss; (2) The similarity in average scores across different days homogenizes daily news sentiments, making it difficult for models to capture meaningful time-series variation in investor sentiment. The homogenization mechanism helps explain the small magnitude of day-to-day sentiment fluctuations shown in the top panel of Figure~\ref{fig:mlp_sentiment_GS}. Moreover, news and prices may have interactions in real world, which the common sentiment-based approaches fail to capture. 

The second method we use computes a sentiment-based Transformer CoVaR. Instead of feeding high-dimensional text embeddings, we map each document simply as one of three numerical sentiment labels (e.g., 1, 2, 3) to represent negative, neutral, and positive. Then we apply the same procedure as the text-based Transformer to construct the input matrix.
\begin{figure}[t]
    \centering
    \caption{Differences in CoVaR and $\Delta$CoVaR for Goldman Sachs across models}
    \label{fig:transformer_sentiment_GS}
    \includegraphics[width=0.95\textwidth, keepaspectratio]{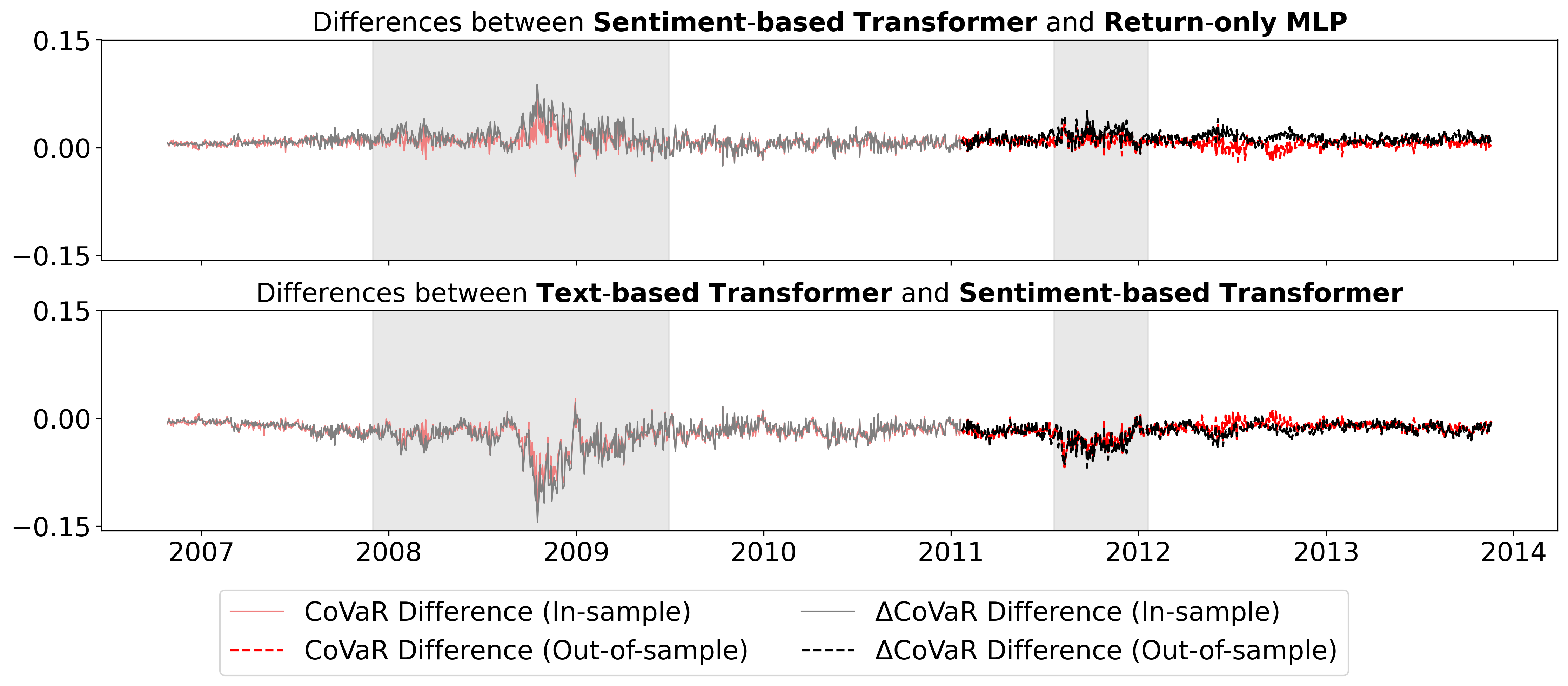}
    \caption*{\textit{Notes:} The \textbf{top panel} shows differences between the sentiment-based Transformer estimates and the return-only MLP estimates. The \textbf{bottom panel} shows differences between the text-based Transformer estimates and the sentiment-based Transformer estimates.}
\end{figure}
We obtain the same conclusion as using the first sentiment-based MLP model: the sentiment-based CoVaR adds limited value to systemic risk estimation compared to the return-only MLP CoVaR model. The largest differences between the two models occur during the 2008 financial crisis. We zoom in and check the sentiment classifications of the news. We zoom in and check the sentiment classifications of the news. Indeed, some news are classified as negative when they are extremely negative, such as ``AIG struggles to survive financial tsunami”. However, there are also news classified as positive, such as ``AIG gets New York's help in accessing \$20 billion”. But is this really a positive news? One can argue that AIG receives governmental support, but on the other hand, when an institution really needs government intervention, it also means the situation is already extremely bad. In addition, ``Chrysler sees U.S. auto market gains toward end 2009” receives a positive label despite the first half of the article describing severe declines in U.S. auto sales, such as \textit{“… auto sales were down more than 11 percent through the first eight months …”}. Classifying articles into only three categories will ignore the information in the first part. Moreover, more than 50\% news during crisis are simply classified as neutral. Thus, collapsing articles into three sentiment categories discards the nuances contained in mixed-tone narratives.

The limitations of the sentiment-based CoVaRs become obvious, when comparing them with the text-based Transformer CoVaR estimates (i.e. the one that uses text embedding). The text embeddings with richer representations that capture deeper linguistic and contextual information, allowing them to detect more relevant information in news, especially during periods of market stress. As shown in the bottom panels of Figures~\ref{fig:mlp_sentiment_GS} and \ref{fig:transformer_sentiment_GS}, the differences between the sentiment-based models and the text-based Transformer are most obvious during crisis periods, where negative dips in CoVaR estimates reappear. This is similar to the differences observed between text-based and return-only CoVaR estimates. This shows that sentiment-based CoVaR modeling overlook important information present in the text, which on the contrary can be captured by using text embeddings.

\subsection{Look-Ahead Bias}

Applying the \textit{GoogleAI} embedding model to transform news articles into embedding vectors may introduce a look-ahead bias. The \texttt{gemini-embedding-001} model may have been trained on data that include future information beyond the analysis period. For instance, if the training data covers the 2008 financial crisis and subsequent bank failures, the model may produce highly similar embeddings for news about different banks. This can be problematic if it occurs even for articles published before the crisis, when news mentioned banks were operating normally. Such bias may skew the embeddings, which overlook subtle but important distinctions in news across different financial institutions.

Mitigating the look-ahead bias is challenging. Ideally, we would use a language model trained only on data preceding our evaluation period. However, since the LLM boom began after 2020, it is difficult to find a model that meets the data requirements of our application and contains sufficient coverage of crisis periods for analysis. Developing a new LLM would be computationally and resource intensive. In addition, \textit{GoogleAI} does not disclose the knowledge cutoff date of \texttt{gemini-embedding-001}. One dataset used in its training is \textit{FRet}, which includes diverse topics such as blog posts, news articles, Wikipedia-like content, and forum discussions \citep{2024_Lee_et_al}. Therefore, there is a high potential for look-ahead bias.

We address this bias by masking the names of the eight banks and other identifying information to prevent the model from exploiting bank-specific cues. Similar to \cite{2025_Breitung_and_Mueller} and \cite{2023_Glasserman_and_Lin}, we apply a Named Entity Recognition (NER) model from the Python package \textit{spaCy} to detect the identifying information of banks. \cite{2025_Breitung_and_Mueller} empirically demonstrate that using this advanced model ensures high masking accuracy.
\begin{figure}[t]
    \centering
    \caption{Differences in CoVaR and $\Delta$CoVaR for Goldman Sachs: with vs. without masking}
    \label{fig:look_ahead_bias}
    \includegraphics[width=0.95\textwidth]{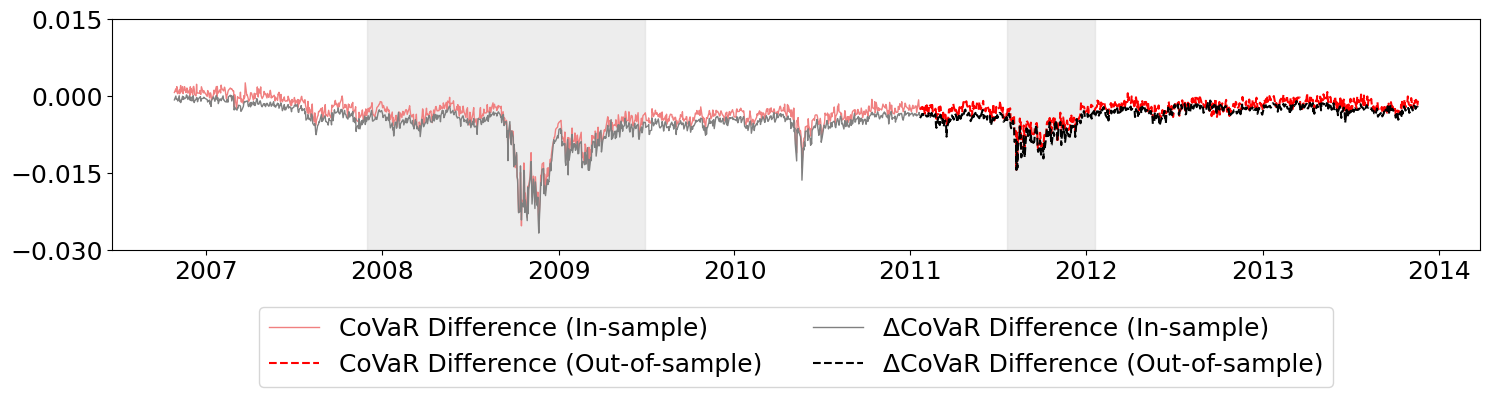}
    \caption*{\footnotesize \textit{Notes:} Differences are computed by subtracting the CoVaR with masking from the CoVaR without masking. Red lines show differences in CoVaR, and black lines show differences in $\Delta$CoVaR. Negative values indicate that the model without masking produces more negative predictions. Shaded regions correspond to identified crisis periods, and dashed vertical lines mark key financial stress events.}
\end{figure}

We find that masking or not masking identifying information does not change the overall conclusion: the CoVaR still becomes more negative during crisis periods when financial news is incorporated. However, there are small numerical differences remain between the models with and without masking (see Figure~\ref{fig:look_ahead_bias}). Negative values in the figure indicate that the model without masking produces more negative CoVaR estimates. The largest absolute differences occur between October and November 2008, when the CoVaR and $\Delta$CoVaR gaps reach about 0.022--0.027. In relative terms, the Transformer-based model without using masking deviates from the Transformer-based model using masking by roughly 12--14\% for both CoVaR and $\Delta$CoVaR. Over the out-of-sample period, the largest gap appears from August to October 2011, with absolute differences of about 0.015 for CoVaR and $\Delta$CoVaR (roughly 9--12\% in relative terms). A likely explanation is that masking reduces how often multiple banks appear in the same article, which weakens the model’s ability to infer relationships between institutions. This may affect the systemic risk estimates.

\section{Conclusion} \label{sec:conclusion_future_work}

We develop a Transformer-based framework for systemic risk modeling that integrates both structured financial data and unstructured textual data. Our approach extends the CoVaR methodology to incorporate high-dimensional, multimodal inputs. We provide empirical evidence of improved tail-risk estimation during periods of financial stress. Specifically, we show that including textual data from financial news results in more negative CoVaR and $\Delta$CoVaR values during crises under a better predictive performance, highlighting the importance of market-related information contained in textual data. In addition, forward-CoVaR, as another systematic risk measure used for early warning, can be interesting for future empirical work.

We establish an upper bound on the $\ell_2$ risk of the Transformer-based quantile regression. In particular, we also provide the convergence rate of our procedure for an explicit case, demonstrating that it can efficiently learn complex nonlinear dependencies in high-dimensional settings.  This complements existing literature on deep learning for financial risk modeling and provides a rigorous econometric foundation for integrating Transformers into systemic risk analysis.


\clearpage
\bibliographystyle{chicago}
\bibliography{ref}



\begin{appendix}
    \section{Transformer Architect}\label{sec:transformer_details}
\begin{figure}[h]
    \centering
    \caption{The Transformer model with residual connection and layer normalization}
    \includegraphics[width=0.45\textwidth, keepaspectratio]{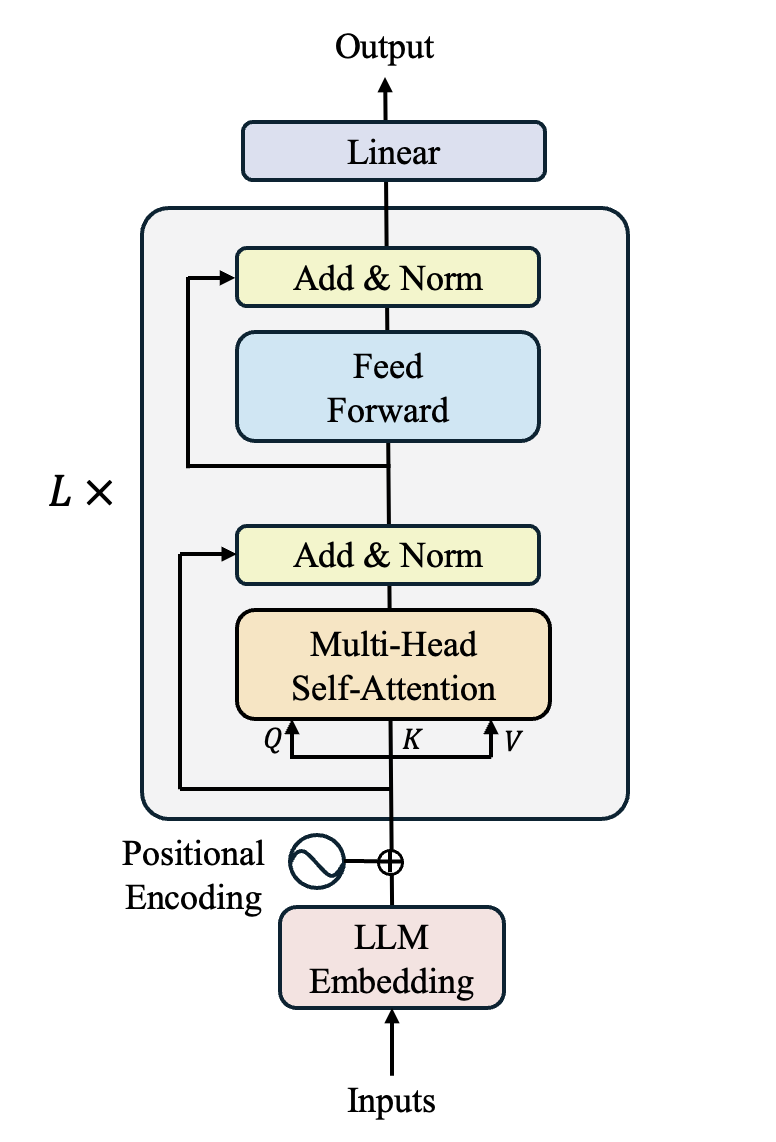}
    \caption*{\textit{Notes:} The Transformer encoder layer, including the embedding layer, positional encoding, multi-head self-attention (MSA), feed-forward network (FFN), residual connections, layer normalization, and a linear output layer.}
    \label{fig:transformer_encoder}
\end{figure}
A Transformer encoder consists of $L$ layers, each comprising a multi-head self-attention sublayer and a feed-forward network sublayer. It is also possible to add residual connection and layer normalization after each sublayer, which will make the training more stable. 

Formally, we define the $l$-th multi-head self-attention class with residual connection $\widetilde{\mathcal{G}}_{l}^{(MSA)}(n,d,H)$ as the set of map $\widetilde{g}_l^{(msa)}: \mathbb{R}^{d \times n} \to \mathbb{R}^{d \times n}$ of the form
\begin{equation}
    \widetilde{g}_l^{(msa)}(Z) = \sum_{h=1}^H W_{l,h}^{(O)} \, W_{l,h}^{(V)} Z \cdot \sigma_S \left(\frac{Z^{\top} {W_{l,h}^{(K)}}^{\top} W_{l,h}^{(Q)} Z}{\sqrt{d}} \right).
\end{equation}
For each head $h \in \{1, \ldots, H\}$, we denote by $W_{l, h}^{(V)}, W_{l, h}^{(K)}, W_{l, h}^{(Q)} \in \mathbb{R}^{\frac{d}{H} \times d}$, and ${W_{l,h}^{(O)}} \in \mathbb{R}^{d \times \frac{d}{H}}$ the value, key, query, and projection matrices, respectively.

The $l$-th feed-forward network class with residual connection $\widetilde{\mathcal{G}}_{l}^{(FF)}(d, d_h)$ is defined as the set of maps $\widetilde{g}_l^{(ff)}:\mathbb{R}^{d \times n} \to \mathbb{R}^{d \times n}$ of the form
\begin{equation}
    \widetilde{g}_l^{(ff)}(Z) = W_l^{(2)} \, \sigma_{R} \left(W_l^{(1)} \cdot Z + b_l^{(1)} \right) + b_l^{(2)}.
\end{equation}
where $d_h$ denotes the hidden dimension of the feed-forward network. The parameters are $W_l^{(1)} \in \mathbb{R}^{d_h \times d}$, $b_l^{(1)} \in \mathbb{R}^{d_h}$ and $W_l^{(2)} \in \mathbb{R}^{d \times d_h}$, $b_l^{(2)} \in \mathbb{R}^{d}$. 

In addition, we define the $l$-th layer normalization function class $\mathcal{G}_{l}^{(LN)}$ as the set of map $g_l^{(ln)}: \mathbb{R}^{d \times n} \to \mathbb{R}^{d \times n}$ of the form
\begin{equation}
    g_l^{(ln)}(Z) = \left( \frac{Z_{j} - \frac{1}{d} \sum_{i} Z_{i,j}}{\sqrt{\frac{1}{d} \sum_{k}(Z_{k,j} - \frac{1}{d} \sum_{i} Z_{i,j})}} \right)_{j=1}.
\end{equation}

Building on the definition of the Transformer with residual connections and layer normalization, we consider a class of Transformer block of the following form:
\begin{equation}
    \begin{split}
        \widetilde{\mathcal{G}}^{(TF)}(n, d, H, d_h, L) := \Big\{ g_{Lb}^{(ln)} \circ \widetilde{g}_L^{(ff)} \circ g_{La}^{(ln)} &\circ \widetilde{g}_L^{(msa)} \circ \cdots \\
        \cdots\circ g_{1b}^{(ln)} &\circ \widetilde{g}_1^{(ff)} \circ g_{1a}^{(ln)} \circ \widetilde{g}_1^{(msa)} \Big\}.
    \end{split}
\end{equation}
Appending an MLP, defined in the same manner as in Section~\ref{sec:estimator}, yields the Transformer-based neural network class with residual connections and layer normalization.
\section{News Content} \label{sec:news_content}
\begin{figure}[!h]
    \centering
    \caption{News timeline for selected clusters.}
    \includegraphics[width=0.9\textwidth, keepaspectratio]{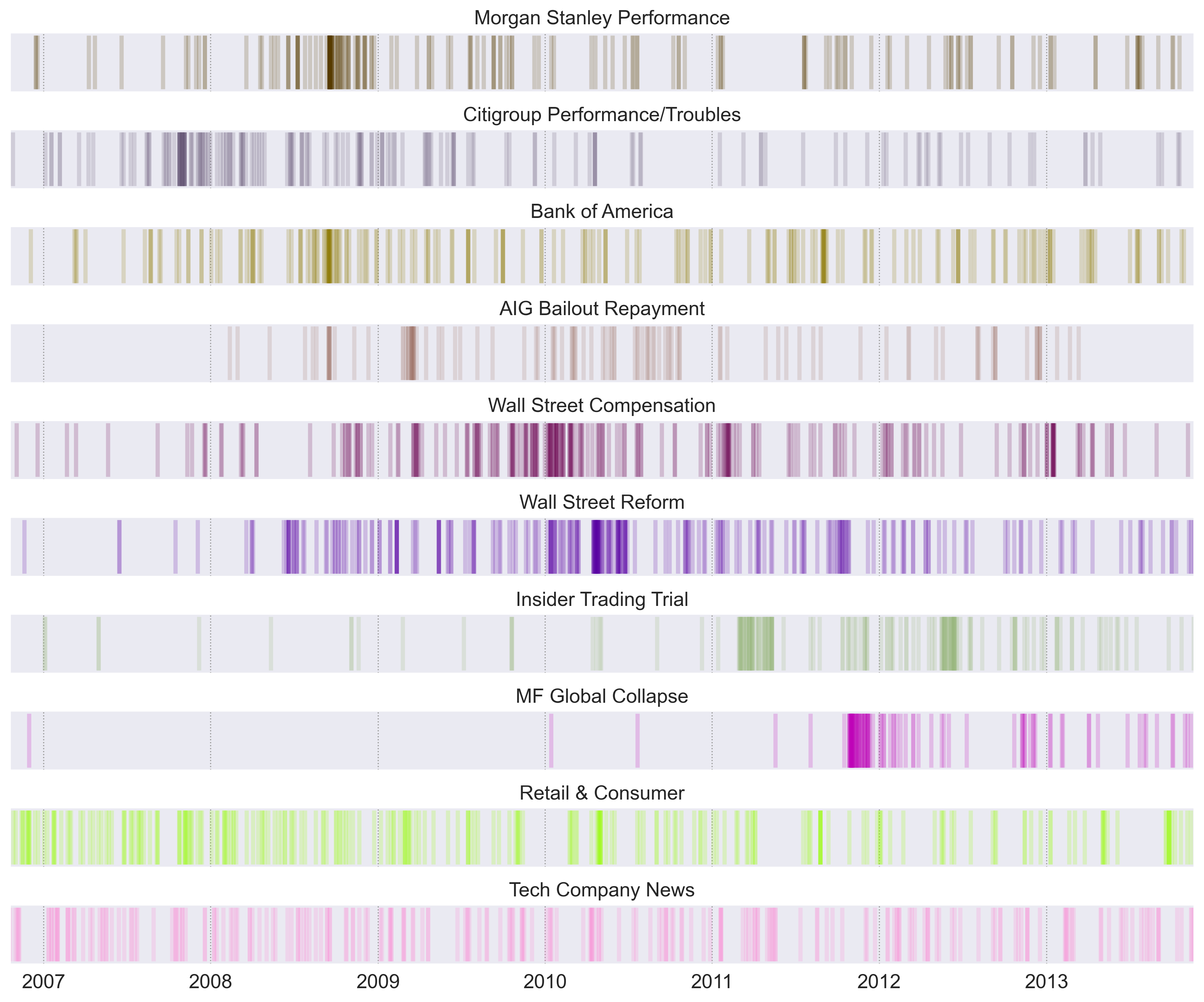}
    \caption*{\textit{Notes:} Each row represents one cluster, with vertical lines indicating the publication dates of news articles belonging to that cluster. Colors correspond to cluster identities, and titles describe the main theme of each cluster. The x-axis shows the timeline, allowing comparison of news activity patterns across topics.}
    \label{fig:timeline}
\end{figure}
To better understand why the inclusion of news embeddings leads to more negative CoVaR predictions during periods of financial distress, we analyze the underlying content of the news articles used in the model. Our aim is to identify the themes of news and analyze their relationship with the observed changes in systemic risk estimates. While this content analysis can provide insights into narrative-based risk dynamics, we acknowledge the limitation that causality between news and market movements cannot be definitively established. News and financial markets may interact as part of an endogenous system, mutually influencing one another.

Figure \ref{fig:timeline} illustrates the temporal distribution of major news clusters. Notably, themes such as Morgan Stanley Performance, Citigroup Performance, and Bank of America Troubles dominate the period between 2007 and 2009, with some appearing even earlier. These topics reflect widespread distress among major financial institutions during the global financial crisis. Articles within these clusters frequently use terms such as “losses,” “fear,” and “woes,” highlighting the pessimistic view surrounding the banking sector at the time. The concentration of such narratives likely contributed to the more negative CoVaR predictions during this period.

Another negative dip CoVaR occurs around 2010, coinciding with a surge in news coverage related to Wall Street Compensation and Financial Reform Efforts. In the wake of the crisis, public criticism of executive compensation practices in the financial sector intensified. The Obama administration introduced major financial reform legislation, notably the Dodd-Frank Act, to address perceived regulatory failures. This policy response and public backlash were extensively covered in the media, potentially reinforcing a more negative systemic risk estimates. These clusters underscore the socio-political consequences of the crisis and how they were captured in the news.

In early 2011, coverage of insider trading allegations involving Goldman Sachs appeared to coincide with another reversal in CoVaR trends. Similarly, the MF Global Collapse cluster spanning 2011 to 2012 demonstrates that news about corporate failures continued to influence perceptions of systemic risk well beyond the immediate aftermath of the 2008 crisis. MF Global, a major derivatives broker, filed for bankruptcy following risky bets on European debt, and its collapse was widely reported as a significant market event.

However, not all news clusters exhibit a strong connection to systemic risk. For instance, clusters related to Retail or Technology companies tend to be more evenly distributed throughout the sample period and display relatively weak correlations with the CoVaR estimates of financial institutions. These findings suggest that only certain types of news, especially those focused on financial sector instability, regulation, and corporate misconduct, meaningfully contribute to systemic risk as captured by our embedding-based CoVaR model.

    \section{Monte Carlo Simulation}
\label{sec:simulation}

We validate our empirical Transformer-based model using a Monte Carlo simulation experiment. The goal is to examine the model's ability to learn the dependence between two financial institutions from noisy textual and numerical data in a finite-sample setting. We generate two synthetic time series representing the stock returns of two hypothetical institutions. The return of the second institution is conditional on the return of the first institution.

\subsection{Simulation Setting}

The baseline setting is linear. The first time series follows a first-order autoregressive process (AR(1)), and the second series is a linear function conditional on the first series with a Gaussian noise:
\begin{subequations}
    \begin{align}
        y_{t}^{(1)} &= \phi \, y_{t-1}^{(1)} + \epsilon_t, \quad \epsilon_t \sim \mathcal{N}(0, \sigma_1^2), \\
        y_{t}^{(2)} &= \beta \, y_{t}^{(1)} + \eta_t, \quad \eta_t \sim \mathcal{N}(0, \sigma_2^2),
    \end{align}
\end{subequations}
with parameters set as $\phi = 0.8$, $\sigma_1 = 0.15$, $\beta = 1.2$, $\sigma_2 = 0.2$, and the initial value $y_{t=0}^{(1)} = 0$. 
Under this setting, we focus on the CoVaR estimates of the second institution, which is conditional on the first institution being at its VaR level. Notably, we feed the Transformer-based model the real-world \textit{Reuters} financial news as noisy textual input alongside the generated numerical return data. Since these news have no real relation to our synthetic returns, we expect that the Transformer-based model will effectively learn the true numerical dependence, despite the noisy textual input. 

To compute the theoretical CoVaR, we first calculate the VaR for the first institution based on the conditional mean and volatility:
\begin{equation}
    \mathrm{VaR}_{t}^{(1)}(\tau) = \phi \, y_{t-1}^{(1)} + \sigma_1 \, z(\tau).
\end{equation}
where $z(\tau) = \Phi^{-1}(\tau)$ and $\Phi^{-1}(\cdot)$ denotes the inverse cumulative distribution function of the standard normal distribution. We set $\mathrm{VaR}_{t=0}^{(1)}(\tau) = \sigma_1 z(\tau)$ for $t = 0$. Then, we derive the theoretical CoVaR for the second institution by conditioning on the VaR of the first series. Specifically,
\begin{equation}
    \mathrm{CoVaR}_t^{(2\mid1)}(\tau) = \beta \, \mathrm{VaR}_t^{(1)}(\tau) + \sigma_2 \, z(\tau).
\end{equation}
For comparison purposes, one can also compute the VaR for the second institution:
\begin{equation}
    \mathrm{VaR}_t^{(2)}(\tau) = \beta \cdot \phi \, y_{t-1}^{(1)} + \sqrt{\beta^2 \, \sigma_1^2 + \sigma_2^2} \cdot z(\tau),
\end{equation}
with the initial value set as $\mathrm{VaR}_{t=0}^{(2)} = \sqrt{\beta^2 \, \sigma_1^2 + \sigma_2^2} \cdot z(\tau)$. The VaR of the second institution incorporates randomness from both $\epsilon_t$ and $\eta_t$.



\subsection{Model Predictive Performance} 

We visualize the simulation results for the Transformer-based model in Figure~\ref{fig:simulation}. The estimated CoVaR from this model closely aligns with the theoretical CoVaR, This indicates the model can capture the underlying conditional dependence between the two institutions, even in the presence of extremely noisy textual data. We compute the prediction error, which is the differences between the estimated and true CoVaR. The prediction error for the 5\% CoVaR is close to zero.
\begin{figure}[!t]
    \centering
    \caption{Simulation result}
    \includegraphics[width=0.95\textwidth, keepaspectratio]{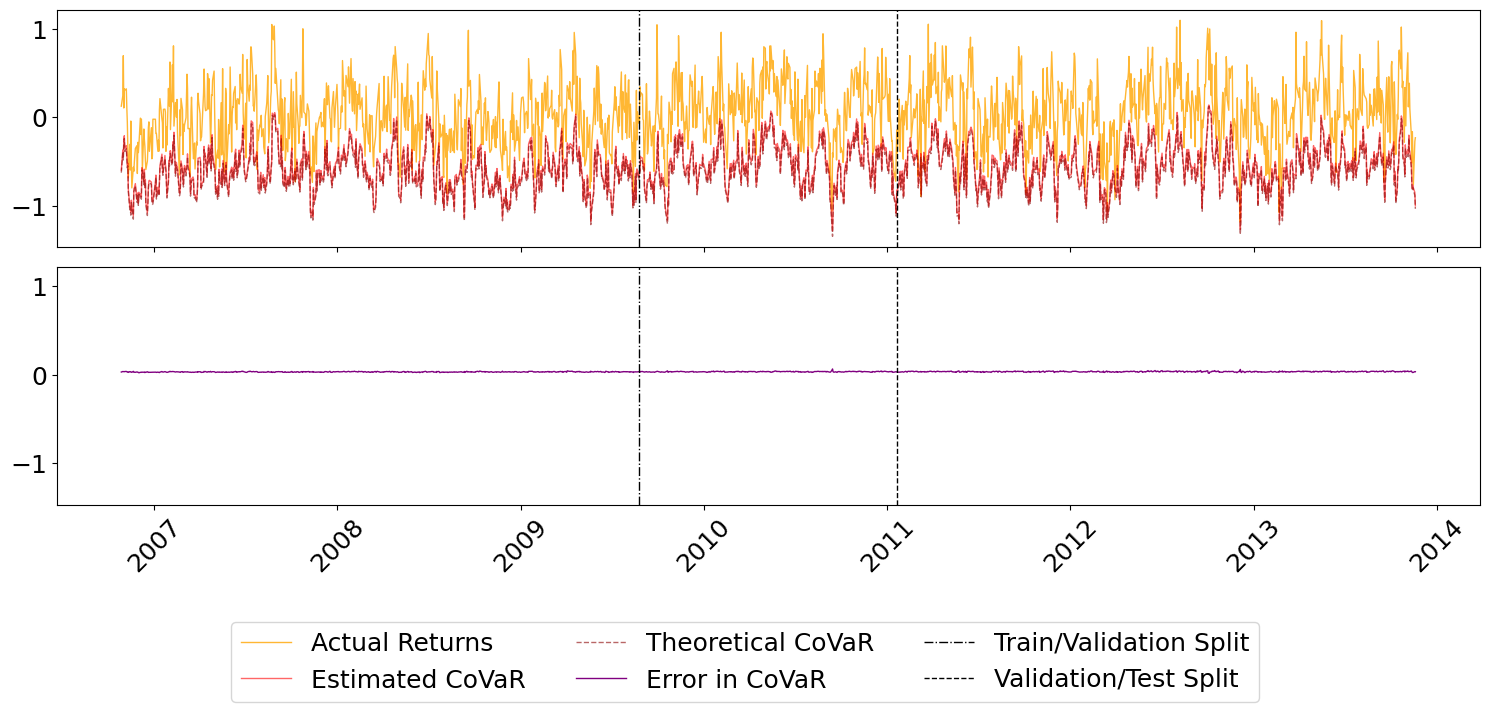}
    \caption*{\textit{Notes:} \textbf{Top panel:} Estimated CoVaR from the Transformer-based model. \textbf{Bottom panel:} Prediction error of CoVaR. Left and right vertical lines are the train-validation and validation-test splits, respectively.}
    \label{fig:simulation}
\end{figure}

We benchmark the Transformer-based model against a plain-vanilla MLP consisting of one hidden layer with 64 ReLU‐activated units (the model by \cite{2022_Keilbar_and_Wang}). We evaluate the predictive performance of both models using the mean absolute error (MAE) between the estimated CoVaR and its theoretical value at quantile $\tau = 5\%$:
\begin{equation}
    \mathrm{MAE}(\tau) = \frac{1}{T+1} \sum_{t=0}^{T} \bigl| \widehat{\mathrm{CoVaR}}_{t}^{(2\mid1)}(\tau) \; - \; {\mathrm{CoVaR}_{t}^{(2\mid1)}}(\tau) \bigr|.
\end{equation}

Table \ref{tab:MAE} presents the MAE results for different models. For the Transformer-based model, the MAE at the 5\% quantiles is close to zero. The MLP models, which rely only on time series inputs, are not affected by the additional noise introduced by the text data. In contrast, the Transformer-based model incorporates noisy textual information, yet it remains capable of capturing the underlying dependency between the two time series.
\begin{table}[ht]
    \centering
    \begin{tabular}{lc}
        \hline
        & \textbf{5\% Quantile} \\
        \hline
        Transformer-based &  0.023 \\
        MLP (ReLU)  & 0.039 \\
        \hline
    \end{tabular}
    \caption{Mean Absolute Error (MAE) at 5\% quantile level for the MLP and Transformer.}
    \label{tab:MAE}
\end{table}
These simulation results support the empirical performance of our Transformer-based model: it does not merely fit noise but can learn meaningful patterns.
    \section{Proof of Theorem \ref{thm:covar_convergence}} 
\label{sec:covar_consistency}
{ 
\begin{assumption}\label{assum:density_VaR}  
For each  $f \in \mathcal{F}$ with the given class $\mathcal{F}$ used for fitting, we have that 
$$
\mathbb{E} [f(\operatorname{VaR}_{-j,t}^{\tau},  E_{j,t}) - f_{j,\tau}^{*}(\operatorname{VaR}_{-j,t}^{\tau}, E_{j,t})])^2 \lesssim \mathbb{E}[f({R}_{-j,t},  E_{j,t}) - f_{j,\tau}^{*}({R}_{-j,t},  E_{j,t})]^2.
$$
\end{assumption}
Assumption~\ref{assum:density_VaR} essentially relates the mean squared error evaluated under VaR values to the $\ell_2$ norm bound derived in Lemma \ref{lemma:last}, and it holds trivially if the joint density of $(\operatorname{VaR}_{-j,t}^{\tau}, E_{j,t})$ upon its support is dominated by a constant times the joint density of $(R_{-j,t}, E_{j,t})$.

\begin{lemma}\label{lem:theo1}
    Suppose that the conclusion of Lemma \ref{lemma:last} applies for the training sample $(X_t, Y_t), t=1,\cdots, T$ with $X_t=(R_{-j,t}, E_{j,t}) , Y_t =R_{j,t}$, $ (R_{-j,t'}, E_{j,t'})$ is an independent copy of $X_t$, $ \delta_T>0,  \delta_T \rightarrow 0, \delta_T^2 T \rightarrow \infty$ as $T$ grows and Assumption \ref{assum:density_VaR} holds
    then with probability at least  $1-\exp (-\delta_T ^2 T)$, then
    \begin{equation}
        \begin{split}  
           \mathbb{E} (\widehat{f}_{j,\tau}( (VaR^\tau_{-j,t'}, E_{j,t'})) - f_{j,\tau}^*((VaR^\tau_{-j,t'}, E_{j,t'})))^2 \; \lesssim \; \inf_{f \in \mathcal{F}} \Bigl\|f - f_{j,\tau}^*\Bigl\|_{\infty}^2 +  \delta_T +  \sqrt{\frac{V_{\left(\mathcal{F},\|\cdot\|_L \infty\right)}(\delta_T)}{T}}. \nonumber
        \end{split}
    \end{equation}
\end{lemma}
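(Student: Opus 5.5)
The plan is to obtain Lemma \ref{lem:theo1} by chaining the in-sample-to-population $\ell_2$ bound of Lemma \ref{lemma:last} with the change-of-measure inequality in Assumption \ref{assum:density_VaR}. Lemma \ref{lemma:last} is assumed to apply to the training sample $(X_t,Y_t)$ with $X_t=(R_{-j,t},E_{j,t})$, $Y_t=R_{j,t}$. I take its conclusion to be the following statement: with probability at least $1-\exp(-\delta_T^2T)$, the quantile-regression estimator $\widehat f_{j,\tau}$ satisfies
\begin{equation*}
\mathbb{E}_{X}\bigl[\widehat f_{j,\tau}(X)-f^*_{j,\tau}(X)\bigr]^2 \lesssim \inf_{f\in\mathcal F}\|f-f^*_{j,\tau}\|_\infty^2+\delta_T+\sqrt{V_{(\mathcal F,\|\cdot\|_\infty)}(\delta_T)/T}.
\end{equation*}
Here $X$ is a fresh draw from the law of $X_t$, independent of the training sample. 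This is the standard oracle inequality from \cite{2022_Padilla_et_al}: the lower CDF bound in Assumption \ref{assp:cdf} gives local quadratic curvature of the excess quantile risk, and a covering argument at scale $\delta_T$ together with a Bernstein-type inequality controls the empirical process.

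The first step is to make the out-of-sample point precise. Because $(R_{-j,t'},E_{j,t'})$ is an independent copy of $X_t$, I condition on the training sample. Then $\widehat f_{j,\tau}$ is a fixed element of $\mathcal F$, and the expectation over the test point is exactly the population $\ell_2$ norm controlled by Lemma \ref{lemma:last}. I will state explicitly that the expectation in the lemma is over $t'$ only, conditional on the training data; this conditional quantity is random only through $\widehat f_{j,\tau}$.

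The second step is to apply Assumption \ref{assum:density_VaR} with $f=\widehat f_{j,\tau}$. This is legitimate because the assumption holds for every $f\in\mathcal F$ with a constant that does not depend on $f$, and, conditional on the training data, $\widehat f_{j,\tau}\in\mathcal F$ is deterministic. The assumption converts the error evaluated at $(\operatorname{VaR}^\tau_{-j,t'},E_{j,t'})$ into the error evaluated at $(R_{-j,t'},E_{j,t'})$. Combined with the first step, this gives the displayed bound on the same event of probability at least $1-\exp(-\delta_T^2T)$. The rate conditions $\delta_T\to0$ and $\delta_T^2T\to\infty$ are inherited from Lemma \ref{lemma:last}; they guarantee that the probability tends to one.

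The main obstacle is the interchange of "for each $f$" in Assumption \ref{assum:density_VaR} with the data-dependent $\widehat f_{j,\tau}$. Independence of the test copy from the training sample resolves it, so the care lies mainly in writing the expectations as conditional ones. If one instead wants to avoid relying on the assumption holding uniformly, a fallback is a density-ratio bound. It would suffice to bound the density of $(\operatorname{VaR}^\tau_{-j,t'},E_{j,t'})$ by a constant multiple of the density of $(R_{-j,t'},E_{j,t'})$; the remark after the assumption notes that such domination implies Assumption \ref{assum:density_VaR} for all $f$ at once.
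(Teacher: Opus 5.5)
Your route is the same as the paper's: Lemma \ref{lemma:last} bounds the population $\ell_2$ error under the law of $(R_{-j,t},E_{j,t})$, and Assumption \ref{assum:density_VaR} is then applied with $f=\widehat f_{j,\tau}$. Conditioning on the training sample, as you do, makes explicit a step the paper leaves implicit. However, you have assumed away the one step the paper's proof actually carries out. Lemma \ref{lemma:last} is not stated in terms of $\delta_T$. It holds for arbitrary $r_0>\delta>0$ and $\gamma>0$, with probability at least $1-\exp(-\gamma)$, and its bound contains the additional terms $r_0\sqrt{\gamma/T}+C_f\gamma/T$. When you write its conclusion directly as the $\delta_T$-bound at probability $1-\exp(-\delta_T^2T)$, you presuppose the calibration instead of deriving it.

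The fix is one line. Take $\delta=\delta_T$ and $\gamma=\delta_T^2T$. Then $r_0\sqrt{\gamma/T}=r_0\delta_T$, and $C_f\gamma/T=C_f\delta_T^2\le C_f\delta_T$ once $\delta_T\le 1$. Treating $r_0, C_f, \widetilde C_f, c$ as constants, both terms are absorbed into $\delta_T$, and the probability becomes $1-\exp(-\delta_T^2T)$. This is exactly where the hypotheses of Lemma \ref{lem:theo1} enter. The condition $\delta_T\to0$ guarantees $\delta_T<r_0$, which is needed to invoke Lemma \ref{lemma:last}, and also gives $C_f\delta_T^2\lesssim\delta_T$. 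The condition $\delta_T^2T\to\infty$ makes the probability tend to one. Your remark that these rate conditions are ``inherited from Lemma \ref{lemma:last}'' therefore has it backwards. Lemma \ref{lemma:last} contains no such sequence; the conditions are imposed in Lemma \ref{lem:theo1} precisely to make this choice of $(\delta,\gamma)$ admissible. With that step added, your argument coincides with the paper's.
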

\begin{proof}
    The following is directly implied by the conclusion of Lemma \ref{lemma:last}, the definition of $\|\cdot \|_{\ell_2}$ and Assumption \ref{assum:density_VaR}:  for an arbitrary $r_0>\delta >0$, and $\gamma > 0$, with probability at least $1-\exp (-\gamma)$, we have the following upper bound holds, $\left(\frac{\widetilde{C}_f^2}{c^2} + 1\right) \inf_{f \in \mathcal{F}} \Bigl\|f - f_\tau^*\Bigl\|_{\infty}^2 + \frac{\widetilde{C}_f}{c}\Biggl( \delta + r_0 \sqrt{\frac{V_{\left(\mathcal{F},\|\cdot\|_L \infty\right)}(\delta)}{T}} + r_0 \sqrt{\frac{\gamma}{T}} + \frac{C_f \gamma}{T} \Biggl)$,
    where $\widetilde{C}_f, c, V_{(\mathcal{F},\|\cdot\|_\infty)}(\delta)$ are as specified in  Lemma \ref{lemma:last}. Then the conclusion follows once we choose $\delta=\delta_T$ and $\gamma =\delta_T^2 T$.
\end{proof}
}

We now show the proof of Theorem \ref{thm:covar_convergence}.
\begin{proof}
    The proof proceeds in two parts: approximation error and estimation error.

    Recall the definition
    \begin{equation}
        \widehat{\operatorname{CoVaR}}_{j,t}^{\tau} = \widehat{f}_{j,\theta, \tau} \Big(\widehat{\operatorname{VaR}}_{-j,t}^{\tau},\, E_{j,t}\Big),
        \qquad
        \operatorname{CoVaR}_{j,t}^{\tau} = f_{j,\tau}^{*}\Big(\operatorname{VaR}_{-j,t}^{\tau},\,E_{j,t}\Big).
    \end{equation}
    By adding and subtracting $\widehat{f}_{j,\theta,\tau}(\operatorname{VaR}_{-j,t}^{\tau}, E_{j,t})$ and $f_{j,\tau}^{*}(\operatorname{VaR}_{-j,t}^{\tau}, E_{j,t})$, we obtain
    \begin{equation} \label{eq:covar_decomp}
        \begin{split}
            &\widehat{f}_{j,\theta,\tau}(\widehat{\operatorname{VaR}}_{-j,t}^{\tau}, E_{j,t}) - f_{j,\tau}^{*}(\operatorname{VaR}_{-j,t}^{\tau}, E_{j,t}) \\ 
            =\:& \Big[\widehat{f}_{j,\theta,\tau}(\widehat{\operatorname{VaR}}_{-j,t}^{\tau}, E_{j,t}) - \widehat{f}_{j,\theta,\tau}(\operatorname{VaR}_{-j,t}^{\tau}, E_{j,t})\Big] \\
            &\; + \Big[\widehat{f}_{j,\theta,\tau}(\operatorname{VaR}_{-j,t}^{\tau}, E_{j,t}) - f_{j,\tau}^{*}(\operatorname{VaR}_{-j,t}^{\tau}, E_{j,t})\Big] \nonumber \\
            &\; + \Big[f_{j,\tau}^{*}(\operatorname{VaR}_{-j,t}^{\tau}, E_{j,t}) - f_{j,\tau}^{*}(\widehat{\operatorname{VaR}}_{-j,t}^{\tau}, E_{j,t})\Big].
        \end{split}
    \end{equation}
    Taking absolute values and applying the triangle inequality yields
    \begin{equation} \label{eq:covar_triangle}
        \max_j \big| \widehat{\operatorname{CoVaR}}_{j,t}^{\tau}-\operatorname{CoVaR}_{j,t}^{\tau}
        \big| \le A_t + B_t + C_t,
    \end{equation}
    where $A_t, B_t, C_t$ denote the absolute values of the three bracketed terms in \eqref{eq:covar_decomp}. Under Assumption \ref{assp:Liptschiz}, we have both $f_{j,\tau}^{*}(\cdot, E_{j,t})$ and $\widehat{f}_{j,\theta,\tau}(\cdot, E_{j,t})$ are Lipschitz in their first argument with constants $L_r$ and $\widehat{L}_r$, then
    \begin{equation}
        \begin{split}
            A_t &= \max_j \big|\widehat{f}_{j,\theta,\tau}(\widehat{\operatorname{VaR}}_{-j,t}^{\tau}, E_{j,t})-\widehat{f}_{j,\theta,\tau}(\operatorname{VaR}_{-j,t}^{\tau}, E_{j,t})\big| \le \widehat L_r\, \big\|\widehat{\operatorname{VaR}}_{-j,t}^{\tau} - \operatorname{VaR}_{-j,t}^{\tau}\big\|_1, \\
            C_t &= \max_j \big|f_{j,\tau}^{*}(\operatorname{VaR}_{-j,t}^{\tau}, E_{j,t}) - f_{j,\tau}^{*}(\widehat{\operatorname{VaR}}_{-j,t}^{\tau}, E_{j,t})\big| \le L_r\, \big\|\widehat{\operatorname{VaR}}_{-j,t}^{\tau} - \operatorname{VaR}_{-j,t}^{\tau}\big\|_1.
        \end{split}
    \end{equation}
    Assumption \ref{assp:var} implies
    \begin{equation}
        A_t + C_t = O_p(\widehat{L}_r b_T).
    \end{equation}
 
  {Lemma \ref{lem:theo1} together with  Markov's inequality implies that, for any $M > 0$, with probability at least $1-\exp(-\delta_T^2 T)$, 
    \begin{equation}
        \mathrm{P}\left( \max_j \Big|\widehat{f}_{j,\theta,\tau}(\operatorname{VaR}_{-j,t}^{\tau}, E_{j,t}) - f_{j,\tau}^{*}(\operatorname{VaR}_{-j,t}^{\tau}, E_{j,t})\Big| > M \; \right) \le \frac{a_T^2}{M^2},
    \end{equation}
    with $a_T^2 = \inf_{f \in \mathcal{F}} \Bigl\|f - f_\tau^*\Bigl\|_{\infty}^2 +  \delta_T +  \sqrt{\frac{V_{\left(\mathcal{F},\|\cdot\|_L \infty\right)}(\delta_T)}{T}}.$
    }
    Hence,
    \begin{equation}
        B_t = \max_j \Big|\widehat{f}_{j,\theta,\tau}(\operatorname{VaR}_{-j,t}^{\tau}, E_{j,t}) - f_{j,\tau}^{*}(\operatorname{VaR}_{-j,t}^{\tau}, E_{j,t})\Big| = O_p(a_T^2),
    \end{equation}
    provided the evaluation is carried out out-of-sample (e.g. via cross-fitting).
    
    Combining terms 1--3 in \eqref{eq:covar_triangle}, we obtain
    \begin{equation}
    \max_j \Big| \widehat{\operatorname{CoVaR}}_{j,t}^{\tau}-\operatorname{CoVaR}_{j,t}^{\tau} \Big| = O_p(a_T^2) + O_p(b_T).
    \end{equation}
    By assumption, the VaR error is $O_p(T^{-1/2})$. And the convergence rate of the quantile function error $\left\|\widehat{f}_{j,N} - f_{j,\tau}^{*} \right\|_{\ell^2}$ with $\widehat{f}_{j,N}$ the estimator restricted to a finite $\delta$-covering of the function class is shown in Lemma \ref{lemma:last} of the supplementary material. We therefore verify the conditions of Lemma \ref{lemma:last}. 
\end{proof}

    \section{Proof of Corollary \ref{cor:convergence_rate}} \label{sec:corollary}

\begin{proof}
The proof proceeds by analyzing the trade-off between the approximation bias and the variance associated with the covering number of the architecture. Let $\lceil\cdot\rceil$ denotes the ceiling function and $\vee$ denotes the maximum operator. Define that $a_n \asymp b_n$ if $a_n \lesssim b_n$ and $b_n \lesssim a_n$ for two sequences. First, we characterize the Transformer complexity terms $\eta$ and $\widetilde{\eta}$ in Lemma \ref{lem:cover-tf-mlp}. From the definitions provided, both terms scale with the cubic root of the logarithmic dimensions of the input:
\begin{equation}
    \eta \asymp (\log(dn))^{1/3} \quad \text{and} \quad \widetilde{\eta} \asymp (\log(dn))^{1/3}.
\end{equation}
Thus, the cubic term appearing in the covering number bound satisfies:
\begin{equation} \label{eq:eta_log_relation}
    (\widetilde{\eta} + \eta)^3 \asymp \log(dn).
\end{equation}

Based on Theorem~\ref{thm:covar_convergence}, the error bound is governed by $\widetilde{R}_{j,\delta,T}$. Thus, by Lemma \ref{lem:cover-tf-mlp}, \ref{lemma:edelman_B2_continuous}, and let $D \asymp m \asymp \log (d_m)$, we obtain
\begin{equation}
    \widetilde{R}_{\delta,T} \lesssim_P d_{m}^{-2\beta/s} + \delta + r_0 \sqrt{\frac{d_{m} \log (d_m) \cdot \Big(\log \big( \delta^{-1} \, d \big) + \log^2 (d_{m}) \Big) + \frac{C_{\eta}\log(dn)}{\delta^2}}{T}},
\end{equation}
where $C_{\eta}$ is a constant independent of $T, n,$ and $d$. Using the inequality $\sqrt{a+b} \leq \sqrt{a} + \sqrt{b}$, we have:
\begin{equation} \label{eq:widetilde_R_decoupled}
    \widetilde{R}_{\delta, T} \lesssim_P d_{m}^{-2\beta/s} + \left( \delta + \frac{\sqrt{\log(dn)}}{\delta \sqrt{T}} \right) + \sqrt{\frac{d_{m} \log (d_m) \cdot \Big(\log \big( \delta^{-1} \, d \big) + \log^2 (d_{m}) \Big)}{T}}.
\end{equation}
We set $\delta \asymp T^{-1/4} (\log(dn))^{1/4}$. Since $D = 8 + (m+5)\left(1 + \left\lceil\log_2(s \vee \beta)\right\rceil\right)$, we have $D\asymp m\asymp \log(d_m)$. And $d_m$ will be chosen as a power of $T$ in the balancing step. Then, the bound simplifies to:
\begin{equation}
    \widetilde{R}_{\delta, T} \lesssim_P d_{m}^{-2\beta/s} + T^{-1/4}(\log(dn))^{1/4} + \sqrt{\frac{d_{m}  \log (d_m) \log^3 (dT)}{T}}.
\end{equation}

Pick $d_{m} \asymp \left( \frac{T}{\log^3 (dT)} \right)^{\frac{s}{s+4\beta}}$, and plug this into the error expression, we arrive at the final convergence rate:
\begin{equation} \label{eq:lemma_convergence_rate}
    \max_j \left[ \left|\operatorname{CoVaR}_{j,t}^{\tau} - \widehat{\operatorname{CoVaR}}_{j,t}^{\tau}\right|\right] = O_p\left( \left( \frac{T}{\log^3 (dT)} \right)^{-\frac{2\beta}{s+4\beta}} \right) + O_p\left( T^{-1/4}(\log(dn))^{1/4} \right) + L_{\mathcal{F}} b_T.
\end{equation}
This result demonstrates that the rate is governed by the intrinsic dimensionality $s$ and smoothness $\beta$.

In addition, the constant $L_{\mathcal{F}}$ characterizes the Lipschitz continuity of the architecture $\mathcal{F}$. Given a one-layer Transformer block with a $D$-depth MLP (as described in Section \ref{sec:bias} and Theorem 5 in \cite{2020_Schmidt-Hieber}), where $D = 8 + (m+5)\left(1 + \lceil\log_2(s \vee \beta)\rceil\right)$, we have
\begin{equation} 
    L_{\mathcal{F}} \le \left( L_\sigma^{D-1} \prod_{i=1}^{D} \|W_i\|_{\infty} \right) \cdot (B_V B_{KQ}).
\end{equation}
From Lemma \ref{lemma:edelman_B2_continuous}, the weights for each head $h \in \{1, \dots, s\}$ satisfy $\|W^{(Q)}_h\|_{2,1} \le \frac{\log(4n/\gamma)}{1-2\Delta}$, while other Transformer weights $\|W^{(K)}_h\|_{2,1}, \|W^{(V)}_h\|_{2,1}, \|W^{(O)}_h\|_{2,1} \le 1$. The MLP weights satisfy $\|W_i\|_\infty \le B$ for $i =1, \ldots, D$ and $B>0$. And $L_\sigma = 1$, we obtain:
\begin{equation} \label{eq:L_r_final}
    L_{\mathcal{F}} \le B^D \left( \frac{s^3 \log(4n/\gamma)}{1-2\Delta} \right) = O(B^D\log(n)).
\end{equation}
Recall that $D \asymp \log d_m$, and $d_m$ is chosen as a function of $T$. Looking at the last term in \eqref{eq:lemma_convergence_rate}, suppose that $b_T=T^{-\alpha}$ with $\alpha\ge \frac{2\beta}{s+4\beta}$. We choose the depth $D$ such that
\begin{equation}
    D \asymp \frac{\log d_m}{\log(2\vee B)}\cdot \frac{\alpha(s+4\beta)-2\beta}{s}.
\end{equation}
Then $B^D \le (2\vee B)^D \asymp d_m^{(\alpha(s+4\beta)-2\beta)/s}$. And we obtain
\begin{equation}
    B^D b_T \lesssim  \left(\frac{T}{\log^3(dT)}\right)^{\alpha-\frac{2\beta}{s+4\beta}} T^{-\alpha} = T^{-\frac{2\beta}{s+4\beta}}\cdot (\log(dT))^{-3\left(\alpha-\frac{2\beta}{s+4\beta}\right)}.
\end{equation}
Therefore, $L_{\mathcal F}b_T \lesssim T^{-\frac{2\beta}{s+4\beta}}\log(n)$ up to logarithmic factors. Then the rate in \eqref{eq:lemma_convergence_rate} becomes
\begin{equation}
    \max_j \left| \operatorname{CoVaR}_{j,t}^{\tau} - \widehat{\operatorname{CoVaR}}_{j,t}^{\tau} \right| = O_p\left( \left( \frac{T}{\log^3(dT)} \right)^{-\frac{2\beta}{s+4\beta}} + T^{-1/4}(\log(dn))^{1/4} + \log (n) \right).
\end{equation}
\end{proof}
    \section{Error Bounds for Quantile Regression Estimators}
\label{sec:error_bounds_quantile_estimator}

\textbf{General Setup.} In the following proof, let $\{(X_t, Y_t)\}_{t=1}^T$ be an i.i.d. random sample dataset of size $T$ with $(X_t, Y_t) \in \mathbb{R}^{d \times n} \times \mathbb{R}$, and let $(X,Y)$ be independent copy of $(X_t, Y_t)$. The proof is done assuming i.i.d. samples, the extention to, e.g., $m$-dependent series, are straightforward, see, e.g., \cite{li1992stability}.
Now we consider the target model of the following form:
\begin{equation} \label{model_in_lemma}
    Y_t = f_\tau^*(X_t) + U_{\tau, t},
\end{equation}
where $F_{U_{\tau, t} \mid X_t}^{-1}(\tau)=0$ and $\tau \in(0,1)$ is a quantile level. The objective is to approximate $f_\tau^*: \mathbb{R}^{d \times n} \rightarrow \mathbb{R}$, given as
\begin{equation}
    f_\tau^*(x_t)=F_{Y \mid X=x_t}^{-1}(\tau),
\end{equation}
where $F_{Y \mid X=x_t}$ denotes the conditional distribution of $Y$ given $X=x_t$. We estimate $f_\tau^*$ by minimizing the empirical quantile loss over a hypothesis class $\mathcal{F}$, i.e.,
\begin{equation}
    \widehat{f}=\underset{f \in \mathcal{F}}{\argmin} \sum_{t=1}^T \rho_\tau (Y_t - f(X_t)),
\end{equation}
where $\rho_\tau(u) = u\left(\tau-\mathbf{1}{\{u<0\}}\right)$ is the quantile loss function. In addition, let $f_T$ denote the best approximation to $f_\tau^*$ within the function class $\mathcal F$, defined as
\begin{equation}
    f_T := \argmin_{f \in \mathcal{F}} \mathbb{E}\left[\sum_{t=1}^T \rho_\tau\left(Y_t - f(X_t)\right) - \sum_{t=1}^T \rho_\tau\left(Y_t - f_\tau^*(X_t)\right)\right].
\end{equation}

Next, we state the assumptions and lemmas, followed by the proofs. The proof procedure follows the ideas of \cite{2020_Padilla_and_Chatterjee}, \cite{2022_Padilla_et_al}.

\begin{assumption} [\textbf{CDF and Density Boundedness}]
    \label{assp_lemma:cdf}
    There exists a constant $\kappa>0$ such that for any $\delta = (\delta_1, \cdots, \delta_T) \in \mathbb{R}^T$ satisfying $\|\delta\|_{\infty} \leq \kappa$ we have that
    \begin{equation}
        \left| F_{Y \mid X} (f_\tau^*(X) + \delta_t) - F_{Y \mid X} (f_\tau^*(X)) \right| \, \geq \, \underline{p} \cdot \left|{\delta_t} \right|, \quad \text {a.s.}
    \end{equation}
    for $t=1, \ldots, T$ and for some constant $\underline{p}>0$. We also require that
    \begin{equation}
        \sup_{z \in \mathbb{R}} p_{Y \mid X}(z) \leq \bar{p}, \quad \text {a.s.}
    \end{equation}
    for some constant $\bar{p} > 0$, where $p_{Y \mid X}$ is the probability density function of $Y$ conditioning on $X$.
\end{assumption}


\begin{lemma} \label{lemma:15_padilla}
    Suppose Assumption \ref{assp_lemma:cdf} holds, and assume $\left\| f \right\|_{\infty} \le C_f$ for $f \in \mathcal{F}$. Let $\widetilde{C}_f := \max\{1, 2C_f\}$, then we have
    \begin{equation}
        \Bigl\|f - f_T\Bigl\|_{\ell_2}^2 \leq \frac{\widetilde{C}_f}{c} \Biggl( \mathbb{E}\Bigl[ \rho_\tau\bigl(Y-f(X)\bigl) - \rho_\tau\bigl(Y - f_T(X)\bigl)\Bigl] + \Bigl\|f_T - f_\tau^*\Bigl\|_{\infty} \Bigl\|f - f_T\Bigl\|_{\ell_2} \Biggl),
    \end{equation}
    where $c \gtrsim \min\{\underline{p}, \underline{p}\kappa\}$ is a positive constant, and $\left\|f_T - f_\tau^*\right\|_{\infty} \leq c_0$ for a sufficiently small constant $c_0$.
\end{lemma}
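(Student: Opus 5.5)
The plan is to run the standard Knight-identity curvature argument used in \cite{2022_Padilla_et_al}. The pointwise excess quantile risk at $f_\tau^*$ is lower bounded by a truncated quadratic. We then transfer this from $f_\tau^*$ to $f_T$, using that $f_T$ is within $c_0$ of $f_\tau^*$ in sup-norm.

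\textbf{Step 1 (Knight's identity).} Write $\Delta(X) := f(X) - f_\tau^*(X)$ and $\Delta_T(X) := f_T(X) - f_\tau^*(X)$. Knight's identity gives
\begin{equation*}
\rho_\tau(u-v)-\rho_\tau(u) = -v(\tau-\mathbf{1}\{u<0\}) + \int_0^v (\mathbf{1}\{u\le s\}-\mathbf{1}\{u\le 0\})\,ds .
\end{equation*}
Apply it with $u = U_\tau = Y - f_\tau^*(X)$. Since $F_{U_\tau\mid X}(0)=\tau$, the linear term has zero conditional mean. This yields
\begin{equation*}
\mathbb{E}[\rho_\tau(Y-f(X))-\rho_\tau(Y-f_\tau^*(X))] = \mathbb{E}\int_0^{\Delta(X)} \bigl(F_{Y\mid X}(f_\tau^*(X)+s)-F_{Y\mid X}(f_\tau^*(X))\bigr)\,ds .
\end{equation*}

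\textbf{Step 2 (curvature).} By Assumption \ref{assp_lemma:cdf}, the integrand has absolute value at least $\underline p\,|s|$ for $|s|\le\kappa$. It also keeps the sign of $s$ for $|s|>\kappa$, by monotonicity of the CDF. Hence the integral is at least $\underline p \min\{\Delta^2/2,\ \kappa|\Delta|/2\}$. Because $\|f\|_\infty,\|f_\tau^*\|_\infty \le C_f$, we have $|\Delta|\le 2C_f$, and therefore $\min\{\Delta^2,\kappa|\Delta|\}\ge \Delta^2\min\{1,\kappa/(2C_f)\}$. This gives
\begin{equation*}
\mathcal{E}(f) := \mathbb{E}[\rho_\tau(Y-f)-\rho_\tau(Y-f_\tau^*)] \ge \frac{c}{\widetilde C_f}\,\|f-f_\tau^*\|_{\ell_2}^2,
\end{equation*}
with $c\gtrsim\min\{\underline p,\underline p\kappa\}$. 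This is where $\widetilde C_f=\max\{1,2C_f\}$ enters.

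\textbf{Step 3 (recentering at $f_T$).} The difference $\mathbb{E}[\rho_\tau(Y-f)-\rho_\tau(Y-f_T)]$ equals $\mathcal{E}(f)-\mathcal{E}(f_T)$. Rather than only lower bounding $\mathcal E(f)$, I would redo Step 1 with the expansion point $u=Y-f_T(X)$. The linear term is then $\mathbb{E}[(f-f_T)(X)\,(\tau - F_{Y\mid X}(f_T(X)))]$. Its magnitude is bounded by $\bar p\,\|f_T-f_\tau^*\|_\infty\,\mathbb E|f-f_T| \le \bar p\,\|f_T-f_\tau^*\|_\infty\|f-f_T\|_{\ell_2}$, using the density bound together with Cauchy--Schwarz. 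The integral term is $\int_0^{f-f_T}(F(f_T+s)-F(f_T))\,ds$. I would lower bound it as in Step 2, but centered at $f_T$.

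\textbf{Main obstacle.} The curvature condition is stated only at $f_\tau^*$, so it must be transferred to $f_T$. Write
\begin{equation*}
F(f_T+s)-F(f_T) = [F(f_\tau^*+\Delta_T+s)-F(f_\tau^*)] - [F(f_\tau^*+\Delta_T)-F(f_\tau^*)].
\end{equation*}
The first bracket is bounded below using the assumption. The second bracket is at most $\bar p\, c_0$ in absolute value. Integrating, any cross term linear in $|f-f_T|$ is again absorbed into $\|f_T-f_\tau^*\|_\infty\|f-f_T\|_{\ell_2}$, up to constants. The condition $\|f_T-f_\tau^*\|_\infty\le c_0$ with $c_0$ small (below $\kappa/2$) keeps the shifted arguments inside the $\kappa$-window. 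The result is a quadratic lower bound $\frac{c}{\widetilde C_f}\|f-f_T\|_{\ell_2}^2$ minus a multiple of $\|f_T-f_\tau^*\|_\infty\|f-f_T\|_{\ell_2}$. Rearranging and absorbing $\bar p$ into the constants gives the stated inequality.
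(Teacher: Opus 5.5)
Your Step~3, together with the ``main obstacle'' paragraph, is the paper's argument: Knight's identity expanded at $Y-f_T(X)$, the linear term bounded by $\bar p\,\|f_T-f_\tau^*\|_\infty\|f-f_T\|_{\ell_2}$ through the density bound and Cauchy--Schwarz, a truncated-quadratic lower bound on the integral term using $|f-f_T|\le 2C_f$, and absorption of the extra constants into $c$ (valid because $\mathbb{E}[\rho_\tau(Y-f(X))-\rho_\tau(Y-f_T(X))]\ge 0$ by the definition of $f_T$). The one difference is in your favour: the paper applies the curvature bound of Assumption~\ref{assp_lemma:cdf} directly at $f_T(X)$, although it is only assumed at $f_\tau^*(X)$, and it never uses $\|f_T-f_\tau^*\|_\infty\le c_0$, whereas your transfer via the density bound justifies that step at the cost of another cross term of the same form (for $|s|<|\Delta_T|$, where $s$ and $\Delta_T+s$ can have opposite signs, you must use $\bar p$ rather than $\underline p$, but this again only produces such a term); your Step~2 can simply be dropped, since it would additionally need $\|f_\tau^*\|_\infty\le C_f$, which is not assumed.
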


\begin{proof}
\cite{1998_Knight}’s identity states that for any two scalars $u$ and $w$, $\rho_{\tau}(u-w) - \rho_{\tau}(u) = -w(\tau - 1\{u \leq 0\}) + \int_0^w(1\{u \leq s\} - 1\{u \leq 0\}) \mathrm{d}s$. Let $w_{T}(X) := f(X)-f_T(X)$. Therefore,
\begin{equation}
    \begin{alignedat}{2}
        &\rho_\tau\bigl(Y - f(X)\bigl) - \rho_\tau\bigl(Y - f_T(X)\bigl)& 
        =& -w_{T}(X) \cdot \Bigl( \tau - 1\bigl\{Y \leq f_T(X)\bigl\} \Bigl) \\
        &&&+ \int_0^{w_{T}(X)} \Bigl( 1\bigl\{Y \leq f_T(X) + z\bigl\} - 1\bigl\{Y \leq f_T(X)\bigl\} \Bigl) \mathrm{d}z \\
        &&=& -w_{T}(X) \Bigl( \tau - 1\bigl\{Y \leq f^*_{\tau}(X)\bigl\} \Bigl) \\
        &&&- w_{T}(X) \Bigl( 1\bigl\{Y \leq f^*_{\tau}(X)\bigl\} - 1\bigl\{Y \leq f_T(X)\bigl\} \Bigl)\\
        &&&+ \int_0^{w_{T}(X)} \Bigl(1\bigl\{Y \leq f_T(X) + z\bigl\} - 1\bigl\{Y \leq f_T(X)\bigl\}\Bigl) \mathrm{d}z.
    \end{alignedat}
\end{equation}
Taking expectations and using Fubini’s theorem yields
\begin{equation} \label{eq:loss-loss_lemma1}
    \begin{alignedat}{2}
        &\mathbb{E}\Bigl[\rho_\tau\bigl(Y - f(X)\bigl) - \rho_\tau\bigl(Y - f_T(X)\bigl) \Bigl]& 
        =& \; \mathbb{E}\Bigl[ -w_{T}(X) \cdot\Bigl( \tau - 1\bigl\{Y \leq f^*_{\tau}(X)\bigl\} \Bigl) \Bigl] \\
        &&&- \mathbb{E}\Bigl[ w_{T}(X) \cdot \Bigl( 1\bigl\{Y \leq f^*_{\tau}(X)\bigl\} - 1\bigl\{Y \leq f_T(X)\bigl\} \Bigl)\Bigl]\\
        &&&+ \mathbb{E}\Biggl[ \int_0^{w_{T}(X)} \Bigl(1\bigl\{Y \leq f_T(X) + z\bigl\} - 1\bigl\{Y \leq f_T(X)\bigl\}\Bigl) \mathrm{d}z \Biggl] \\
        &&=& \; \mathbb{E}\Biggl[ -w_{T}(X) \cdot \mathbb{E}\Bigl[ \tau - 1\bigl\{Y \leq f^*_{\tau}(X)\bigl\} \Bigl| X \Bigl] \Biggl] \\
        &&&- \mathbb{E}\Biggl[ w_{T}(X) \cdot \mathbb{E}\Bigl[ 1\bigl\{Y \leq f^*_{\tau}(X)\bigl\} - 1\bigl\{Y \leq f_T(X)\bigl\} \Bigl|X \Bigl] \Biggl]\\
        &&&+ \mathbb{E}\Biggl[ \int_0^{w_{T}(X)} \mathbb{E}\Bigl[ 1\bigl\{Y \leq f_T(X) + z\bigl\} \Bigl| X \Bigl] - \mathbb{E}\Bigl[1\bigl\{Y \leq f_T(X)\bigl\} \Bigl| X \Bigl] \mathrm{d}z\Biggl] \\
        &&=& \; \mathbb{E}\Biggl[ -w_{T}(X) \cdot \Bigl( \tau - F_{Y|X}\bigl(f^*_{\tau}(X)\bigl) \Bigl) \Biggl] \\
        &&&- \mathbb{E}\Biggl[ w_{T}(X) \cdot \Bigl( F_{Y|X}(f^*_{\tau}(X)) - F_{Y|X}(f_T(X)) \Bigl) \Biggl]\\
        &&&+ \mathbb{E}\Biggl[ \int_0^{w_{T}(X)} \Bigl( F_{Y|X}(f_T(X)+z) - F_{Y|X}(f_T(X)) \Bigl) \mathrm{d}z \Biggl].
    \end{alignedat}
\end{equation}
The first term is zero because $f^*_{\tau}(X)$ is exactly the $\tau$-th quantile. And we can simplify the second term by Jensen's inequality and the triangle inequality:
\begin{equation}
    \begin{split}
        \Biggl| \mathbb{E}\Bigl[ w_{T}(X) \cdot \Bigl( F_{Y|X}(f^*_{\tau}(X)) - F_{Y|X}(f_T(X) \Bigl) \Bigl] \Biggl| 
        &\le \mathbb{E}\Biggl[ \Biggl| w_{T}(X) \cdot \Bigl( F_{Y|X}(f^*_{\tau}(X)) - F_{Y|X}(f_T(X)) \Bigl) \Biggl| \Biggl] \\
        &\le \mathbb{E}\Bigl[ \Bigl|w_{T}(X)\Bigl| \cdot \Bigl| F_{Y|X}(f^*_{\tau}(X)) - F_{Y|X}(f_T(X)) \Bigl| \Bigl] \\
        &\le \bar{p} \; \mathbb{E}\Bigl[ \bigl|w_{T}(X)\bigl| \cdot \bigl|f_\tau^*(X) - f_T(X)\bigl| \Bigl],
    \end{split}
\end{equation}
where $\bar{p} > 0$ is the upper bound constant on the conditional probability density from Assumption \ref{assp_lemma:cdf}. For the third term, we need to consider two cases: $\left|w_T(X)\right| \le \kappa$ and $\left|w_T(X)\right| > \kappa$. When $\left|w_T(X)\right| \le \kappa$, the lower bound assumption in Assumption \ref{assp_lemma:cdf} ensures that the conditional CDF is locally Lipschitz with slope at least $\underline{p} > 0$. Therefore, we have
\begin{equation}
    \begin{split}
        \mathbb{E}\Biggl[ \int_0^{w_{T}(X)} \Bigl( F_{Y|X}(f_T(X)+z) - F_{Y|X}(f_T(X)) \Bigl) \mathrm{d}z \Biggl] 
        &\ge \mathbb{E}\Biggl[ \int_0^{w_{T}(X)} \underline{p} |z| \; \mathrm{d}z \Biggl] \\
        &\ge \mathbb{E}\Biggl[ \Bigl[ \frac{1}{2} \underline{p} z^2 \Bigl]^{w_{T}(X)}_0 \Biggl] \\
        &= \frac{1}{2} \underline{p} \;\mathbb{E}\Bigl[ w_{T}^2(X) \Bigl].
    \end{split}
\end{equation}
Suppose then that $w_T(X) > \kappa$. Then, we restrict the integration to the interval $\left[\kappa / 2, w_T(X)\right]$ to ensure that the lower bound applies uniformly. By monotonicity of the CDF, this gives
\begin{equation}
    \begin{split}
        &\mathbb{E}\Biggl[ \int_0^{w_{T}(X)} \Bigl( F_{Y|X}(f_T(X)+z) - F_{Y|X}(f_T(X)) \Bigl) \mathrm{d}z \Biggl] \\
        \ge& \; \mathbb{E}\Biggl[ \int_{\kappa/2}^{w_{T}(X)} \Bigl( F_{Y|X}(f_T(X)+z) - F_{Y|X}(f_T(X)) \Bigl) \mathrm{d}z \Biggl] \\
        \ge& \; \mathbb{E}\Biggl[ (w_T(X)- \kappa/2) \; \Bigl( F_{Y|X}(f_T(X)+\kappa/2) - F_{Y|X}(f_T(X)) \Bigl) \Biggl] \\
        \ge& \; \mathbb{E}\Biggl[(w_T(X)- \kappa/2) \cdot \underline{p} \left|\frac{\kappa}{2}\right| \Biggl]\\
        \ge& \; \frac{\underline{p} \kappa}{4} \; \mathbb{E}\bigl[w_T(X)\bigl].
    \end{split}
\end{equation}
The first inequality holds because the integrand is nonnegative for positive $z$, the second inequality is due to bounding the integral by the product of its length and the minimum value on the interval, the third inequality applies the lower bound from Assumption \ref{assp_lemma:cdf}, and the last inequality uses $w_T(X) > \kappa \Rightarrow w_T(X) - \kappa/2 > w_T(X)/2$. Note that the case $w_T(X) < -\kappa$ is symmetric, and can be covered by taking the absolute value of $w_T(X)$. Combining both regimes, we obtain
\begin{equation}
    \begin{split}
        \mathbb{E}\Biggl[ \int_0^{w_{T}(X)} \Bigl( F_{Y|X}(f_T(X)+z) - F_{Y|X}(f_T(X)) \Bigl) \mathrm{d}z \Biggl] 
        &\ge \mathbb{E}\Bigl[ \min \bigl\{ \frac{\underline{p}}{2} \, w_T^2(X), \frac{\underline{p} \kappa}{4} \, \left|w_T(X)\right| \bigl\} \Bigl] \\
        &\ge c \cdot \mathbb{E}\Bigl[ \min \bigl\{ w_T^2(X), \left|w_T(X)\right| \bigl\} \Bigl],
    \end{split}
\end{equation}
where $c>0$ and $c \asymp \min \{\underline{p}, \underline{p} \kappa \}$. Recall $w_{T}(X) = f(X) - f_T(X)$, and $ \| w_{T}(X) \|_{\infty} \le 2C_f$. Therefore, 
\begin{equation}
    \begin{split}
        \min \left\{w_T^2(X), \left|w_T(X)\right|\right\} 
        &= w_T^2(X) \cdot \mathbf{1}{\{|w_T(X)| \leq 1\}} + \left| w_T(X) \right| \cdot \mathbf{1}{\{|w_T(X)|>1\}} \\
        &= w_T^2(X) \cdot \mathbf{1}{\{|w_T(X)| \leq 1\}} + \frac{\left| w_T(X) \right|}{w_T^2(X)} w_T^2(X) \cdot \mathbf{1}{\{|w_T(X)|>1\}} \\
        &\ge w_T^2(X) \cdot \mathbf{1}{\{|w_T(X)| \leq 1\}} + \frac{1}{2C_f} w_T^2(X) \cdot \mathbf{1}{\{|w_T(X)|>1\}} \\
        &\ge \min \{1, \frac{1}{2C_f}\} \cdot w_T^2(X).
    \end{split}
\end{equation}
Then, \eqref{eq:loss-loss_lemma1} becomes
\begin{equation}
    \begin{alignedat}{2}
        &\mathbb{E}\Bigl[\rho_\tau\bigl(Y - f(X)\bigl) - \rho_\tau\bigl(Y - f_T(X)\bigl) \Bigl]
        &\ge& - \bar{p} \; \mathbb{E}\Bigl[ \bigl|w_{T}(X)\bigl| \cdot \bigl|f_\tau^*(X) - f_T(X)\bigl| \Bigl] + \frac{c}{\widetilde{C}_f} \, \mathbb{E}\Bigl[w_{T}^2(X)\Bigl] \\
        &&\ge& - \bar{p} \; \sqrt{\mathbb{E}\Bigl[\bigl|w_{T}(X)\bigl|^2\Bigl]} \sqrt{\mathbb{E}\Bigl[\bigl|f_\tau^*(X) - f_T(X)\bigl|^2\Bigl]} \\ 
        &&&+ \frac{c}{\widetilde{C}_f} \, \mathbb{E}\Bigl[w_{T}^2(X)\Bigl] \\
        &&\ge& - \bar{p} \; \Bigl\|f_T - f_\tau^*\Bigl\|_{\infty} \Bigl\|f - f_T\Bigl\|_{\ell_2} + \frac{c}{\widetilde{C}_f} \Bigl\|f - f_T\Bigl\|_{\ell_2}^2.
    \end{alignedat}
\end{equation}
The second inequality is by Cauchy–Schwarz, and $\widetilde{C}_f := \max\{1, 2C_f\}$. Rearranging gives us
\begin{equation}
    \Bigl\|f - f_T\Bigl\|_{\ell_2}^2 \leq \frac{\widetilde{C}_f}{c} \Biggl( \mathbb{E}\Bigl[ \rho_\tau\bigl(Y-f(X)\bigl) - \rho_\tau\bigl(Y - f_T(X)\bigl)\Bigl] + \Bigl\|f_T - f_\tau^*\Bigl\|_{\infty} \Bigl\|f - f_T\Bigl\|_{\ell_2} \Biggl),
\end{equation}
where $c$ absorbs the other constants. This completes the proof.
\end{proof}

\begin{lemma} \label{lemma:17_padilla}
    Suppose Lemma \ref{lemma:15_padilla} holds. Let $M_T(f) := \mathbb E\big[\rho_\tau(Y - f(X)) - \rho_\tau(Y - f_T(X))\big]$ and $\widehat{M}_T(f)$ be its empirical counterpart. Then, we have
    \begin{equation}
        \Bigl\|\widehat{f} - f_{\tau}^*\Bigl\|_{\ell_2}^2 \; \le \; 2\left(\frac{\widetilde{C}_f^2}{c^2} + 1\right) \inf_{f \in \mathcal{F}} \Bigl\|f - f_\tau^*\Bigl\|_{\infty}^2 + \frac{4\widetilde{C}_f}{c}\Biggl( M_T(\widehat{f}) - \widehat{M}_T(\widehat{f}) \Biggl),
    \end{equation}
    where $c \gtrsim \min\{\underline{p}, \underline{p}\kappa\}$, $\widetilde{C}_f = \max\{1, 2C_f\}$, and $\inf_{f\in\mathcal{F}}\|f-f_\tau^*\|_\infty^2 < c_0$ with a sufficiently small $c_0 > 0$.
\end{lemma}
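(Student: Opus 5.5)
The argument is a standard basic-inequality argument combined with Lemma \ref{lemma:15_padilla}, applied with $f=\widehat f$. The three ingredients are: the definition of $\widehat f$ as empirical risk minimizer, the fact that $f_T$ minimizes the population excess loss $M_T$ over $\mathcal F$, and an AM--GM step to absorb the cross term.

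\textbf{Step 1 (reduce to $\widehat f - f_T$).} Write
\[
\|\widehat f - f_\tau^*\|_{\ell_2}^2 \le 2\|\widehat f - f_T\|_{\ell_2}^2 + 2\|f_T - f_\tau^*\|_{\ell_2}^2,
\]
and bound the last term by $2\|f_T-f_\tau^*\|_\infty^2$. We then need to relate $\|f_T-f_\tau^*\|_\infty$ to $\inf_{f\in\mathcal F}\|f-f_\tau^*\|_\infty$. Knight's identity with $w=f-f_\tau^*$ and the density bound $\bar p$ give
\[
M_T(f)+\mathbb E[\rho_\tau(Y-f_T)-\rho_\tau(Y-f^*_\tau)] \le \tfrac{\bar p}{2}\|f-f_\tau^*\|_{\infty}^2 .
\]
Hence the population excess risk of $f_T$ is at most $\tfrac{\bar p}{2}\inf_f\|f-f_\tau^*\|_\infty^2$. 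The quadratic lower bound from the proof of Lemma \ref{lemma:15_padilla}, applied with $f_\tau^*$ in place of $f_T$ (using smallness $c_0$ so that we are in the $|w|\le\kappa$ regime), converts this into an $\ell_2$ bound $\|f_T-f_\tau^*\|_{\ell_2}^2\lesssim \frac{\widetilde C_f}{c}\inf_f\|f-f_\tau^*\|_\infty^2$. This is the version actually needed. Where the sup norm of $f_T-f^*_\tau$ appears, I would replace $f_T$ by a near-minimizer $\bar f\in\mathcal F$ of $\|f-f^*_\tau\|_\infty$ in the cross term of Lemma \ref{lemma:15_padilla}; that lemma's proof goes through for any comparison element in $\mathcal F$, up to $M_T(\bar f)\le 0$ being replaced by the bias term.

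\textbf{Step 2 (apply Lemma \ref{lemma:15_padilla} with $f=\widehat f$).} This gives
\[
\|\widehat f-f_T\|_{\ell_2}^2\le \frac{\widetilde C_f}{c}\Big(M_T(\widehat f)+\|f_T-f^*_\tau\|_\infty\|\widehat f-f_T\|_{\ell_2}\Big).
\]
Next use $\widehat M_T(\widehat f)\le \widehat M_T(f_T)=0$, since $\widehat f$ minimizes empirical risk and $\widehat M_T(f_T)$ is a difference of identical terms. Therefore $M_T(\widehat f)\le M_T(\widehat f)-\widehat M_T(\widehat f)$.

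\textbf{Step 3 (absorb the cross term).} Apply $ab\le \tfrac12 a^2+\tfrac12 b^2$ with $a=\|\widehat f-f_T\|_{\ell_2}$ and $b=\frac{\widetilde C_f}{c}\|f_T-f^*_\tau\|_\infty$. After moving the $a^2/2$ term to the left side,
\[
\|\widehat f-f_T\|_{\ell_2}^2\le \frac{\widetilde C_f^2}{c^2}\|f_T-f_\tau^*\|_\infty^2+\frac{2\widetilde C_f}{c}\big(M_T(\widehat f)-\widehat M_T(\widehat f)\big).
\]
Plugging this into Step 1 yields the constants $2(\widetilde C_f^2/c^2+1)$ and $4\widetilde C_f/c$ claimed in the lemma, once the bias term is written as $\inf_f\|f-f_\tau^*\|_\infty^2$.

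\textbf{Main obstacle.} The delicate point is exactly this identification of $\|f_T-f^*_\tau\|_\infty$ with $\inf_{f}\|f-f^*_\tau\|_\infty$. Since $f_T$ minimizes the population quantile loss rather than the sup distance, this is not automatic. If the detour in Step 1 is too lossy, I would instead run Steps 2--3 with $f_T$ replaced throughout by the sup-norm best approximant $\bar f$. The basic inequality then reads $\widehat M(\widehat f)\le \widehat M(\bar f)$. This introduces an extra centered term $\widehat M(\bar f)-M(\bar f)$ together with $M(\bar f)\lesssim \bar p\|\bar f-f^*_\tau\|_\infty^2$. The centered term is of the same order as the empirical-process term, and the $\bar p$-term is absorbed into the bias constant.
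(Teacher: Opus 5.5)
Your Steps 2--3 and the closing triangle inequality are exactly the paper's proof. These are Lemma \ref{lemma:15_padilla} at $f=\widehat f$, the basic inequality $\widehat M_T(\widehat f)\le \widehat M_T(f_T)=0$, Young's inequality, and $(u+v)^2\le 2u^2+2v^2$. Together they give the stated constants with $\|f_T-f_\tau^*\|_\infty^2$ in the bias slot. The paper then simply asserts $\|f_T-f_\tau^*\|_\infty^2\le\inf_{f\in\mathcal F}\|f-f_\tau^*\|_\infty^2$ because $f_T$ is ``the best approximation''. You are right that this does not follow from defining $f_T$ as the population quantile-risk minimizer.

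However, neither of your repairs proves the inequality as stated. The Step 1 detour controls only $\|f_T-f_\tau^*\|_{\ell_2}$, while the cross term of Lemma \ref{lemma:15_padilla} as stated still carries $\|f_T-f_\tau^*\|_\infty$. The detour can be made to work by keeping the $\ell_2$ norm that the proof of Lemma \ref{lemma:15_padilla} actually produces via Cauchy--Schwarz, but only with different, $\bar p$-dependent constants. Your fallback with the sup-norm approximant $\bar f$ leaves the extra term $\frac{4\widetilde C_f}{c}\bigl(\widehat M_T(\bar f)-M_T(\bar f)\bigr)$ on the right-hand side. Saying it is ``of the same order'' as the empirical-process term does not give the claimed bound, which contains no such term.

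The missing observation is a sign. You write $M_T(\bar f)\le 0$, but $f_T$ minimizes $\mathbb E\,\rho_\tau(Y-f(X))$ over $\mathcal F$ and $\bar f\in\mathcal F$, so in fact $M_T(\bar f)\ge 0$, which is the favorable direction. Take $\bar f\in\mathcal F$ with $\|\bar f-f_\tau^*\|_\infty^2\le \inf_{f\in\mathcal F}\|f-f_\tau^*\|_\infty^2+\epsilon$. The proof of Lemma \ref{lemma:15_padilla} uses no optimality property of $f_T$, only membership in $\mathcal F$ and closeness to $f_\tau^*$. So it holds with $\bar f$ as the comparison point, provided you keep the basic inequality against $f_T$ rather than against $\bar f$:
\[
\mathbb E\bigl[\rho_\tau(Y-\widehat f(X))-\rho_\tau(Y-\bar f(X))\bigr]=M_T(\widehat f)-M_T(\bar f)\le M_T(\widehat f)\le M_T(\widehat f)-\widehat M_T(\widehat f).
\]
Running your Steps 2--3 and the triangle inequality with $\bar f$ in place of $f_T$ then gives exactly $2(\widetilde C_f^2/c^2+1)\|\bar f-f_\tau^*\|_\infty^2+\frac{4\widetilde C_f}{c}\bigl(M_T(\widehat f)-\widehat M_T(\widehat f)\bigr)$. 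Letting $\epsilon\to 0$ yields the lemma with its stated constants. The hypothesis $\inf_{f}\|f-f_\tau^*\|_\infty^2<c_0$ then supplies the smallness of the comparison point that Lemma \ref{lemma:15_padilla} needs, and no extra centered term appears. This also closes the unjustified last step of the paper's own proof.
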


\begin{proof}
By Lemma \ref{lemma:15_padilla} we have that
\begin{equation}
    \begin{alignedat}{2}
        &\Bigl\|\widehat{f} - f_T\Bigl\|_{\ell_2}^2 
        &\leq& \frac{\widetilde{C}_f}{c} \Biggl( \mathbb{E}\Bigl[ \rho_\tau\bigl(Y-\widehat{f}(X)\bigl) - \rho_\tau\bigl(Y - f_T(X)\bigl)\Bigl] + \Bigl\|f_T - f_\tau^*\Bigl\|_{\infty} \Bigl\|\widehat{f} - f_T\Bigl\|_{\ell_2} \Biggl)\\
        &&\leq& \frac{\widetilde{C}_f}{c} \Biggl( \mathbb{E}\Bigl[ \rho_\tau\bigl(Y-\widehat{f}(X)\bigl) - \rho_\tau\bigl(Y - f_T(X)\bigl)\Bigl] + \Bigl\|f_T - f_\tau^*\Bigl\|_{\infty} \Bigl\|\widehat{f} - f_T\Bigl\|_{\ell_2} \\
        &&&- \frac{1}{T} \sum_{t=1}^T \rho_\tau\left(y_t - \widehat{f}(x_t)\right) + \frac{1}{T} \sum_{t=1}^T \rho_\tau\Bigl(y_t - f_T(x_t)\Bigl) \Biggl).
    \end{alignedat}
\end{equation}
The last inequality follows from the fact $\widehat{f}$ is the global empirical risk minimizer over the sample. Denote
\begin{equation}
    M_T(f) := \mathbb E\big[\rho_\tau(Y - f(X)) - \rho_\tau(Y - f_T(X))\big],
\end{equation}
and 
\begin{equation}
    \widehat{M}_T(f) := \frac{1}{T}\sum_{t=1}^T \rho_\tau\bigl(y_t - f(x_t)\bigr) - \rho_\tau\bigl(y_t - f_T(x_t)\bigr).
\end{equation}
Note that $(X,Y)$ is independent copy of $(x_t, y_t)$.
Then, we have
\begin{equation}
    \Bigl\|\widehat{f} - f_T\Bigl\|_{\ell_2}^2 \le \frac{\widetilde{C}_f}{c} \Biggl( M_T(\widehat{f}) - \widehat{M}_T(\widehat{f}) + \Bigl\|f_T - f_\tau^*\Bigl\|_{\infty} \Bigl\|\widehat{f} - f_T\Bigl\|_{\ell_2} \Biggl).
\end{equation}
By Young's inequality,
\begin{equation}
    \Bigl\|f_T - f_\tau^*\Bigl\|_{\infty} \Bigl\|\widehat{f} - f_T\Bigl\|_{\ell_2} \le \frac{\widetilde{C}_f}{2c} \Bigl\|f_T - f_\tau^*\Bigl\|_{\infty}^2 + \frac{c}{2\widetilde{C}_f} \Bigl\|\widehat{f} - f_T\Bigl\|_{\ell_2}^2.
\end{equation}
Substituting this bound and rearranging gives
\begin{equation} \label{eq:f_hat-f_T}
    \begin{split}
        \Bigl\|\widehat{f} - f_T\Bigl\|_{\ell_2}^2 
        &\le \frac{\widetilde{C}_f}{c} \Biggl( M_T(\widehat{f}) - \widehat{M}_T(\widehat{f}) + \frac{\widetilde{C}_f}{2c} \Bigl\|f_T - f_\tau^*\Bigl\|_{\infty}^2 + \frac{c}{2\widetilde{C}_f} \Bigl\|\widehat{f} - f_T\Bigl\|_{\ell_2}^2 \Biggl) \\
        &\le \frac{2\widetilde{C}_f}{c}\Bigl( M_T(\widehat{f}) - \widehat{M}_T(\widehat{f}) \Bigl) + \frac{\widetilde{C}_f^2}{c^2}\Bigl\|f_T - f_\tau^*\Bigl\|_{\infty}^2.
    \end{split}
\end{equation}
By the triangle inequality and $(u+v)^2 \leq 2u^2+2v^2$, we obtain
\begin{equation} \label{eq:f_hat-f*}
    \Bigl\|\widehat{f} - f_{\tau}^*\Bigl\|_{\ell_2}^2 \; = \; \Bigl\|\widehat{f} - f_T + f_T -f^*_{\tau}\Bigl\|_{\ell_2}^2 \; \le \; 2\Bigl\| \widehat{f} - f_T \Bigl\|_{\ell_2}^2 + 2\Bigl\|f_T - f_\tau^*\Bigl\|_{\ell_2}^2.
\end{equation}
Replacing $\Bigl\| \widehat{f} - f_T \Bigl\|_{\ell_2}^2$ in \eqref{eq:f_hat-f*} with \eqref{eq:f_hat-f_T} yields
\begin{equation}
    \Bigl\|\widehat{f} - f_{\tau}^*\Bigl\|_{\ell_2}^2 \; \le \; \frac{4\widetilde{C}_f}{c}\Biggl( M_T(\widehat{f}) - \widehat{M}_T(\widehat{f}) \Biggl) + 2\left(\frac{\widetilde{C}_f^2}{c^2}+1\right) \Bigl\|f_T - f_\tau^*\Bigl\|_{\infty}^2,
\end{equation}
where we use the fact that the $\ell_2$ norm is bounded above by the sup-norm under this case. Finally, since $f_T$ is the best approximation of $f^*_{\tau} \in \mathcal{F}$, we have $\left\|f_T - f_\tau^*\right\|_{\infty}^2 \leq \inf_{f \in \mathcal{F}} \left\|f - f_\tau^*\right\|_{\infty}^2$. Thus,
\begin{equation}
    \Bigl\|\widehat{f} - f_{\tau}^*\Bigl\|_{\ell_2}^2 \; \le \; 2\left(\frac{\widetilde{C}_f^2}{c^2} + 1\right) \inf_{f \in \mathcal{F}} \Bigl\|f - f_\tau^*\Bigl\|_{\infty}^2 + \frac{4\widetilde{C}_f}{c}\Biggl( M_T(\widehat{f}) - \widehat{M}_T(\widehat{f}) \Biggl).
\end{equation}
This completes the proof.
\end{proof}

\begin{lemma}\label{lemma:19_padilla}
    Suppose that Lemmas \ref{lemma:15_padilla}-\ref{lemma:17_padilla} hold. Let $\|\cdot\|_T$ denote the empirical $\ell_2$--norm on the sample $\{(X_t, Y_t)\}_{t=1}^T$ of i.i.d. draws from the joint distribution of $(X,Y)$. Suppose $\max\{\|\widehat f - f_T\|_{\ell_2}, \; \|\widehat f - f_T\|_T\}\le r_0$ for $r_0>0$. Then, with probability at least $1-e^{-\gamma}$, we have
    \begin{equation}
        M_T(\widehat f)-\widehat M_T(\widehat f) \;\lesssim\; \delta \;+\; r_0\sqrt{\frac{V_{(\mathcal F,\|\cdot\|_\infty)}(\delta)}{T}} \;+\; r_0\sqrt{\frac{2\gamma}{T}} \;+\; \frac{16C_f\,\gamma}{3T},
    \end{equation}
    for any $\delta\in(0,r_0)$ and any $\gamma>0$, and $V_{(\mathcal F,\|\cdot\|_\infty)}(\delta):=\log\mathcal N(\delta,\mathcal F,\|\cdot\|_\infty)$ is the covering number under the sup-norm.
\end{lemma}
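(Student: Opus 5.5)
The plan is to discretize the class $\mathcal F$ with a sup-norm $\delta$-net, apply Bernstein's inequality to each element of the net, and take a union bound. The error from replacing $\widehat f$ by its nearest net element is handled with the Lipschitz property of the check loss. Put $g_f(x,y):=\rho_\tau(y-f(x))-\rho_\tau(y-f_T(x))$, so that $M_T(f)-\widehat M_T(f)=\mathbb E g_f-\frac1T\sum_t g_f(X_t,Y_t)$. The check loss is $1$-Lipschitz, since $\max\{\tau,1-\tau\}\le 1$. This gives two pointwise facts:
\begin{equation*}
|g_f(x,y)|\le |f(x)-f_T(x)|\le 2C_f,
\qquad
|g_f-g_{f'}|\le \|f-f'\|_\infty .
\end{equation*}

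\textbf{Step 1 (discretization).} Let $\{f_1,\dots,f_N\}$ be a minimal $\delta$-cover of $\mathcal F$ in sup-norm, so that $\log N=V_{(\mathcal F,\|\cdot\|_\infty)}(\delta)$. Choose $k$ with $\|\widehat f-f_k\|_\infty\le\delta$. Using the second fact for both the population and the empirical averages,
\begin{equation*}
M_T(\widehat f)-\widehat M_T(\widehat f)\le 2\delta+\bigl(M_T(f_k)-\widehat M_T(f_k)\bigr).
\end{equation*}
We may keep only net elements with $\|f_k-f_T\|_{\ell_2}\le r_0+\delta\le 2r_0$. This is allowed because the hypothesis $\|\widehat f-f_T\|_{\ell_2}\le r_0$ and the triangle inequality force the chosen $f_k$ into this set. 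Here we use $\delta<r_0$.

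\textbf{Step 2 (Bernstein per element).} For fixed $f_k$ in the restricted net, the variables $g_{f_k}(X_t,Y_t)$ are i.i.d. They are bounded by $2C_f$, and their variance is at most $\mathbb E(f_k-f_T)^2\le 4r_0^2$. Bernstein's inequality then gives, with probability at least $1-e^{-u}$,
\begin{equation*}
M_T(f_k)-\widehat M_T(f_k)\le \sqrt{\frac{2\cdot 4r_0^2\,u}{T}}+\frac{2\cdot 2C_f\,u}{3T}.
\end{equation*}
The constant $16C_f/3$ in the statement reflects this bookkeeping, with a centering bound of $4C_f$.

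\textbf{Step 3 (union bound).} Take $u=\gamma+\log N$ and union-bound over the at most $N$ elements. Apply $\sqrt{a+b}\le\sqrt a+\sqrt b$. Then, with probability at least $1-e^{-\gamma}$, the deviation is bounded by
\begin{equation*}
r_0\sqrt{V/T}+r_0\sqrt{2\gamma/T}+\frac{C_f(\gamma+V)}{T}.
\end{equation*}

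\textbf{Main obstacle.} The leftover term $C_fV/T$ is the delicate part, because it does not appear in the statement. I would first try to absorb it into $r_0\sqrt{V/T}$ by assuming $V/T\lesssim r_0^2$, which is the regime where the bound is nontrivial. Failing that, I would absorb it into the $\lesssim$ under the standing assumption that $V/T$ is small.

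A further subtlety is the localization. The deterministic bound $\|\widehat f-f_T\|_{\ell_2}\le r_0$ is used only to select which net elements to keep, so the union bound runs over a data-independent set, namely the net elements with $\|f_k-f_T\|_{\ell_2}\le 2r_0$. The empirical-norm condition $\|\widehat f-f_T\|_T\le r_0$ is then not needed beyond symmetry. If sharper constants are wanted, a one-step peeling argument could replace this selection, but I would not need it here.
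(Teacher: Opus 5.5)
You have a genuine gap: your argument proves a weaker inequality than the lemma. The discretization, the restriction of the net to the data-independent set $\{f_k:\|f_k-f_T\|_{\ell_2}\le 2r_0\}$, Bernstein and the union bound are all sound. But the bound you reach carries an extra term of order $C_f\,V_{(\mathcal F,\|\cdot\|_\infty)}(\delta)/T$. It appears because putting $u=\gamma+\log N$ into Bernstein multiplies the range term $bu/(3T)$ by the entropy.

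Neither of your proposed absorptions works. The inequality $C_fV/T\lesssim r_0\sqrt{V/T}$ is equivalent to $C_f\sqrt{V/T}\lesssim r_0$. That is an extra hypothesis, or else a hidden constant depending on $C_f/r_0$. The lemma allows neither: it is stated for every $r_0>0$ and every $\delta\in(0,r_0)$, with $r_0$ appearing explicitly. Smallness of $V/T$ is also irrelevant, since the ratio $C_f\sqrt{V/T}/r_0$ is unbounded as $r_0\to 0$. For example, take $C_f=1$, $V/T=10^{-2}$, $r_0=10^{-3}$, $\gamma$ fixed and $T$ large. Your extra term is of order $10^{-2}$, while every term on the stated right-hand side is at most of order $10^{-3}$.

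The missing idea is the hypothesis you set aside, $\|\widehat f-f_T\|_T\le r_0$. It is exactly what removes any $V/T$ term. The paper builds this constraint into the localized loss class, so complexity enters only through the conditional Rademacher average. After Talagrand contraction and Dudley's integral truncated at the empirical radius $r_0$, that average is at most $4\delta+12r_0\sqrt{V(\delta)/T}$. The concentration step (Theorem 2.1 of \cite{2005_Bartlett_et_al}) contributes only the $\gamma$-dependent terms $r_0\sqrt{2\gamma/T}+16C_f\gamma/(3T)$. That theorem, not Bernstein centering, is where the $16/3$ comes from.

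You can also repair your own net argument by using the same constraint for each net element:
\begin{itemize}
\item On the hypothesis event, the selected $f_k$ satisfies $\|f_k-f_T\|_T\le r_0+\delta\le 2r_0$.
\item Set $\xi_t:=\mathbb E g_{f_k}-g_{f_k}(X_t,Y_t)$ and $S:=\sum_t\xi_t$. On that event, $\sum_t\xi_t^2+T\operatorname{Var}(g_{f_k})\le 20Tr_0^2$.
\item Apply a self-normalized inequality such as Bercu--Touati's, $\mathrm P\bigl(S\ge x,\ \sum_t\xi_t^2+T\operatorname{Var}(g_{f_k})\le y\bigr)\le e^{-x^2/(2y)}$. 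It needs only i.i.d. centered square-integrable summands.
\item This gives a purely sub-Gaussian tail $S/T\le r_0\sqrt{40u/T}$, with no $C_fu/T$ term.
\item Your union bound with $u=\gamma+V$ then delivers the stated inequality.
\end{itemize}
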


\begin{proof}
Denote that $\mathcal{G} := \left\{g: g(x, y) = \rho_\tau\bigl(y - f(x)\bigl) - \rho_\tau\bigl(y - f_T(x)\bigl), \; f \in \mathcal{F}, \; \left\|f - f_T\right\|_{T} \leq r_0\right\}$. Since the quantile loss is $1$-Lipschitz\footnote{For all real $u$, $v$, $\left|\rho_\tau(u) - \rho_\tau(v)\right| \leq|u-v|$, because the sub-gradient $\tau-\mathbf{1}{\{u<0\}}$ is bounded in $[-1,1]$.}, we have 
\begin{equation}
    \left|g(x, y)\right| = \left|\rho_\tau(y-f(x)) - \rho_\tau\left(y - f_T(x)\right)\right| \leq\left|f(x) - f_T(x)\right| \le 2C_f.
\end{equation}
So $g$ maps into the interval $\left[-2 C_f, 2 C_f\right]$. Further, $\left\|\widehat{f} - f_T\right\|_{T} \leq r_0$ implies that its corresponding $g_{\widehat{f}}$ lies in $\mathcal{G}$, and trivially,
\begin{equation}
    M_T(\widehat{f}) - \widehat{M}_T(\widehat{f}) \leq \sup_{g \in \mathcal{G}}\left\{\mathbb{E}\left[g(X, Y)\right] - \frac{1}{T} \sum_{t=1} g(x_t, y_t)\right\}.
\end{equation}
Then for $\xi_1, \ldots, \xi_T$ independent Rademacher variables independent of $\bigl\{\left(x_t, y_t\right)\bigl\}_{t=1}^T$, we have the upper bound with probability at least $1-\exp{(-\gamma)}$ by Theorem 2.1 from \cite{2005_Bartlett_et_al}:
\begin{equation}
    \begin{split}
        \sup_{g \in \mathcal{G}}\Bigl\{\mathbb{E}\left[g(X, Y)\right] &-  \frac{1}{T} \sum_{t=1} g(x_t, y_t)\Bigl\} \le \\ 
        &4\cdot \mathbb{E}\left[ \sup_{g \in \mathcal{G}} \frac{1}{T} \sum_{t=1}^T \xi_t \, g(x_t, y_t) \; \Bigl\vert \; (x_1, y_1), \ldots, (x_T, y_T) \right] + r_0 \sqrt{\frac{2\gamma}{T}} + \frac{16 C_f \gamma}{3T}.
    \end{split}
\end{equation}
Then, it remains to bound the conditional Rademacher average (i.e., the first term) in the above equation
\begin{equation}
    \begin{split}
        \mathbb{E}_{\xi}\left[\sup _{g \in \mathcal{G}} \frac{1}{T} \sum_{t=1}^T \xi_t g(x_t, y_t)\right] 
        &\leq \mathbb{E}_{\xi}\left[ \sup_{f \in \mathcal{F}, \|f\|_{\infty} \leq C_f, \left\|f_T - f\right\|_T \leq r_0} \frac{1}{T} \sum_{t=1}^T \xi_t \bigl(f(x_t) - f_T(x_t)\bigl)\right] \\
        &\leq \inf_{0 < \alpha < r_0} \left\{4 \alpha + \frac{12}{\sqrt{T}} \int_\alpha^{r_0} \sqrt{\log \mathcal{N}\left(\delta, \mathcal{F}, \|\cdot\|_T\right)} \; \mathrm{d} \delta \right\} \\
        &\leq \inf_{0 < \alpha < r_0} \left\{4 \alpha + \frac{12}{\sqrt{T}} \int_\alpha^{r_0} \sqrt{\log \mathcal{N}\left(\delta, \mathcal{F}, \|\cdot\|_{\infty} \right)} \; \mathrm{d} \delta \right\} \\
        &\leq \inf_{0 < \alpha < r_0} \left\{4 \alpha + \frac{12 (r_0 - \alpha)}{\sqrt{T}} \sqrt{\log \mathcal{N}\left(\alpha, \mathcal{F}, \|\cdot\|_{\infty} \right)} \; \right\} \\
        &\leq \inf_{0 < \alpha < r_0} \left\{4 \alpha + \frac{12 r_0}{\sqrt{T}} \sqrt{\log \mathcal{N}\left(\alpha, \mathcal{F}, \|\cdot\|_{\infty} \right)} \; \right\} \\
        &\leq \inf_{0 < \alpha < r_0} \left\{4 \alpha + 12 r_0\sqrt{\frac{\log \mathcal{N}\left(\alpha, \mathcal{F}, \|\cdot\|_{\infty} \right)}{T}} \; \right\}.
    \end{split}
\end{equation}
The first inequality follows by Talagrand's contraction\footnote{
Talagrand's contraction: Let $\Phi_1, \cdots, \Phi_T$ be $L$-Lipschitz functions, $\xi_1, \ldots, \xi_T$ be Rademacher random variables, and $S := \left\{x_1, \cdots, x_T\right\}$ be a random i.i.d. sample. Then, for any hypothesis set $\mathcal{F}$ of real-valued functions, the following inequality holds
\begin{equation}
    \frac{1}{T} \mathbb{E}_{\xi}\left[\sup_{f \in \mathcal{F}} \sum_{t=1}^T \xi_t \cdot \left(\Phi_t \circ f\right)(x_t)\right] \leq \frac{L}{T} \mathbb{E}_{\xi}\left[\sup_{f \in \mathcal{F}} \sum_{t=1}^T \xi_t \cdot f(x_t)\right].
\end{equation}}.
We get the second inequality by Dudley’s chaining for any $0 < \alpha < r_0$, where $\mathcal{N}\left(\delta, \mathcal{F},\|\cdot\|_T\right)$ is the covering number of $\mathcal{F}$ with $\|\cdot\|_T$ balls of radius $\delta$. The third is due to that the empirical norm is smaller than the sup-norm, and the remaining steps use the monotonicity of the covering number in $\delta$ and the monotonicity of the integral. Combining the above would lead to the the conclusion.
\end{proof}

\begin{remark}
    For Lemma~\ref{lemma:19_padilla} we only require that the class $\mathcal G=\{(x,y)\mapsto\rho_\tau(y-f(x))-\rho_\tau(y-f_T(x)): f\in\mathcal F\}$ admits a bounded measurable envelope. Under the uniform bound $\sup_{f\in\mathcal F}\|f\|_\infty\le C_f$ we have $|g|\le 2C_f$ for all $g\in\mathcal G$, so the integrability conditions $\mathbb E[\sup_{g\in\mathcal G}|g|]<\infty$
    and $\mathbb E[\sup_{g\in\mathcal G}g^2]<\infty$ hold automatically.
\end{remark}

\begin{lemma} \label{lemma:last}
    Consider Model \eqref{model_in_lemma} with $\tau \in(0,1)$, $u_{\tau, i}$ independent from $S_X$, then under the notations of Lemmas \ref{lemma:15_padilla}-\ref{lemma:19_padilla}, for an arbitrary $r_0>\delta>0$ and $\gamma > 0$, with probability at least $1-\exp (-\gamma)$, we have that
    \begin{equation}
        \begin{split}
            \Bigl\|\widehat{f} - f_{\tau}^*\Bigl\|_{\ell_2}^2 \; \lesssim \; \left(\frac{\widetilde{C}_f^2}{c^2} + 1\right) \inf_{f \in \mathcal{F}} \Bigl\|f - f_\tau^*\Bigl\|_{\infty}^2 + \frac{\widetilde{C}_f}{c}\Biggl( \delta + r_0 \sqrt{\frac{V_{\left(\mathcal{F},\|\cdot\|_L \infty\right)}(\delta)}{T}} + r_0 \sqrt{\frac{\gamma}{T}} + \frac{C_f \gamma}{T} \Biggl).
        \end{split}
    \end{equation}
    where $\widetilde{C}_f = \max \left\{2 C_f, 1\right\}$, $c \asymp \min \{\underline{p}, \underline{p} \kappa\}$, and $V_{(\mathcal{F},\|\cdot\|_\infty)}(\delta) := \log \mathcal{N}(\delta, \mathcal{F}, \|\cdot\|_\infty)$ is the covering number under the sup-norm.
\end{lemma}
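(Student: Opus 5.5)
The plan is to chain Lemma \ref{lemma:17_padilla} with Lemma \ref{lemma:19_padilla}, after first placing $\widehat f$ inside a localized ball of radius $r_0$ so that Lemma \ref{lemma:19_padilla} applies.

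\textbf{Step 1 (deterministic bias--fluctuation split).} Lemma \ref{lemma:17_padilla} gives
\[
\|\widehat f - f_\tau^*\|_{\ell_2}^2 \le 2\Bigl(\tfrac{\widetilde C_f^2}{c^2}+1\Bigr)\inf_{f\in\mathcal F}\|f-f_\tau^*\|_\infty^2 + \tfrac{4\widetilde C_f}{c}\bigl(M_T(\widehat f)-\widehat M_T(\widehat f)\bigr).
\]
The first term already has the form stated in Lemma \ref{lemma:last}. The remaining work is to control the empirical-process term $M_T(\widehat f)-\widehat M_T(\widehat f)$.

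\textbf{Step 2 (localization).} Lemma \ref{lemma:19_padilla} needs $\max\{\|\widehat f-f_T\|_{\ell_2},\|\widehat f-f_T\|_T\}\le r_0$. Since every $f\in\mathcal F$ satisfies $\|f\|_\infty\le C_f$, we have $\|\widehat f-f_T\|_\infty\le 2C_f$, so both norms are at most $2C_f$. Taking $r_0\ge 2C_f$ makes the hypothesis hold deterministically. Because the statement allows an arbitrary $r_0>\delta$, I would simply apply Lemma \ref{lemma:19_padilla} with this $r_0$; a smaller admissible $r_0$ only improves constants.

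A sharper rate would need a peeling argument: apply Lemma \ref{lemma:19_padilla} on dyadic shells $\{2^{k-1}r_0<\|f-f_T\|\le 2^k r_0\}$, take a union bound, and use Lemma \ref{lemma:15_padilla} to compare $\|\cdot\|_{\ell_2}$ with $\|\cdot\|_T$. This is not required for the stated bound.

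\textbf{Step 3 (concentration).} On an event of probability at least $1-e^{-\gamma}$, Lemma \ref{lemma:19_padilla} gives
\[
M_T(\widehat f)-\widehat M_T(\widehat f)\lesssim \delta + r_0\sqrt{V_{(\mathcal F,\|\cdot\|_\infty)}(\delta)/T}+r_0\sqrt{2\gamma/T}+16C_f\gamma/(3T).
\]
Substituting this into Step 1 and absorbing the numerical constants $2$, $4$, $\sqrt2$ and $16/3$ into $\lesssim$ yields the claim, with the prefactor $\widetilde C_f/c$ on the stochastic terms.

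\textbf{Conditions to check.} Lemma \ref{lemma:17_padilla} requires $\inf_f\|f-f_\tau^*\|_\infty^2<c_0$ small. I would either state this as a standing condition or note that otherwise the bound is trivial: $\|\widehat f-f_\tau^*\|_{\ell_2}^2$ is bounded by a constant, and the bias term, multiplied by a large enough constant, dominates it.

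\textbf{Main obstacle.} The delicate point is that Lemma \ref{lemma:19_padilla} is stated for the data-dependent $\widehat f$ while its localization uses the empirical norm. The fix is to read it as a uniform supremum over the class $\mathcal G$ of all $g$ with $\|f-f_T\|_T\le r_0$. That class is random through the sample, but with $r_0\ge 2C_f$ it contains every $f\in\mathcal F$, so it is in effect deterministic and the Bartlett et al.\ inequality applies. The independence of $u_{\tau,i}$ from $S_X$ is only used to justify the i.i.d.\ structure that this concentration step requires.
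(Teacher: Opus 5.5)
Your proposal is correct and follows the paper's proof: the paper invokes Lemma~\ref{lemma:17_padilla}, substitutes the bound on $M_T(\widehat f)-\widehat M_T(\widehat f)$ from Lemma~\ref{lemma:19_padilla}, and absorbs the constants. Your explicit localization step is something the paper leaves implicit: you take $r_0\ge 2C_f$, so that $\max\{\|\widehat f-f_T\|_{\ell_2},\|\widehat f-f_T\|_T\}\le r_0$ holds deterministically and the class used in Lemma~\ref{lemma:19_padilla}'s proof is non-random. For smaller $r_0$, the ``arbitrary $r_0$'' in the statement should be read, as in the paper, as inheriting that localization hypothesis from Lemma~\ref{lemma:19_padilla}, not as a free choice of $r_0$.
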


\begin{proof}
By Lemma \ref{lemma:17_padilla}, we have 
\begin{equation}
    \Bigl\|\widehat{f} - f_{\tau}^*\Bigl\|_{\ell_2}^2 \; \le \; 2\left(\frac{\widetilde{C}_f^2}{c^2} + 1\right) \inf_{f \in \mathcal{F}} \Bigl\|f - f_\tau^*\Bigl\|_{\infty}^2 + \frac{4\widetilde{C}_f}{c}\Biggl( M_T(\widehat{f}) - \widehat{M}_T(\widehat{f}) \Biggl),
\end{equation}
Plugging the bound of $M_T(\widehat{f}) - \widehat{M}_T(\widehat{f})$ from Lemma \ref{lemma:19_padilla} into the above equation, we finish the proof.
\end{proof}
    \section{Constructing the Cover} \label{sec:proof_cover}

Let $q \in \mathbb{R}^{+}$ and consider a function class $\mathcal{F}$, where $f: \mathbb{R}^a \rightarrow \mathbb{R}^b$ for $f \in \mathcal{F}$. A subset $\widehat{\mathcal{F}} \subset \mathcal{F}$ is said to cover a collection of inputs $\{x_i\}_{i=1}^n$ if, for every $f \in \mathcal{F}$, there exists some $\widehat{f} \in \widehat{\mathcal{F}}$ such that
\begin{equation}
    \sup_{x_i}\left\|f\left(x_i\right) - \widehat{f}\left(x_i\right)\right\|_q < \epsilon.
\end{equation}
The smallest size of such a subset is denoted by
\begin{equation}
    N_{\infty}\left(\mathcal{F}, \epsilon, \{x_i\}_{i=1}^n, \|\cdot\|_q\right)
\end{equation}
and is referred to as the empirical covering number with respect to the samples $\left\{x_i\right\}_{i=1}^n$. The covering number of $\mathcal{F}$ for $n$ samples is then defined as the worst-case (supremum) over all possible sample sets:
\begin{equation}
    N_{\infty}\left(\mathcal{F}, \epsilon, n, \|\cdot\|_q\right) := \sup_{\{x_i\}_{i=1}^n} N_{\infty}\left(\mathcal{F}, \epsilon, \{x_i\}_{i=1}^n, \|\cdot\|_q\right).
\end{equation}

Following \cite{2022_Edelman_et_al} and \cite{2024_Trauger_and_Tewari}, and recalling the definition and notation of the Transformer in section~\ref{sec:transformer}, we define the weights of the $l$-th Transformer layer as
\begin{equation}
    W_l \;:=\; \left\{ W^{(KQ)}_{l}, W^{(V)}_{l}, W^{(1)}_{l}, W^{(2)}_{l}, b^{(1)}_{l}, b^{(2)}_{l}\right\}.
\end{equation}
Note that since the pairs $\left(W_l^{(K)}, W_l^{(Q)}\right)$ and $\left(W_l^{(O)}, W_l^{(V)}\right)$ are not separately identifiable, we group them together in our notation. And let $W_{1:l} := \left\{ W_1, \ldots, W_{l-1} \right\}$. For any $l$, let
\begin{equation}
    \begin{split}
        \left\|W^{(KQ)}_{l}\right\|_2 \le B_{KQ}, &\quad \left\|W^{(V)}_{l}\right\|_2 \le B_{V} \\
        \left\|W^{(1)}_{l}\right\|_2 \le B_{W1}, &\quad \left\|W^{(2)}_{l}\right\|_2 \le B_{W2},
    \end{split}
\end{equation}
and assume $\left\| b_l^{(1)}\right\|_2 \le B_{b1}$ and $\left\| b_l^{(2)}\right\|_2 \le B_{b2}$ for $B_{b1}, B_{b2} > 0$.

We inductively define $g_{l}^{(tf)}: \mathbb{R}^{d \times n} \mapsto \mathbb{R}^{d \times n}$ starting with $g_{1}^{(tf)} \left(X ; W_{1: 1}\right) = X$ with $X \in \mathbb{R}^{d \times n}$ the initial input:
\begin{equation} \label{eq:g_tf}
    g_{l+1}^{(tf)}\left(X ; W_{1:l+1}\right) := \Pi_{norm} \circ g_l^{(ff)} \circ \Pi_{norm} \circ g_l^{(msa)} \circ g_{l}^{(tf)} \bigl(X ; W_{1:l}\bigl).
\end{equation}
where $g_l^{(msa)} \in \mathcal{G}_l^{(MSA)}$ denotes the $l$-th multi-head self-attention sublayer, $g_l^{(ff)} \in \mathcal{G}_l^{(FF)}$ represents the $l$-th feed-forward network following the attention sublayer, and $\Pi_{norm}$ denotes the layer-normalization operator applied to each column. Adding normalization or not does not affect the final results. We adopt a slightly modified version of the normalization as defined in Section~\ref{sec:transformer_details}, following \cite{2022_Edelman_et_al} and \cite{2024_Trauger_and_Tewari}. Specifically, $\Pi_{norm}$ projects each column of the input onto the unit ball. Unrolling \eqref{eq:g_tf} gives us
\begin{equation}
    g_{l+1}^{(tf)}\left(X ; W_{1:l+1}\right) := \Pi_{norm} \Biggl( \sigma_R \Biggl( \Pi_{norm} \Bigl( g^{(msa)} \bigl(g_{l}^{(tf)} \left(X ; W_{1:l}\right) ; W_{l}\bigl) \Bigl) W_{l}^{(1)} + b_{l}^{(1)}\Biggl) W_{l}^{(2)} + b_{l}^{(2)} \Biggl).
\end{equation}
With the above setups, we derive the following Lemmas.

\begin{lemma} \label{lemma:A8_Edelman}
    (Lemma A.8 in \cite{2022_Edelman_et_al}) For $\epsilon, C_l, \beta_l \geq 0$, and $l \in[1,L]$, the solution to
    \begin{equation}
        \min_{\epsilon_1, \ldots, \epsilon_L} \sum_{l=1}^L \frac{C_l}{\epsilon_l^2} \quad \text{ subject to } \quad \sum_{l=1}^L \beta_l \epsilon_l = \epsilon
    \end{equation}
    is 
    \begin{equation}
        \frac{\gamma^3}{\epsilon^2},
    \end{equation} 
    where
    \begin{equation}
        \gamma = \sum_{l=1}^L C_l^{1/3} \beta_l^{2/3} \quad \text{ and } \quad \epsilon_l = \frac{\epsilon}{\gamma}\left(\frac{C_l}{\beta_l}\right)^{1/3}.
    \end{equation}
\end{lemma}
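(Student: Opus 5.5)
This is a convex program with a separable objective and one linear equality constraint, and all coefficients are nonnegative. My plan is to solve it with a Lagrange multiplier and then confirm that the stationary point really is the global minimizer.

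\textbf{Reduction.} First I dispose of degenerate indices.
\begin{itemize}
\item If $C_l=0$, the $l$-th term contributes nothing to the objective, so that index drops out.
\item If $\beta_l=0$ and $C_l>0$, the best choice is $\epsilon_l\to\infty$. The term then vanishes, which is consistent with $C_l^{1/3}\beta_l^{2/3}=0$ in $\gamma$.
\end{itemize}
So I may assume $C_l,\beta_l>0$ for every $l$, and I optimize over $\epsilon_l>0$. On this domain each map $\epsilon_l\mapsto C_l\epsilon_l^{-2}$ is strictly convex, and the feasible set is an affine slice of the positive orthant. Any KKT point is therefore the unique global minimizer. I also need a minimizer to exist: if some $\epsilon_l\to0$ the objective blows up, and the constraint keeps every $\epsilon_l$ bounded, so the infimum is attained in the interior.

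\textbf{Stationarity.} I form
\[
\mathcal{L}=\sum_{l} C_l\epsilon_l^{-2}+\lambda\Bigl(\sum_{l}\beta_l\epsilon_l-\epsilon\Bigr).
\]
Setting $\partial\mathcal{L}/\partial\epsilon_l=0$ gives $-2C_l\epsilon_l^{-3}+\lambda\beta_l=0$. Hence
\[
\epsilon_l=\Bigl(\tfrac{2}{\lambda}\Bigr)^{1/3}\Bigl(\tfrac{C_l}{\beta_l}\Bigr)^{1/3}=:k\Bigl(\tfrac{C_l}{\beta_l}\Bigr)^{1/3}.
\]

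\textbf{Fixing the multiplier.} Substituting into the constraint gives
\[
k\sum_l \beta_l\Bigl(\tfrac{C_l}{\beta_l}\Bigr)^{1/3}=k\sum_l C_l^{1/3}\beta_l^{2/3}=k\gamma=\epsilon .
\]
So $k=\epsilon/\gamma$, which is exactly the stated formula for $\epsilon_l$.

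\textbf{Optimal value.} Each term evaluates to
\[
C_l\epsilon_l^{-2}=C_l\,\frac{\gamma^2}{\epsilon^2}\Bigl(\tfrac{\beta_l}{C_l}\Bigr)^{2/3}=\frac{\gamma^2}{\epsilon^2}C_l^{1/3}\beta_l^{2/3}.
\]
Summing over $l$ yields $\gamma^3/\epsilon^2$.

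\textbf{Cross-check and main obstacle.} As an independent check, Hölder's inequality with exponents $(3,3/2)$ gives
\[
\gamma=\sum_l (C_l\epsilon_l^{-2})^{1/3}(\beta_l\epsilon_l)^{2/3}\le\Bigl(\sum_l C_l\epsilon_l^{-2}\Bigr)^{1/3}\epsilon^{2/3}.
\]
This yields the lower bound $\gamma^3/\epsilon^2$ for every feasible point directly, and equality holds precisely at the $\epsilon_l$ above. Because Hölder settles optimality outright, the only real obstacle is bookkeeping in the degenerate cases $C_l=0$ or $\beta_l=0$. There the value is understood as an infimum, and it is still $\gamma^3/\epsilon^2$.
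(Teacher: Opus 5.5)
Your proof is correct and follows the same route as the paper, whose entire argument is ``by Lagrange multipliers''; your stationarity computation, the resulting $\epsilon_l=\frac{\epsilon}{\gamma}(C_l/\beta_l)^{1/3}$, and the evaluation of the optimal value all check out. Your Hölder step, with exponents $(3,3/2)$ applied to $(C_l\epsilon_l^{-2})^{1/3}(\beta_l\epsilon_l)^{2/3}$, is a worthwhile addition: it certifies the lower bound $\gamma^3/\epsilon^2$ for every feasible point with $\epsilon_l>0$ without needing existence or convexity arguments, and you rightly make explicit the positivity restriction $\epsilon_l>0$ that the statement leaves implicit.
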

\begin{proof}
    The proof is by using Lagrange multipliers.
\end{proof}

\begin{lemma} \label{lemma:covering_B}
    Let $\mathcal{B} = \{ b \in \mathbb{R}^d : \|b\|_2 \le B_b \}$ denote the Euclidean ball in $\mathbb{R}^d$ of radius $B_b > 0$. Then, we have
    \begin{equation}
        \log N_{\infty} \big(\varepsilon, \mathcal{B},\|\cdot\|_2\big) \le d\log(3B_b) + \frac{d}{\varepsilon^2},
    \end{equation}
    for any $0 < \epsilon \leq B_b$.
\end{lemma}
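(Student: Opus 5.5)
The plan is the standard volumetric argument, followed by an elementary inequality that turns $\log(1/\varepsilon)$ into $1/\varepsilon^2$. Here $N_\infty(\varepsilon,\mathcal{B},\|\cdot\|_2)$ is just the ordinary $\varepsilon$-covering number of the set $\mathcal{B}$ in Euclidean norm. This is because a vector $b$ can be viewed as the constant map, so the supremum over inputs plays no role.

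\textbf{Step 1: packing-to-covering.} I take a maximal $\varepsilon$-separated subset $\{b_1,\dots,b_N\}\subset\mathcal{B}$. By maximality it is an $\varepsilon$-cover of $\mathcal{B}$, so $N_\infty(\varepsilon,\mathcal{B},\|\cdot\|_2)\le N$.

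\textbf{Step 2: volume comparison.} The open balls of radius $\varepsilon/2$ around the $b_i$ are pairwise disjoint. They are all contained in the ball of radius $B_b+\varepsilon/2$ centred at the origin. Comparing Lebesgue volumes in $\mathbb{R}^d$ gives
\begin{equation*}
N\Bigl(\frac{\varepsilon}{2}\Bigr)^d\le\Bigl(B_b+\frac{\varepsilon}{2}\Bigr)^d,
\qquad\text{hence}\qquad
N\le\Bigl(1+\frac{2B_b}{\varepsilon}\Bigr)^d .
\end{equation*}
Using the hypothesis $\varepsilon\le B_b$, we have $1\le B_b/\varepsilon$. Therefore $N\le(3B_b/\varepsilon)^d$, and
\begin{equation*}
\log N_\infty(\varepsilon,\mathcal{B},\|\cdot\|_2)\le d\log(3B_b)+d\log(1/\varepsilon).
\end{equation*}
This step remains valid when $3B_b<1$. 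In that case the first term is negative, but the inequality $N\le(3B_b/\varepsilon)^d$ still holds.

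\textbf{Step 3: replacing the logarithm.} For every $y>0$ we have $\log y\le y$. Applying this with $y=1/\varepsilon^2$ gives
\begin{equation*}
\log(1/\varepsilon)=\tfrac12\log(1/\varepsilon^2)\le \frac{1}{2\varepsilon^2}\le\frac{1}{\varepsilon^2}.
\end{equation*}
Substituting this into the bound from Step 2 yields $\log N_\infty\le d\log(3B_b)+d/\varepsilon^2$, which is the claim.

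The only point needing care is the regime constraint $\varepsilon\le B_b$, which is used to absorb the $1$ in $1+2B_b/\varepsilon$. The deliberately loose replacement of $\log(1/\varepsilon)$ by $1/\varepsilon^2$ is harmless. It is chosen only so that the bound has the same $\varepsilon^{-2}$ form as the other layer-wise terms that get combined via Lemma~\ref{lemma:A8_Edelman}.
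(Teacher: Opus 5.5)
Your proof is correct and follows the same route as the paper. Both identify the sup-norm cover of constant maps with a Euclidean cover of the ball, use the volumetric bound $(3B_b/\varepsilon)^d$, and then loosen $d\log(1/\varepsilon)$ to $d/\varepsilon^2$. Your version is also somewhat more rigorous: you derive the volumetric bound, show exactly where $\varepsilon \le B_b$ is used, and use $\log y \le y$ to get an inequality valid for every $\varepsilon$, whereas the paper relies only on the asymptotic statement $\log(1/\varepsilon) = o(\varepsilon^{-2})$.
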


\begin{proof}
Consider the class of constant functions $\mathcal{F}_b = \{ f_b(x) = b : b \in \mathcal{B} \}$. For any $f_b, f_{b'} \in \mathcal{F}_b$, the supremum error over any input set $\mathcal{X}$ is
\begin{equation}
    \sup_{x \in \mathcal{X}} \| f_b(x) - f_{b'}(x) \|_2
    = \sup_{x \in \mathcal{X}} \| b - b' \|_2
    = \| b - b' \|_2.
\end{equation}
Hence, covering the function class $\mathcal{F}_b$ under the supremum norm is equivalent to covering the Euclidean ball $\mathcal{B}$. The unit ball in $\mathbb{R}^d$ can be covered by at most $(3 / \varepsilon)^d$ balls of radius $\varepsilon$. Scaling by $B_b$ gives
\begin{equation}
    N_{\infty}(\varepsilon, \mathcal{B}, \|\cdot\|_2)
    \le \left( \frac{3 B_b}{\varepsilon} \right)^d.
\end{equation}
Taking logarithms yields
\begin{equation}
    \log N_{\infty}\big(\varepsilon, \mathcal{B}, \|\cdot\|_2\big)
    \le d \log \left( \frac{3 B_b}{\varepsilon} \right)
    = d \log(3B_b) + d \log \frac{1}{\varepsilon}.
\end{equation}
Lastly, note that for any $a > 0$, $\log(1 / \varepsilon) = o(1 / \varepsilon^a)$ as $\varepsilon \to 0^+$.  
Choosing $a = 2$, we obtain the looser but convenient bound
\begin{equation}
    \log N_{\infty}\big(\varepsilon, \mathcal{B}, \|\cdot\|_2\big)
    \le d \log(3B_b) + \frac{d}{\varepsilon^2}.
\end{equation}
\end{proof}

\begin{lemma}[Rework of Lemma A.15 in \cite{2022_Edelman_et_al}] \label{lemma:A15_Edelman}
    Suppose $W_{1:l+1}, \widehat{W}_{1:l+1}$ satisfy the norm bounds $\left\|W^{(KQ)}_{l}\right\|_2 \le B_{KQ}$, $\left\|W^{(V)}_{l}\right\|_2 \le B_{V}$, $\left\|W^{(1)}_{l}\right\|_2 \le B_{W1}$, $\left\|W^{(2)}_{l}\right\|_2 \le B_{W2}$ for all $l$. Then we have
    \begin{equation}
        \begin{split}
            &\left\| g_{l+1}^{(tf)} \bigl(X ; W_{1:l+1}\bigl) - g_{l+1}^{(tf)} \bigl(X ; \widehat{W}_{1:l+1} \bigl) \right\|_{2, \infty} \\
            \le\;& \Biggl\| \Biggl( \sigma_R \left(\Pi_{norm} \left(g^{(msa)} \left(g_{l}^{(tf)} \left(X;\widehat{W}_{1:l}\right); W_{l}\right)\right)  W_{l}^{(1)} + b_{l}^{(1)} \right) \Biggl) \left(W_{l}^{(2)} - \widehat{W}_{l}^{(2)}\right) \Biggl\|_{2, \infty} \\
            &+ \left\| b_{l}^{(2)} - \widehat{b}_{l}^{(2)} \right\|_{2, \infty} \\
            &+ B_{W2} L_\sigma \left\| \Pi_{norm} \left(g^{(msa)} \left(g_{l}^{(tf)} \left(X ; \widehat{W}_{1:l} \right) ; \widehat{W}_{l}\right)\right) \left(W_{l}^{(1)} - \widehat{W}_{l}^{(1)}\right)\right\|_{2, \infty} \\
            &+ B_{W2} L_\sigma \Biggl\| b_{l}^{(1)} - \widehat{b}_{l}^{(1)} \Biggl\|_{2, \infty}\\
            &+ B_{W2} L_\sigma B_{W1} B_{V}\left(1+4 B_{KQ}\right) \Bigl\| g_{l}^{(tf)}\left(X ; W_{1:l}\right) - g_{l}^{(tf)}\left(X ; \widehat{W}_{1:l}\right)\Bigl\|_{2, \infty} \\
            &+ 2 B_{W2} L_\sigma B_{W1} B_{V} \left\| \left(W^{(KQ)} - \widehat{W}^{(KQ)} \right)^{\top} g_{l}^{(tf)}\left(X ; \widehat{W}_{1:l}\right)\right\|_{2, \infty} \\
            &+ B_{W2} L_\sigma B_{W1} \left\|\left(W^{(V)} - \widehat{W}^{(V)} \right) g_{l}^{(tf)}\left(X ; \widehat{W}_{1:l}\right)\right\|_{2, \infty}.
        \end{split} 
    \end{equation}
\end{lemma}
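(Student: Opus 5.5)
The plan is a layer-wise perturbation (telescoping) argument. I would write the output of layer $l+1$ as $\Pi_{norm}\bigl(\sigma_R(U W_l^{(1)} + b_l^{(1)}) W_l^{(2)} + b_l^{(2)}\bigr)$ with $U := \Pi_{norm}(g^{(msa)}(g_l^{(tf)}(X;W_{1:l});W_l))$, and let $\widehat U$ be the same object built from the hatted weights. Since $\Pi_{norm}$ projects each column onto the unit ball, it is $1$-Lipschitz columnwise. So the $\|\cdot\|_{2,\infty}$ distance between the two layer outputs is bounded by the distance of the pre-normalization quantities.

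Next I would insert and subtract hybrid terms, peeling one parameter at a time from the outside in.
\begin{itemize}
\item First swap $W_l^{(2)}\to\widehat W_l^{(2)}$ while keeping the activation $\sigma_R(UW^{(1)}_l+b^{(1)}_l)$ unchanged; this yields the first term.
\item Then swap $b_l^{(2)}$, which gives the second term.
\item Then factor out $\|\widehat W_l^{(2)}\|_2\le B_{W2}$ and use that $\sigma_R$ is $L_\sigma$-Lipschitz. What remains is $\|UW^{(1)}_l + b^{(1)}_l - \widehat U\widehat W^{(1)}_l - \widehat b^{(1)}_l\|_{2,\infty}$.
\item Split that into a $W^{(1)}$ difference acting on $\widehat U$ (third term), a bias difference (fourth term), and $B_{W1}\|U-\widehat U\|_{2,\infty}$.
\end{itemize}
Row/column conventions must be kept consistent with the row-wise $(2,\infty)$ norm. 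The operator-norm bound $\|AW\|_{2,\infty}\le \|A\|_{2,\infty}\|W\|_2$ justifies the factorings.

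It remains to bound $\|U-\widehat U\|_{2,\infty}$. I would use the $1$-Lipschitzness of $\Pi_{norm}$ again and then decompose the attention difference into three pieces:
\begin{itemize}
\item the input change $g_l^{(tf)}(X;W_{1:l})$ versus $g_l^{(tf)}(X;\widehat W_{1:l})$ with weights $W_l$ fixed;
\item the $W^{(KQ)}$ change with input $\widehat g_l$;
\item the $W^{(V)}$ change with input $\widehat g_l$.
\end{itemize}
For this I would invoke the single-head attention Lipschitz estimates of Edelman et al. (their Lemmas A.6/A.7). The value map contributes a factor $B_V$. The softmax is $1$-Lipschitz from $\ell_\infty$ logits to $\ell_1$ weights. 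Columns of the previous layer output lie in the unit ball because of $\Pi_{norm}$, so a logit perturbation $x^\top W^{(KQ)}(y-\widehat y)$ or $(x-\widehat x)^\top W^{(KQ)}y$ is at most $B_{KQ}$ times the input perturbation, counted twice (query and key side). This gives the input-change coefficient $B_V(1+4B_{KQ})$: the "$1$" comes from the value path and the "$4B_{KQ}$" from the two softmax paths with factor-of-2 slack. The $KQ$-parameter change gives $2B_V\|(W^{(KQ)}-\widehat W^{(KQ)})^\top \widehat g_l\|_{2,\infty}$, and the value change gives $\|(W^{(V)}-\widehat W^{(V)})\widehat g_l\|_{2,\infty}$ times the stochastic-matrix factor $1$. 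Multiplying by $B_{W2}L_\sigma B_{W1}$ then reproduces the last three terms.

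I expect the main obstacle to be the softmax/attention step: getting the exact constants $(1+4B_{KQ})$ and $2$. That requires bounded input columns (unit norm after $\Pi_{norm}$; for $l=1$ I would assume the input columns are normalized), and it requires the attention weights to be columns of a stochastic matrix, so that mixing does not inflate the $(2,\infty)$ norm. I would follow Edelman et al.'s Lemma A.6 argument closely there, absorbing the $1/\sqrt d$ scaling into $W^{(KQ)}$.
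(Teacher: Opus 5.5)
Your route is the paper's: $1$-Lipschitzness of $\Pi_{norm}$, spectral-norm factoring in $\|\cdot\|_{2,\infty}$, the split $UW_l^{(1)}-\widehat U\widehat W_l^{(1)}=\widehat U(W_l^{(1)}-\widehat W_l^{(1)})+(U-\widehat U)W_l^{(1)}$, and the attention split into an input change at fixed $W_l$ plus a parameter change at input $\widehat g_l:=g_l^{(tf)}(X;\widehat W_{1:l})$. The paper cites Edelman et al.'s Lemma A.9 for the projection and Lemmas A.14 and A.13 for the two attention bounds. Terms two through seven come out exactly as displayed.

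The gap is in your first step, which is anchored at the wrong point. You write $A_lW_l^{(2)}-\widehat A_l\widehat W_l^{(2)}=A_l(W_l^{(2)}-\widehat W_l^{(2)})+(A_l-\widehat A_l)\widehat W_l^{(2)}$, with $A_l=\sigma_R(UW_l^{(1)}+b_l^{(1)})$ and $U$ built from $g_l^{(tf)}(X;W_{1:l})$. So your first term is evaluated at the output of the \emph{un-hatted} earlier layers. The statement instead has $g_l^{(tf)}(X;\widehat W_{1:l})$ inside the activation. The two quantities are not comparable in general, so ``this yields the first term'' is false: your chain proves a different inequality.

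The fix is to choose the anchor so that the previous-layer input is hatted. To get the display verbatim, insert $\widetilde A_l:=\sigma_R\bigl(\Pi_{norm}(g^{(msa)}(\widehat g_l;W_l))W_l^{(1)}+b_l^{(1)}\bigr)$ and split three ways:
\[
A_lW_l^{(2)}-\widehat A_l\widehat W_l^{(2)}=\widetilde A_l\bigl(W_l^{(2)}-\widehat W_l^{(2)}\bigr)+\bigl(A_l-\widetilde A_l\bigr)W_l^{(2)}+\bigl(\widetilde A_l-\widehat A_l\bigr)\widehat W_l^{(2)}.
\]
The middle piece changes only the previous-layer input at fixed $W_l$, and gives the $B_{W2}L_\sigma B_{W1}B_V(1+4B_{KQ})$ term. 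The last piece changes only layer-$l$ parameters at input $\widehat g_l$. Your $W^{(1)}$/bias/attention-parameter splitting then yields terms three, four, six and seven with the stated coefficients, since both $W_l^{(2)}$ and $\widehat W_l^{(2)}$ obey the $B_{W2}$ bound.

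The paper instead anchors at the fully hatted activation, $\widehat A_l(W_l^{(2)}-\widehat W_l^{(2)})+(A_l-\widehat A_l)W_l^{(2)}$. This gives a first term with $W_l$, $W_l^{(1)}$, $b_l^{(1)}$ hatted as well. That is the form whose evaluation points are fixed by already-chosen cover elements, which is what is needed when the term is later bounded by $\epsilon_l^{(W2)}$ in the covering lemma. Your un-hatted anchor serves neither purpose, since its evaluation points depend on the true weights of every layer.

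Your remaining remarks are sound. Softmax is $1$-Lipschitz from $\ell_\infty$ to $\ell_1$, so you even get $B_V(1+2B_{KQ})$. The unit-norm caveat for the layer-$1$ input is a condition the paper's proof uses tacitly.
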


\begin{proof}
Let 
\begin{equation}
    g_{l} := g_{l}^{(tf)}\left(X ; W_{1:l}\right), \quad
    A_l := \sigma_R \left(\Pi_{norm} \left(g^{(msa)} \left(g_{l} ; W_{l}\right)\right)  W_{l}^{(1)} + b_{l}^{(1)} \right).
\end{equation}
By Lemma A.9 in \cite{2022_Edelman_et_al}, we have
\begin{equation} \label{eq:g_tf-g_tf}
    \begin{split}
        &\left\| g_{l+1}^{(tf)} \bigl(X ; W_{1:l+1}\bigl) - g_{l+1}^{(tf)} \bigl(X ; \widehat{W}_{1:l+1} \bigl) \right\|_{2, \infty} \\
        =\;& \left\| \Pi_{norm} \left(A_l W_{l}^{(2)} + b_{l}^{(2)}\right) - \Pi_{norm} \left(\widehat{A}_l \,\widehat{W}_{l}^{(2)} + \widehat{b}_{l}^{(2)}\right) \right\|_{2, \infty} \\
        \le\;& \left\| A_l W_{l}^{(2)} + b_{l}^{(2)} - \widehat{A}_l \, \widehat{W}_{l}^{(2)} - \widehat{b}_{l}^{(2)} \right\|_{2, \infty} \\
        =\;& \left\| A_l W_{l}^{(2)} - \widehat{A}_l \, \widehat{W}_{l}^{(2)} + \widehat{A}_lW_{l}^{(2)} - \widehat{A}_l W_{l}^{(2)} + b_{l}^{(2)} - \widehat{b}_{l}^{(2)} \right\|_{2, \infty} \\
        \le\;& \left\| \widehat{A}_l \left(W_{l}^{(2)} - \widehat{W}_{l}^{(2)}\right)\right\|_{2, \infty} + \left\| \left( A_l - \widehat{A}_l \right) W_{l}^{(2)} \right\|_{2, \infty} + \left\| b_{l}^{(2)} - \widehat{b}_{l}^{(2)} \right\|_{2, \infty},
    \end{split}
\end{equation}
where the last inequality follows from the triangle inequality. We take a closer look at the second term. By the spectral norm $\left\|W_{l}^{(2)}\right\|_2 \leq B_{W2}$, we have
\begin{equation}
    \left\| \left( A_l - \widehat{A}_l \right) W_{l}^{(2)} \right\|_{2, \infty} \le \left\|W_{l}^{(2)}\right\|_2 \left\| A_l - \widehat{A}_l \right\|_{2, \infty} \le B_{W2} \left\| A_l - \widehat{A}_l \right\|_{2, \infty}.
\end{equation}
Then we need to bound $\left\| A_l - \widehat{A}_l \right\|_{2, \infty}$. By Lemma A.9 in \cite{2022_Edelman_et_al} and the assumption that $\sigma_R$ is $L_\sigma$-Lipschitz, we have
\begin{equation} \label{eq:Al}
    \begin{split}
        &\left\| A_l - \widehat{A}_l \right\|_{2, \infty} \\
        \leq\;& L_\sigma \Biggl\| \Pi_{norm} \Bigl(g^{(msa)} \left(g_{l} ; W_{l}\right)\Bigl) W_{l}^{(1)} + b_{l}^{(1)} - \Pi_{norm} \Bigl(g^{(msa)} \Bigl(\widehat{g}_{l} ; \widehat{W}_{l}\Bigl)\Bigl) \widehat{W}_{l}^{(1)} - \widehat{b}_{l}^{(1)} \Biggl\|_{2, \infty} \\
        \leq\;& L_\sigma \Biggl\| \Pi_{norm} \Bigl(g^{(msa)} \left(g_{l} ; W_{l}\right)\Bigl) W_{l}^{(1)} - \Pi_{norm} \Bigl(g^{(msa)} \Bigl(\widehat{g}_{l} ; \widehat{W}_{l}\Bigl) \Bigl) \widehat{W}_{l}^{(1)} \Biggl\|_{2, \infty} \\
        &+ L_\sigma \Biggl\| b_{l}^{(1)} - \widehat{b}_{l}^{(1)} \Biggl\|_{2, \infty}
    \end{split}
\end{equation}
Looking at the first term, let 
\begin{equation}
    B_l := \Pi_{norm} \Bigl(g^{(msa)} \left(g_{l} ; W_{l}\right)\Bigl) ,
\end{equation}
then
\begin{equation}
    \begin{split}
        &L_\sigma \Biggl\| B_l W_{l}^{(1)} - \widehat{B}_{l} \,  \widehat{W}_{l}^{(1)} \Biggl\|_{2, \infty} \\
        \le\;& L_\sigma \left\| \widehat{B}_l \left(W_{l}^{(1)} - \widehat{W}_{l}^{(1)}\right)\right\|_{2, \infty} + L_\sigma \left\| \left( B_l - \widehat{B}_l \right) W_{l}^{(1)} \right\|_{2, \infty} \\
        \le\;& L_\sigma \left\| \widehat{B}_l \left(W_{l}^{(1)} - \widehat{W}_{l}^{(1)}\right)\right\|_{2, \infty} + L_\sigma B_{W1} \left\| B_l - \widehat{B}_l \right\|_{2, \infty}.
    \end{split}
\end{equation}
The last inequality uses the same idea as above by the spectral norm $\left\|W_{l}^{(1)}\right\|_2 \leq B_{W1}$. Next, we bound $\left\| B_l - \widehat{B}_l \right\|_{2, \infty}$ by Lemma A.9 in \cite{2022_Edelman_et_al} as the following
\begin{equation} \label{eq:Bl}
    \begin{split}
        &\left\| B_l - \widehat{B}_l \right\|_{2, \infty} \\
        \leq& \left\| g^{(msa)} \left(g_{l} ; W_{l}\right) - g^{(msa)} \left( \widehat{g}_{l} ; \widehat{W}_{l}\right) \right\|_{2, \infty} \\
        \leq& \left\| g^{(msa)} \left(g_{l} ; W_{l}\right) - g^{(msa)} \left(\widehat{g}_{l} ; W_{l}\right) + g^{(msa)} \left(\widehat{g}_{l} ; W_{l}\right) - g^{(msa)} \left( \widehat{g}_{l} ; \widehat{W}_{l}\right)\right\|_{2, \infty} \\
        \leq& \left\| g^{(msa)} \left(g_{l} ; W_{l}\right) - g^{(msa)} \left(\widehat{g}_{l} ; W_{l}\right) \right\|_{2, \infty} + \left\| g^{(msa)} \left(\widehat{g}_{l} ; W_{l}\right) - g^{(msa)} \left( \widehat{g}_{l} ; \widehat{W}_{l}\right) \right\|_{2, \infty}
    \end{split}
\end{equation}
where the third inequality follows from the triangle inequality. For the first term in \eqref{eq:Bl}, applying Lemma A.14 in \cite{2022_Edelman_et_al}, we have, for $\left\| W_{l}^{(V)} \right\|_2 \le B_{V}$, $\left\| W_{l}^{(KQ)} \right\|_2 \le B_{KQ}$,
\begin{equation}
    \left\| g^{(msa)} \left(g_{l} ; W_{l}\right) - g^{(msa)} \left(\widehat{g}_{l} ; W_{l}\right) \right\|_{2, \infty} \leq B_{V}\left(1+4 B_{KQ}\right) \left\| g_l - \widehat{g}_l \right\|_{2, \infty}.
\end{equation}
For the second term in \eqref{eq:Bl}, by Lemma A.13 in \cite{2022_Edelman_et_al}, we have
\begin{equation}
    \begin{split}
        &\left\| g^{(msa)} \left(\widehat{g}_{l} ; W_{l}\right) - g^{(msa)} \left( \widehat{g}_{l} ; \widehat{W}_{l}\right) \right\|_{2, \infty} \\
        \le\;& 2B_{V} \left\| \left(W^{(KQ)} - \widehat{W}^{(KQ)} \right)^{\top} \widehat{g}_l \right\|_{2, \infty} + \left\|\left(W^{(V)} - \widehat{W}^{(V)} \right) \widehat{g}_l \right\|_{2, \infty}.
    \end{split}
\end{equation}
Thus, \eqref{eq:Bl} becomes
\begin{equation}
    \begin{split}
        \left\| B_l - \widehat{B}_l \right\|_{2, \infty}
        \leq\;& B_{V}\left(1+4 B_{KQ}\right) \left\| g_l - \widehat{g}_l \right\|_{2, \infty} \\
        &+ 2 B_{V} \left\| \left(W^{(KQ)} - \widehat{W}^{(KQ)} \right)^{\top} \widehat{g}_l \right\|_{2, \infty} + \left\|\left(W^{(V)} - \widehat{W}^{(V)} \right) \widehat{g}_l \right\|_{2, \infty}.
    \end{split}
\end{equation}
Then \eqref{eq:Al} becomes
\begin{equation}
    \begin{alignedat}{2}
        &\left\| A_l - \widehat{A}_l \right\|_{2, \infty} \\
        \le\;& L_\sigma \left\| \widehat{B}_l \left(W_{l}^{(1)} - \widehat{W}_{l}^{(1)}\right)\right\|_{2, \infty} + L_\sigma B_{W1} \left\| B_l - \widehat{B}_l \right\|_{2, \infty} + L_\sigma \Biggl\| b_{l}^{(1)} - \widehat{b}_{l}^{(1)} \Biggl\|_{2, \infty} \\
        \le\;& L_\sigma \left\| \widehat{B}_l \left(W_{l}^{(1)} - \widehat{W}_{l}^{(1)}\right)\right\|_{2, \infty} + L_\sigma \Biggl\| b_{l}^{(1)} - \widehat{b}_{l}^{(1)} \Biggl\|_{2, \infty}\\
        &+ L_\sigma B_{W1} B_{V}\left(1+4 B_{KQ}\right) \left\| g_l - \widehat{g}_l \right\|_{2, \infty} \\
        &+ 2 L_\sigma B_{W1} B_{V} \left\| \left(W^{(KQ)} - \widehat{W}^{(KQ)} \right)^{\top} \widehat{g}_l \right\|_{2, \infty} \\
        &+ L_\sigma B_{W1} \left\|\left(W^{(V)} - \widehat{W}^{(V)} \right) \widehat{g}_l \right\|_{2, \infty}
    \end{alignedat}
\end{equation}
Combining the above and plugging into \eqref{eq:g_tf-g_tf} gives us
\begin{equation}
    \begin{split}
        &\left\| g_{l+1}^{(tf)} \bigl(X ; W_{1:l+1}\bigl) -g_{l+1}^{(tf)} \bigl(X ; \widehat{W}_{1:l+1} \bigl) \right\|_{2, \infty} \\
        \le\;& \left\| \widehat{A}_l \left(W_{l}^{(2)} - \widehat{W}_{l}^{(2)}\right)\right\|_{2, \infty} + \left\| b_{l}^{(2)} - \widehat{b}_{l}^{(2)} \right\|_{2, \infty} \\
        &+ B_{W2} L_\sigma \left\| \widehat{B}_l \left(W_{l}^{(1)} - \widehat{W}_{l}^{(1)}\right)\right\|_{2, \infty} \\
        &+ B_{W2} L_\sigma \Biggl\| b_{l}^{(1)} - \widehat{b}_{l}^{(1)} \Biggl\|_{2, \infty}\\
        &+ B_{W2} L_\sigma B_{W1} B_{V}\left(1+4 B_{KQ}\right) \left\| g_l - \widehat{g}_l \right\|_{2, \infty} \\
        &+ 2 B_{W2} L_\sigma B_{W1} B_{V} \left\| \left(W^{(KQ)} - \widehat{W}^{(KQ)} \right)^{\top} \widehat{g}_l \right\|_{2, \infty} \\
        &+ B_{W2} L_\sigma B_{W1} \left\|\left(W^{(V)} - \widehat{W}^{(V)} \right) \widehat{g}_l \right\|_{2, \infty}.
    \end{split} 
\end{equation}
\end{proof}

\begin{lemma}[Rework of Lemma A.16 in \cite{2022_Edelman_et_al}] \label{lemma:A16_Edelman}
    Suppose $W_{1:L+1}, \widehat{W}_{1:L+1}$ satisfy our norm bounds. Then we have
    \begin{equation}
        \begin{split}
            &\left| g^{(tfw)} \bigl(X ; W_{1:L+1}, w\bigl) - g^{(tfw)} \bigl(X ; \widehat{W}_{1:L+1}, \widehat{w}\bigl) \right| \\
            \le\;& \|w\|_2 \cdot \Biggl\| g_{L+1}^{(tf)} \bigl(X ; W_{1:L+1}\bigl) - g_{L+1}^{(tf)} \bigl(X ; \widehat{W}_{1:L+1}\bigl) \Biggl\|_2 \\
        &+ \Biggl| \left( g_{L+1}^{(tf)} \bigl(X ; \widehat{W}_{1:L+1}\bigl) - g_{L+1}^{(tf)} \bigl(X ; \widehat{W}_{1:L+1}\bigl) \right) \cdot \Bigl( w -\widehat{w} \Bigl) \Biggl|.
        \end{split} 
    \end{equation}
\end{lemma}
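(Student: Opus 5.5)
The plan is a single add-and-subtract step followed by the triangle inequality and Cauchy--Schwarz. Following \cite{2022_Edelman_et_al}, I read the scalar-output map as a linear readout of the final Transformer representation:
\[
g^{(tfw)}(X;W_{1:L+1},w) = g_{L+1}^{(tf)}(X;W_{1:L+1})\cdot w .
\]
Here the readout is either applied to a designated output column or to the vectorised output, which matches the map $g^{(tf)}\cdot w$ in \eqref{eq:function_class}. The dot product is the Euclidean inner product, and $\|\cdot\|_2$ is the matching Euclidean norm. The stated second term contains the difference $g_{L+1}^{(tf)}(X;\widehat{W}_{1:L+1}) - g_{L+1}^{(tf)}(X;\widehat{W}_{1:L+1})$, which is identically zero. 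I treat this as a typo and prove the intended bound, in which the second term is $\bigl|g_{L+1}^{(tf)}(X;\widehat{W}_{1:L+1})\cdot(w-\widehat{w})\bigr|$.

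Write $g := g_{L+1}^{(tf)}(X;W_{1:L+1})$ and $\widehat{g} := g_{L+1}^{(tf)}(X;\widehat{W}_{1:L+1})$. Adding and subtracting $\widehat{g}\cdot w$ gives
\[
g\cdot w - \widehat{g}\cdot\widehat{w} = (g-\widehat{g})\cdot w + \widehat{g}\cdot(w-\widehat{w}).
\]
Take absolute values and apply the triangle inequality. Then bound the first piece by Cauchy--Schwarz, $|(g-\widehat{g})\cdot w|\le \|w\|_2\,\|g-\widehat{g}\|_2$. The second piece is left as it stands, which yields the claimed inequality.

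There is no real obstacle here; the only care needed is in fixing the conventions. The norm on $g-\widehat{g}$ must be the one dual to the norm placed on $w$. If the readout acts on a single column, the first term is controlled by the column norm, and hence by the $\|\cdot\|_{2,\infty}$ bound of Lemma \ref{lemma:A15_Edelman}. This is what lets the present lemma chain with Lemma \ref{lemma:A15_Edelman} when the cover is built. The second term is kept in data-dependent form rather than bounded by $\|\widehat{g}\|_2\|w-\widehat{w}\|_2$, so that $w$ can be covered as a linear functional on the (normalised, hence bounded) outputs $\widehat{g}$. Those outputs have column norm at most $1$ because of $\Pi_{norm}$.
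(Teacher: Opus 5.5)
Your argument is correct and is the same as the paper's: add and subtract $\widehat{g}\cdot w$, apply the triangle inequality, and bound $|(g-\widehat{g})\cdot w|$ by $\|w\|_2\|g-\widehat{g}\|_2$. In the paper $w\in\mathbb{R}^n$ weights the $n$ columns, so that last step is the operator-norm version of your Cauchy--Schwarz bound, and your one-line argument carries over unchanged. You are also right that the second term is a typo: the paper's own derivation ends with $\bigl|g_{L+1}^{(tf)}(X;\widehat{W}_{1:L+1})\cdot(w-\widehat{w})\bigr|$, which is exactly the term you keep.
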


\begin{proof}
In the last layer, we have $W_{L+1}^{(2)} \in \mathbb{R}^{d \times d_h}, b_{L+1}^{(2)} \in \mathbb{R}^{d}$. We add a linear mapping with the weight $w \in \mathbb{R}^{n}$,
\begin{equation}
    \begin{split}
        &\left| g_{L+1}^{(tfw)} \bigl(X ; W_{1:L+1}, w\bigl) - g_{L+1}^{(tfw)} \bigl(X ; \widehat{W}_{1:L+1}, \widehat{w}\bigl) \right| \\
        =\;& \left| g_{L+1}^{(tf)} \bigl(X ; W_{1:L+1}\bigl) \cdot w - g_{L+1}^{(tf)} \bigl(X ; \widehat{W}_{1:L+1}\bigl) \widehat{w} \right| \\
        =\;& \left| g_{L+1}^{(tf)} \bigl(X ; W_{1:L+1}\bigl) \cdot w - g_{L+1}^{(tf)} \bigl(X ; \widehat{W}_{1:L+1}\bigl) \cdot w + g_{L+1}^{(tf)} \bigl(X ; \widehat{W}_{1:L+1}\bigl) \cdot w - g_{L+1}^{(tf)} \bigl(X ; \widehat{W}_{1:L+1}\bigl) \cdot\widehat{w} \right| \\
        \le\;& \Biggl| \Bigl( g_{L+1}^{(tf)} \bigl(X ; W_{1:L+1}\bigl) - g_{L+1}^{(tf)} \bigl(X ; \widehat{W}_{1:L+1}\bigl) \Bigl) \cdot w \Biggl| \;+\; \Biggl| g_{L+1}^{(tf)} \bigl(X ; \widehat{W}_{1:L+1}\bigl) \cdot \Bigl( w -\widehat{w} \Bigl) \Biggl| \\
        \le\;& \|w\|_2 \cdot \Biggl\| g_{L+1}^{(tf)} \bigl(X ; W_{1:L+1}\bigl) - g_{L+1}^{(tf)} \bigl(X ; \widehat{W}_{1:L+1}\bigl) \Biggl\|_2 \;+\; \Biggl| g_{L+1}^{(tf)} \bigl(X ; \widehat{W}_{1:L+1}\bigl) \cdot \Bigl( w -\widehat{w} \Bigl) \Biggl|.
    \end{split} 
\end{equation}
The first inequality uses the triangle inequality. The second inequality uses the inner product by norms.
\end{proof}


\begin{lemma}[Rework of Theorem 4.2 in \cite{2024_Trauger_and_Tewari}] \label{theorem:4.2_Trauger}
    Suppose we have a log covering numbers in the form of $C_1 / \epsilon^2$ and $C_{B_x} / \epsilon^2$ for the function class $\{x \mapsto W x \mid x \in \mathcal{X}, W \in \mathcal{W}\}$ where $\|x\|_2 \leq 1, \forall x \in \mathcal{X}$ and $\|x\|_2 \leq B_x, \forall x \in \mathcal{X}$ respectively. Suppose we also have $\left\|W_{l}^{(2)}\right\|_2 \leq B_{W2}$, $\left\|W_{l}^{(1)}\right\|_2 \leq B_{W1}$, $\left\|W_{l}^{(V)}\right\|_2 \leq B_{V}$, $\left\|W_l^{(QK)}\right\|_2 \leq B_{QK}$, $\|w\|_2 \leq B_{w}$, $\left\|b_l^{(1)}\right\|_1 \leq B_{b1}$, $\left\|b_l^{(2)}\right\|_1 \leq B_{b2}$. Let $\alpha_l := \prod_{j=l+1}^L B_{W2} L_\sigma B_{W1} B_{V}\left(1+4 B_{KQ}\right)$ and
    \begin{equation}
        \begin{split}
            \beta_l^{(2)} := B_{w} \alpha_l, &\quad 
            \beta_l^{(1)} := B_{w} \alpha_l \cdot B_{W2} L_{\sigma}, \\
            \beta_l^{(KQ)} := B_{w} \alpha_l \cdot 2 B_{W2} L_\sigma B_{W1} B_{V}, &\quad
            \beta_l^{(V)} := B_{w} \alpha_l \cdot B_{W2} L_\sigma B_{W1}.
        \end{split}
    \end{equation}
    Then the covering number of $g_{L+1}^{(tfw)}$ is
    \begin{equation}
        \frac{(\widetilde{\eta} + \eta)^3}{\epsilon^2}.
    \end{equation}
    where 
    \begin{equation}
        \begin{split}
            \eta &= C_1^{\frac{1}{3}} \sum_{l=2}^L \Bigl( 2{\beta_l^{(2)}}^{\frac{2}{3}} + 2{\beta_l^{(1)}}^{\frac{2}{3}} + {\beta_l^{(KQ)}}^{\frac{2}{3}} + {\beta_l^{(V)}}^{\frac{2}{3}} \Bigl), \\
            \widetilde{\eta} &= C_{B_x}^{\frac{1}{3}} \Bigl( \beta_1^{(KQ)} \Bigl) + C_1^{\frac{1}{3}} \Bigl( 2{\beta_1^{(2)}}^{\frac{2}{3}} + 2{\beta_1^{(1)}}^{\frac{2}{3}} + 2{\beta_1^{(V)}}^{\frac{2}{3}} \Bigl) + C_1^{\frac{1}{3}}.
        \end{split}
    \end{equation}
\end{lemma}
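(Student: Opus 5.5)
The plan is the same telescoping-plus-optimization scheme as Theorem 4.2 of \cite{2024_Trauger_and_Tewari}, with Lemma~\ref{lemma:A15_Edelman} and Lemma~\ref{lemma:A16_Edelman} as the one-step perturbation bounds.

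\textbf{Step 1: unroll the layer-wise bound.} First combine Lemma~\ref{lemma:A16_Edelman} with Lemma~\ref{lemma:A15_Edelman} and unroll the recursion from layer $L+1$ down to layer $1$. Each application of Lemma~\ref{lemma:A15_Edelman} multiplies the propagated error $\|g_l^{(tf)}(X;W_{1:l})-g_l^{(tf)}(X;\widehat W_{1:l})\|_{2,\infty}$ by the factor $B_{W2}L_\sigma B_{W1}B_V(1+4B_{KQ})$. After telescoping, the final scalar error is bounded by a weighted sum of single-layer perturbation terms:
\begin{equation}
\bigl\|\widehat A_l(W_l^{(2)}-\widehat W_l^{(2)})\bigr\|_{2,\infty},\qquad \|b_l^{(2)}-\widehat b_l^{(2)}\|,\qquad \bigl\|\widehat B_l(W_l^{(1)}-\widehat W_l^{(1)})\bigr\|_{2,\infty},
\end{equation}
together with $\|b_l^{(1)}-\widehat b_l^{(1)}\|$, the key--query term $\|(W^{(KQ)}-\widehat W^{(KQ)})^\top\widehat g_l\|_{2,\infty}$ and the value term $\|(W^{(V)}-\widehat W^{(V)})\widehat g_l\|_{2,\infty}$. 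Their weights are exactly $\beta_l^{(2)},\beta_l^{(1)},\beta_l^{(KQ)},\beta_l^{(V)}$, where $\alpha_l$ collects the products of the downstream factors and $B_w$ comes from the final projection in Lemma~\ref{lemma:A16_Edelman}. The remaining projection term $|g_{L+1}^{(tf)}(\cdot)(w-\widehat w)|$ is handled as one further linear class on unit-norm inputs.

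\textbf{Step 2: build the cover greedily.} Work layer by layer, with a radius $\epsilon_l^{(\cdot)}$ for each perturbation term. Because of the layer normalization $\Pi_{norm}$, every input to a linear map from layer $2$ onward has columns in the unit ball. The assumed linear-class bound $C_1/\epsilon^2$ therefore applies to the $W^{(1)},W^{(2)},W^{(KQ)},W^{(V)}$ terms at layers $l\ge2$. The bias terms are covered by Lemma~\ref{lemma:covering_B}, whose bound is also of the form $\text{const}/\epsilon^2$, and I will absorb them at the same rate; this is where the factors of $2$ multiplying $\beta^{(2)}$ and $\beta^{(1)}$ come from. At layer $1$ the input $X$ is only bounded by $B_x$, so the key--query term there uses $C_{B_x}/\epsilon^2$. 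The cover is built conditionally: for each cover element at layer $l$, cover layer $l+1$ given the fixed $\widehat g_l$. This makes the total log covering number the sum $\sum_l\sum_{\cdot} C_\cdot/(\epsilon_l^{(\cdot)})^2$.

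\textbf{Step 3: allocate the radii.} Require the weighted radii to satisfy $\sum_l\sum_\cdot \beta_l^{(\cdot)}\epsilon_l^{(\cdot)}=\epsilon$ and apply Lemma~\ref{lemma:A8_Edelman}. With $C_l\in\{C_1,C_{B_x}\}$, this gives total log covering number $\gamma^3/\epsilon^2$ with $\gamma=\sum C_l^{1/3}\beta_l^{2/3}$. Splitting the sum into the $l\ge2$ part and the $l=1$ part yields $\eta+\widetilde\eta$. The extra $C_1^{1/3}$ in $\widetilde\eta$ accounts for the projection $w$.

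\textbf{Main obstacle.} The delicate point is the bookkeeping in Step 2 at layer $1$ versus later layers. I need to check that the layer-$1$ terms pick up the right constants, using $C_{B_x}$ only for the key--query term since value and FFN inputs are already normalized. I also need to confirm that covering the biases via Lemma~\ref{lemma:covering_B} can be folded into the $C_1/\epsilon^2$ form without changing the exponent. I would verify this first and adjust the constants in $\widetilde\eta$ if needed.
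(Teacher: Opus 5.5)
Your skeleton matches the paper's proof. You telescope Lemma~\ref{lemma:A16_Edelman} and Lemma~\ref{lemma:A15_Edelman} to get the weights $\beta_l^{(\cdot)}$, then cover each parameter block layer by layer; your conditional construction is the right way to read the paper's $\otimes$ product. You reserve $C_{B_x}$ for $W_1^{(KQ)}$, charge $C_1^{1/3}$ for $w$, and allocate radii with Lemma~\ref{lemma:A8_Edelman}.

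The step that fails is the one you flagged: covering the biases with Lemma~\ref{lemma:covering_B}. That lemma is a volumetric bound for a Euclidean ball and gives $\log N\le d\log(3B_b)+d/\epsilon^2$. This is an entropy constant of order $d(1+\log 3B_b)$, linear in the ambient dimension. Nothing in the hypotheses lets you bound it by $C_1$. In the intended regime ($C_1\asymp B^2\log(dn)$ for $\|\cdot\|_{2,1}$-bounded weights, as in the lemma that follows) it is much larger. So ``absorbing the biases at the same rate'' does not produce $2{\beta_l^{(2)}}^{2/3}+2{\beta_l^{(1)}}^{2/3}$ under a common $C_1^{1/3}$. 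You would instead get extra terms weighted by $(d(1+\log 3B_b))^{1/3}$, losing the logarithmic dependence on $d$ that Corollary~\ref{cor:convergence_rate} relies on.

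The missing idea is why the statement imposes $\ell_1$ rather than Euclidean bounds on $b_l^{(1)},b_l^{(2)}$: the paper covers the biases by Maurey's sparsification (Theorem 3 of \cite{2002_Zhang}). Write $b$ with $\|b\|_1\le B_b$ as a convex combination of $\{0,\pm B_b e_i\}$ and average $k=\lceil B_b^2/\epsilon^2\rceil$ draws. This gives an $\epsilon$-cover in $\ell_2$ of size $(2d+1)^k$, so $\log N\lesssim B_b^2\log(2d+1)/\epsilon^2$. That has the same $1/\epsilon^2$ form, is only logarithmic in $d$, and can be merged with the weight-matrix terms up to constants. Strictly, merging still needs $B_b^2\log(2d+1)\lesssim C_1$, which the paper leaves implicit; the next lemma in fact lists these bias terms separately with constant $(B^2\log(2d+1))^{1/3}$.

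On the layer-$1$ bookkeeping, three points:
\begin{enumerate}
\item Your claim that value inputs are already normalized is false at $l=1$. There $W_1^{(V)}$ acts on $g_1^{(tf)}=X$, whose columns are only bounded by $B_x$. Strictly, that term also needs $C_{B_x}$ (unless $B_x\le 1$); the paper makes the same simplification.
\item No adjustment is needed for the coefficient $2$ on ${\beta_1^{(V)}}^{2/3}$ in the statement. Your allocation, like the end of the paper's own derivation, gives coefficient $1$, and a larger $\widetilde\eta$ only weakens the bound.
\item Lemma~\ref{lemma:A8_Edelman} yields $C_{B_x}^{1/3}{\beta_1^{(KQ)}}^{2/3}$, not the $C_{B_x}^{1/3}\beta_1^{(KQ)}$ written in the statement. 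The written form dominates yours only when $\beta_1^{(KQ)}\ge 1$, so treat it as a typo rather than something to reproduce.
\end{enumerate}
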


\begin{proof}
By Lemma \ref{lemma:A16_Edelman}, we know 
\begin{equation}
    \begin{split}
        &\left| g^{(tfw)} \bigl(X ; W_{1:L+1}, w\bigl) - g^{(tfw)} \bigl(X ; \widehat{W}_{1:L+1}, \widehat{w}\bigl) \right| \\
        \le\;& \|w\|_2 \cdot \Biggl\| g_{L+1}^{(tf)} \bigl(X ; W_{1:L+1}\bigl) - g_{L+1}^{(tf)} \bigl(X ; \widehat{W}_{1:L+1}\bigl) \Biggl\|_2 + \Biggl| g_{L+1}^{(tf)} \bigl(X ; \widehat{W}_{1:L+1}\bigl) \cdot \Bigl( w -\widehat{w} \Bigl) \Biggl|.
    \end{split} 
\end{equation}
Thus, we have
\begin{equation}
    \begin{split}
        &\left| g^{(tfw)} \bigl(X ; W_{1:L+1}, w\bigl) - g^{(tfw)} \bigl(X ; \widehat{W}_{1:L+1}, \widehat{w}\bigl) \right| \\
        \le\;& \|w\|_2 \cdot \Biggl\| g_{L+1}^{(tf)} \bigl(X ; W_{1:L+1}\bigl) - g_{L+1}^{(tf)} \bigl(X ; \widehat{W}_{1:L+1}\bigl) \Biggl\|_2 +\; \epsilon^{(w)}.
    \end{split} 
\end{equation}

Next, by Lemma \ref{lemma:A15_Edelman}, we have 
\begin{equation} \label{eq:by_lemma_A15}
    \begin{split}
        &\left\| g_{l+1}^{(tf)} \bigl(X ; W_{1:l+1}\bigl) - g_{l+1}^{(tf)} \bigl(X ; \widehat{W}_{1:l+1} \bigl) \right\|_{2, \infty} \\
        \le\;& \Biggl\| \Biggl( \sigma_R \left(\Pi_{norm} \left(g^{(msa)} \left(g_{l}^{(tf)} \left(X;\widehat{W}_{1:l}\right); W_{l}\right)\right)  W_{l}^{(1)} + b_{l}^{(1)} \right) \Biggl) \left(W_{l}^{(2)} - \widehat{W}_{l}^{(2)}\right) \Biggl\|_{2, \infty} \\
        &+ \left\| b_{l}^{(2)} - \widehat{b}_{l}^{(2)} \right\|_{2, \infty} \\
        &+ B_{W2} L_\sigma \left\| \Pi_{norm} \left(g^{(msa)} \left(g_{l}^{(tf)} \left(X ; \widehat{W}_{1:l} \right) ; \widehat{W}_{l}\right)\right) \left(W_{l}^{(1)} - \widehat{W}_{l}^{(1)}\right)\right\|_{2, \infty} \\
        &+ B_{W2} L_\sigma \Biggl\| b_{l}^{(1)} - \widehat{b}_{l}^{(1)} \Biggl\|_{2, \infty}\\
        &+ B_{W2} L_\sigma B_{W1} B_{V}\left(1+4 B_{KQ}\right) \Bigl\| g_{l}^{(tf)}\left(X ; W_{1:l}\right) - g_{l}^{(tf)}\left(X ; \widehat{W}_{1:l}\right)\Bigl\|_{2, \infty} \\
        &+ 2 B_{W2} L_\sigma B_{W1} B_{V} \left\| \left(W^{(KQ)} - \widehat{W}^{(KQ)} \right)^{\top} g_{l}^{(tf)}\left(X ; \widehat{W}_{1:l}\right)\right\|_{2, \infty} \\
        &+ B_{W2} L_\sigma B_{W1} \left\|\left(W^{(V)} - \widehat{W}^{(V)} \right) g_{l}^{(tf)}\left(X ; \widehat{W}_{1:l}\right)\right\|_{2, \infty}.
    \end{split} 
\end{equation}
Notice that if $X_t \in \mathbb{R}^{d \times n}$ with $t \in[1,T]$, and $W \in \mathbb{R}^{d \times d}$
\begin{equation}
    \max_{t \in [1,T]} \left\|(W - \widehat{W}) X_t\right\|_{2, \infty} = \max_{t \in[1,T], i \in[1,n]} \left\|(W - \widehat{W}) X_{t,i}\right\|
\end{equation}
Therefore we can use the covering number bounds to bound the values in the $\|\cdot\|_{2, \infty}$. 

First, we create a cover for multilayer Transformer with inputs $X_1, \ldots, X_T$ and $\|X_t\|_2 \le B_x$. Let $\mathcal{W}_{l}^{(V)}$, $\mathcal{W}_{l}^{(KQ)}$, $\mathcal{W}_{l}^{(1)}$, $\mathcal{W}_{l}^{(2)}$, $\mathcal{B}_{l}^{(1)}$, $\mathcal{B}_{l}^{(2)}$ be the sets of all possible values for $W_{l}^{(V)}$, $W_{l}^{(KQ)}$, $W_{l}^{(1)}$, $W_{l}^{(2)}$, $b_{l}^{(1)}$, $b_{l}^{(2)}$ respectively. Let $\widehat{\mathcal{W}}_{l}^{(V)}$ cover the function class 
\begin{equation}
    \left\{x \mapsto {W^{(V)}}^{\top} x \mid W^{(V)} \in \mathcal{W}_{l}^{(V)},\|x\|_2 \leq 1\right\}.
\end{equation} 
Let $\widehat{\mathcal{W}}_{l}^{(KQ)}$ cover the function class 
\begin{equation}
    \left\{x \mapsto W^{(KQ)} x \mid W^{(KQ)} \in \mathcal{W}_{l}^{(KQ)}, \|x\|_2 \leq 1\right\}
\end{equation} 
except for $\widehat{\mathcal{W}}_{1}^{(KQ)}$, which covers the same function class, but with $\|x\| \leq B_x$. Let $\widehat{\mathcal{W}}_{l}^{(1)}$ cover the function class 
\begin{equation}
    \left\{x \mapsto {W^{(1)}}^{\top} x \mid W^{(1)} \in \mathcal{W}_{l}^{(1)},\|x\|_2 \leq 1\right\}.
\end{equation} 
And let $\widehat{\mathcal{W}}_{l}^{(2)}$ cover the function class 
\begin{equation}
    \left\{x \mapsto {W^{(2)}}^{\top} x \mid W^{(2)} \in \mathcal{W}_{l}^{(2)},\|x\|_2 \leq 1\right\}.
\end{equation} 
Let $\widehat{\mathcal{B}}_{l}^{(1)}, \widehat{\mathcal{B}}_{l}^{(2)}$ cover $b_l^{(1)}, b_l^{(2)}$, respectively. Let all of these classes be covered with $Tn$ points and let $\epsilon_{l}^{(V)}, \epsilon_{l}^{(KQ)}, \epsilon_{l}^{(W1)}, \epsilon_{l}^{(W2)}, \epsilon_{l}^{(b1)}, \epsilon_{l}^{(b2)}$ be the resolution for each cover. Also, let $\widehat{\mathcal{W}}$ cover 
\begin{equation}
    \left\{x \mapsto w^{\top} x \mid w \in \mathcal{W},\|x\|_2 \leq 1\right\}
\end{equation}
at resolution $\epsilon^{(w)}$. Thus, our final cover is
\begin{equation}
    \begin{split}
        \widehat{W}^{1:L+1} \in 
        \widehat{\mathcal{W}}_1^{(KQ)} \otimes 
        \widehat{\mathcal{W}}_1^{(V)} &\otimes 
        \widehat{\mathcal{W}}_1^{(1)} \otimes 
        \widehat{\mathcal{B}}_1^{(1)} \otimes 
        \widehat{\mathcal{W}}_1^{(2)} \otimes 
        \widehat{\mathcal{B}}_1^{(2)} \otimes \cdots \\
        &\cdots \otimes \widehat{\mathcal{W}}_L^{(KQ)} \otimes 
        \widehat{\mathcal{W}}_L^{(V)} \otimes 
        \widehat{\mathcal{W}}_L^{(1)} \otimes 
        \widehat{\mathcal{B}}_L^{(1)} \otimes 
        \widehat{\mathcal{W}}_L^{(2)} \otimes 
        \widehat{\mathcal{B}}_L^{(2)} \otimes 
        \widehat{\mathcal{W}}.
    \end{split}
\end{equation}

Then \eqref{eq:by_lemma_A15} becomes
\begin{equation}
    \begin{split}
        &\left\| g_{L+1}^{(tf)} \bigl(X ; W_{1:L+1}\bigl) - g_{L+1}^{(tf)} \bigl(X ; \widehat{W}_{1:L+1} \bigl) \right\|_{2, \infty} \\
        \le\;& \epsilon_{L}^{(W2)} + \epsilon_{L}^{(b2)} + B_{W2} L_{\sigma} \epsilon_{L}^{(W1)} + B_{W2} L_{\sigma} \epsilon_{L}^{(b1)} \\
        &+ B_{W2} L_\sigma B_{W1} B_{V}\left(1+4 B_{KQ}\right) \Bigl\| g_{L}^{(tf)}\left(X ; W_{1:L}\right) - g_{L}^{(tf)}\left(X ; \widehat{W}_{1:L}\right)\Bigl\|_{2, \infty} \\
        &+ 2 B_{W2} L_\sigma B_{W1} B_{V} \epsilon_{L}^{(KQ)} + B_{W2} L_\sigma B_{W1} \epsilon_{L}^{(V)}.
    \end{split}
\end{equation}
We can bound $\left\| g_{L}^{(tf)} \bigl(X ; W_{1:L}\bigl) - g_{L}^{(tf)} \bigl(X ; \widehat{W}_{1:L} \bigl) \right\|_{2, \infty}$ till $\left\| g_{1}^{(tf)} \bigl(X ; W_{1:1}\bigl) - g_{1}^{(tf)} \bigl(X ; \widehat{W}_{1:1} \bigl) \right\|_{2, \infty}$ iteratively. Let
\begin{equation}
    \alpha_l = \prod_{j=l+1}^L B_{W2} L_\sigma B_{W1} B_{V}\left(1+4 B_{KQ}\right),
\end{equation}
then
\begin{equation}
    \begin{split}
        & \max_{t \in T} \left| g^{(tfw)} \bigl(X_t ; W_{1:L+1}, w\bigl) - g^{(tfw)} \bigl(X_t ; \widehat{W}_{1:L+1}, \widehat{w} \bigl) \right| \\
        \leq\;& \epsilon^{(w)} + B_{w} \sum_{l=2}^L \alpha_l \Bigl( \epsilon_{l}^{(W2)} + \epsilon_{l}^{(b2)} + B_{W2} L_{\sigma} \epsilon_{l}^{(W1)} + B_{W2} L_{\sigma} \epsilon_{l}^{(b1)} \\
        &\quad \quad + 2 B_{W2} L_\sigma B_{W1} B_{V} \epsilon_{l}^{(KQ)} + B_{W2} L_\sigma B_{W1} \epsilon_{l}^{(V)} \Bigl) \\
        &+ B_{w}\alpha_1 \Bigl( \epsilon_{1}^{(W2)} + \epsilon_{1}^{(b2)} + B_{W2} L_{\sigma} \epsilon_{1}^{(W1)} + B_{W2} L_{\sigma} \epsilon_{1}^{(b1)} \\
        &\quad \quad+ 2 B_{W2} L_\sigma B_{W1} B_{V} \epsilon_{1}^{(KQ)} + B_{W2} L_\sigma B_{W1} \epsilon_{1}^{(V)} \Bigl).
    \end{split}
\end{equation}
Recall that $\epsilon_{1}^{(KQ)}$ has a different input bound than the rest. Define $\beta_l^{(2)} := B_{w} \alpha_l$, $\beta_l^{(1)} := B_{w} \alpha_l \cdot B_{W2} L_{\sigma}$, $\beta_l^{(KQ)} := B_{w} \alpha_l \cdot 2 B_{W2} L_\sigma B_{W1} B_{V}$, and $\beta_l^{(V)} := B_{w} \alpha_l \cdot B_{W2} L_\sigma B_{W1}$. To get the covering number, we solve
\begin{equation}
    \begin{split}
        \min_{\epsilon_1, \ldots, \epsilon_L} \epsilon^{(w)} &+ \sum_{l=2}^L \Biggl( \beta_l^{(2)} \left(\epsilon_l^{(W2)} + \epsilon_l^{(b2)}\right) + \beta_l^{(1)}\left(\epsilon_l^{(W 1)} + \epsilon_l^{(b1)}\right) + \beta_l^{(KQ)} \epsilon_l^{(KQ)} + \beta_l^{(V)} \epsilon_l^{(V)} \Biggl) \\
        & + \Biggl( \beta_1^{(2)}\left(\epsilon_1^{(W2)} + \epsilon_1^{(b2)}\right) + \beta_1^{(1)}\left(\epsilon_1^{(W1)} + \epsilon_1^{(b1)}\right) + \beta_1^{(KQ)} \epsilon_1^{(KQ)} + \beta_1^{(V)} \epsilon_1^{(V)} \Biggl).
    \end{split}
\end{equation}
Let $i \in \mathcal{I} := \{W, W 1, b1, b2, W 2, K Q, V\}$. Recall that we have log covering numbers in the form of $C_1 / \epsilon^2$ and $C_{B_x} / \epsilon^2$ for the function class $\{x \mapsto W x \mid x \in \mathcal{X}, W \in \mathcal{W}\}$. For $\epsilon_l^{(b1)}$ and $\epsilon_l^{(b2)}$, we apply Maurey's specification (See Theorem 3 in \cite{2002_Zhang}). Therefore, summing across all parameter blocks, the total log covering number can be bounded as
\begin{equation}
    \log N_{\infty}\left(\mathcal{F}, \epsilon, \{x_i\}_{i=1}^n, \|\cdot\|_2\right) \leq \sum_l \sum_{i} \frac{C_l^{(i)}}{\left(\epsilon_l^{(i)}\right)^2},
\end{equation}
To find the tightest possible bound, we optimally distribute the total error budget $\epsilon$ among the individual perturbations $\epsilon_l^{(i)}$. This leads to the optimization problem
\begin{equation}
    \min_{\left\{\epsilon_l^{(i)} > 0\right\}} \sum_l \sum_i \frac{C_l^{(i)}}{\left(\epsilon_l^{(i)}\right)^2} \quad \text{subject to} \quad \sum_l \sum_i \beta_l^{(i)} \epsilon_l^{(i)} = \epsilon.
\end{equation}
Applying Lemma \ref{lemma:A8_Edelman}, we have
\begin{equation}
    \begin{split}
        \eta &= C_1^{\frac{1}{3}} \sum_{l=2}^L \Bigl( 2{\beta_l^{(2)}}^{\frac{2}{3}} + 2{\beta_l^{(1)}}^{\frac{2}{3}} + {\beta_l^{(KQ)}}^{\frac{2}{3}} + {\beta_l^{(V)}}^{\frac{2}{3}} \Bigl), \\
        \widetilde{\eta} &= C_{B_x}^{\frac{1}{3}} \Bigl( \beta_1^{(KQ)} \Bigl) + C_1^{\frac{1}{3}} \Bigl( 2{\beta_1^{(2)}}^{\frac{2}{3}} + 2{\beta_1^{(1)}}^{\frac{2}{3}} + {\beta_1^{(V)}}^{\frac{2}{3}} \Bigl) + C_1^{\frac{1}{3}}.
    \end{split}
\end{equation}
Then the covering number is bounded by
\begin{equation}
    \frac{(\widetilde{\eta} + \eta)^3}{\epsilon^2}.
\end{equation}
\end{proof}

\begin{lemma}
    Under Lemma \ref{theorem:4.2_Trauger}, and suppose we have the norm bounds $\mathcal{W} = \{ W \in \mathbb{R}^{k \times d} : \|W\|_{2,1} \le B_W \}$ with $B_W > 0$ for each $W_l^{(1)}$, $W_l^{(2)}$, $W_l^{(KQ)}$, $W_l^{(V)}$, $w$, and $\mathcal{B} = \{ b \in \mathbb{R}^d : \|b\|_1 \le B_b \}$ with $B_b > 0$ for $b_l^{(1)}$ and $b_l^{(2)}$. Let the maximum of all norm bounds be $B$. Let $x_1, \ldots, x_n \in \mathbb{R}^{d}$ satisfy $\|x_i\|_2 \le B_x$, we have the log covering number of $g_{L+1}^{(tfw)} (X, W_{1:L+1},w)$ as
    \begin{equation}
        \frac{(\widetilde{\eta} + \eta)^3}{\epsilon^2}.
    \end{equation}
    where 
    \begin{equation}
        \begin{split}
            \eta =\;& \sum_{l=2}^L \Bigl(B^2 \log(dn)\Bigl)^{\frac{1}{3}} \Bigl( {\beta_l^{(2)}}^{\frac{2}{3}} + {\beta_l^{(1)}}^{\frac{2}{3}} + {\beta_l^{(KQ)}}^{\frac{2}{3}} + {\beta_l^{(V)}}^{\frac{2}{3}} \Bigl) + \bigl(B^2 \log(2d+1)\bigl)^\frac{1}{3} \Bigl( {\beta_l^{(2)}}^{\frac{2}{3}} + {\beta_l^{(1)}}^{\frac{2}{3}} \Bigl) \\
            \widetilde{\eta} =\;& \Bigl(B^2B_x^2 \log(dn)\Bigl)^{\frac{1}{3}} \Bigl( \beta_1^{(KQ)} \Bigl) + \Bigl(B^2 \log(dn)\Bigl)^{\frac{1}{3}} \Bigl( 1 + {\beta_1^{(2)}}^{\frac{2}{3}} + {\beta_1^{(1)}}^{\frac{2}{3}} + {\beta_1^{(V)}}^{\frac{2}{3}} \Bigl) \\
            &+ \bigl(B^2 \log(2d+1)\bigl)^\frac{1}{3} \Bigl( {\beta_1^{(2)}}^{\frac{2}{3}} + {\beta_1^{(1)}}^{\frac{2}{3}} \Bigl),
        \end{split}
    \end{equation}
    with $\beta_l^{(2)} := B_{w} \alpha_l$, $\beta_l^{(1)} := B_{w} \alpha_l \cdot B_{W2} L_{\sigma}$, $\beta_l^{(KQ)} := B_{w} \alpha_l \cdot 2 B_{W2} L_\sigma B_{W1} B_{V}$, and $\beta_l^{(V)} := B_{w} \alpha_l \cdot B_{W2} L_\sigma B_{W1}$ and $\alpha_l = \prod_{j=l+1}^L B_{W2} L_\sigma B_{W1} B_{V}\left(1+4 B_{KQ}\right)$.
\end{lemma}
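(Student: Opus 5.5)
The plan is to specialise Lemma \ref{theorem:4.2_Trauger}: the abstract constants $C_1$ and $C_{B_x}$ in its log-covering bound become explicit once we know the covering numbers of linear classes $\{x\mapsto Wx\}$ under $\ell_{2,1}$ constraints and of bias vectors under $\ell_1$ constraints. The resolution-allocation argument (Lemma \ref{lemma:A8_Edelman}) and the layerwise perturbation recursion (Lemmas \ref{lemma:A15_Edelman} and \ref{lemma:A16_Edelman}) are taken as given. The only new input is a bound of the form $\log N_\infty \le C/\epsilon^2$ for each parameter block, with $C$ logarithmic in $d$ and in the number of evaluation points.

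For the weight blocks $W_l^{(1)},W_l^{(2)},W_l^{(KQ)},W_l^{(V)}$ and the readout $w$, we would invoke Maurey's sparsification lemma, in the form of Zhang (2002, Theorem 3) as used in Lemma A.7 of Edelman et al. For $\|W\|_{2,1}\le B$ and inputs $\|x\|_2\le 1$, the class $\{x\mapsto Wx\}$ evaluated at the $Tn$ columns (the same device as in the proof of Lemma \ref{theorem:4.2_Trauger}, where $\|\cdot\|_{2,\infty}$ over inputs reduces to a maximum over columns) has log sup-cover at most $\lesssim B^2\log(dn)/\epsilon^2$. This gives $C_1\asymp B^2\log(dn)$. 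For the first-layer key--query block the inputs satisfy only $\|x\|_2\le B_x$, so by rescaling $C_{B_x}\asymp B^2B_x^2\log(dn)$.

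For the biases we would \emph{not} use the volumetric bound of Lemma \ref{lemma:covering_B}, because it scales with $d$. Instead, note that $\{b:\|b\|_1\le B\}$ is the convex hull of the $2d+1$ points $\{0,\pm Be_i\}$. Maurey's empirical method then gives an $\epsilon$-cover in $\ell_2$ by averages of $k\asymp B^2/\epsilon^2$ such points, so the log-cover is $\lesssim B^2\log(2d+1)/\epsilon^2$. This is the source of the $(B^2\log(2d+1))^{1/3}$ terms that multiply ${\beta_l^{(1)}}^{2/3}$ and ${\beta_l^{(2)}}^{2/3}$, since $b^{(1)}$ and $b^{(2)}$ carry the same sensitivities $\beta^{(1)}$ and $\beta^{(2)}$ as $W^{(1)}$ and $W^{(2)}$ in the recursion.

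Substituting these constants, each block contributes $(C^{(i)}_l)^{1/3}(\beta_l^{(i)})^{2/3}$ to $\gamma$ in Lemma \ref{lemma:A8_Edelman}. Grouping the weight blocks (constant $C_1$) and the bias blocks (constant $B^2\log(2d+1)$) for $l\ge2$ yields $\eta$. The $l=1$ terms, the $B_x$-dependent $KQ$ term and the readout term $C_1^{1/3}$ (with unit sensitivity) together give $\widetilde\eta$, and the total log covering number is then $(\widetilde\eta+\eta)^3/\epsilon^2$. The main obstacle is the careful application of Maurey to the $\ell_{2,1}$ matrix class: the matrix must be written as a convex combination of rank-one atoms $B\,e_i u^\top/\|u\|$ and the $\ell_2$ sampling variance controlled uniformly over the $Tn$ inputs. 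One also has to be careful with the bookkeeping of exponents, since the $KQ$ term in $\widetilde\eta$ appears without the $2/3$ power in the statement, matching Lemma \ref{theorem:4.2_Trauger}.
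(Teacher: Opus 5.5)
This is essentially the paper's proof: it likewise substitutes Maurey-type entropy bounds into Lemma~\ref{theorem:4.2_Trauger}, namely Lemma~4.6 of \citet{2022_Edelman_et_al} for the $\ell_{2,1}$-bounded weight blocks (so $C_1\asymp B^2\log(dn)$ and $C_{B_x}\asymp B^2B_x^2\log(dn)$) and Theorem~3 of \citet{2002_Zhang} for the $\ell_1$-bounded biases (so $B^2\log(2d+1)$), with the same grouping of weight, bias, first-layer and readout terms that you describe. Two bookkeeping remarks, both inherited from the paper: covering at the $Tn$ columns actually produces a $\log(dTn)$ factor, so the $n$ in $\log(dn)$ must be read as the number of covered points; and Lemma~\ref{lemma:A8_Edelman} yields $(\beta_1^{(KQ)})^{2/3}$, so the missing exponent in the statement is a typo your derivation should correct rather than reproduce.
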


\begin{proof}
    By applying Lemma 4.6 from \cite{2022_Edelman_et_al} and Maurey’s specification in Theorem 3 of \cite{2002_Zhang} to Lemma \ref{theorem:4.2_Trauger}, we complete the proof.
\end{proof}

\begin{lemma}[Covering number of Transformer--MLP composition]
\label{lem:cover-tf-mlp}
    Under the assumptions of Lemma~\ref{theorem:4.2_Trauger}, let $\mathcal {G}^{(TF)}=\bigl\{g_{L+1}^{(tfw)}(\cdot;W_{1:L+1},w)\bigl\}$ be the class of Transformer functions with parameters satisfying the stated norm bounds, and let $\mathcal{F}^{(MLP)}$ be the MLP class defined in Section \ref{sec:transformer} satisfying $\max_{i}\left\|W_i\right\|_{\infty} \leq B$ and $\max_{i}\left|b_i\right|_{\infty} \leq B$ for $i=0,\ldots,D$ and $B >0$. Assume that $\mathcal{F}^{(MLP)}$ is uniformly Lipschitz with $\sup_{f\in\mathcal F^{(MLP)}} \mathrm{Lip}(f) \le C$. Define the composed class
    \begin{equation}
        \mathcal{F} := \mathcal F^{(MLP)} \circ \mathcal G^{(TF)} = \{ f\circ g : f\in\mathcal F^{(MLP)},\ g\in\mathcal G^{(TF)} \}.
    \end{equation}
    Then, for any $\varepsilon>0$, the $\|\cdot\|_\infty$-covering number of $\mathcal{F}$ satisfies
    \begin{equation}
        \log N_\infty(\varepsilon,\ \mathcal{F}) \le O (d_{m} \cdot \log \Big( 16\, \varepsilon^{-1} D \, d^2 \, {d_{m}}^{2D}\Big) + \frac{4C^2(\widetilde{\eta} + \eta)^3}{\varepsilon^2}.
    \end{equation}
    where $\widetilde{\eta}$ and $\eta$ are defined as in Lemma~\ref{theorem:4.2_Trauger}.
\end{lemma}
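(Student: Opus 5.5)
We cover the composed class by a product of a cover of the MLP class and a cover of the Transformer class, then control the two perturbations separately. Fix $\varepsilon>0$ and split the budget as $\varepsilon/2$ for each component. For $f\circ g\in\mathcal F$ and candidate cover elements $\widehat f\in\widehat{\mathcal F}^{(MLP)}$, $\widehat g\in\widehat{\mathcal G}^{(TF)}$, the triangle inequality gives, uniformly in the input $X$,
\begin{equation*}
    \bigl|f(g(X))-\widehat f(\widehat g(X))\bigr|
    \le \bigl|f(g(X))-f(\widehat g(X))\bigr| + \bigl|f(\widehat g(X))-\widehat f(\widehat g(X))\bigr|
    \le C\,\|g(X)-\widehat g(X)\| + \sup_{z}\bigl|f(z)-\widehat f(z)\bigr|,
\end{equation*}
where the first term uses the uniform Lipschitz bound $\sup_{f}\mathrm{Lip}(f)\le C$. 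Hence it suffices to take a cover of $\mathcal G^{(TF)}$ at resolution $\varepsilon/(2C)$ and a sup-norm cover of $\mathcal F^{(MLP)}$ (on the bounded range of the Transformer output, which is bounded because of $\Pi_{norm}$ and $\|w\|\le B$) at resolution $\varepsilon/2$. The log covering number of $\mathcal F$ is then bounded by the sum of the two log covering numbers.

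For the Transformer part, we invoke Lemma~\ref{theorem:4.2_Trauger} (in the $\ell_{2,1}$-norm version of the subsequent lemma): the log covering number at resolution $\epsilon$ is $(\widetilde\eta+\eta)^3/\epsilon^2$. Substituting $\epsilon=\varepsilon/(2C)$ yields exactly $4C^2(\widetilde\eta+\eta)^3/\varepsilon^2$, the second term of the claim. One point to check is that the Transformer cover of Lemma~\ref{theorem:4.2_Trauger} is an empirical ($Tn$-point) sup cover rather than a global one; since the result there holds uniformly over sample sets, the worst case over inputs is what appears in $V_{(\mathcal F,\|\cdot\|_\infty)}$, and I would state the lemma in that sense.

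For the MLP part, I would use the standard parameter-discretization argument of \cite{2020_Schmidt-Hieber} (Lemma 5 there). With entries bounded by $B$ and depth $D$, perturbing each parameter by at most $\epsilon'$ changes the output by at most of order $\epsilon' D\,(B\,d_m)^{D}$ on a bounded input domain. A grid on $[-B,B]$ with mesh $\epsilon'$ applied to each of the $P\lesssim D\,d\,d_m+D\,d_m^2$ parameters gives $\log N\le P\log(\cdots/\epsilon')$. With the sparsity/width convention implicit in the statement, the parameter count contributes the $d_m$ prefactor (up to the $O(\cdot)$), and the argument of the logarithm becomes $16\varepsilon^{-1}D\,d^2\,d_m^{2D}$ after absorbing $B$ and the input bound into constants. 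This step is the main obstacle: the prefactor $d_m$ rather than $D d_m^2$ requires either a sparsity constraint on the MLP weights as in \cite{2020_Schmidt-Hieber} (counting only the $s$ nonzero parameters and choosing their positions, which adds a $\log\binom{P}{s}$ term absorbed in the logarithm) or treating $D$ and width factors as absorbed into $O(\cdot)$. I would adopt the sparse-MLP formulation mentioned before Corollary~\ref{cor:convergence_rate}, track the constants loosely, and then add the two log covering numbers to conclude.
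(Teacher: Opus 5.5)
Your proposal follows the paper's proof essentially step for step. It uses the same product cover with an $\varepsilon/2$ split, the same Lipschitz transfer of the Transformer cover from Lemma~\ref{theorem:4.2_Trauger} at scale $\varepsilon/(2C)$ (giving $4C^2(\widetilde{\eta}+\eta)^3/\varepsilon^2$), and Lemma~5 of \cite{2020_Schmidt-Hieber} for the sparse MLP. Your explicit triangle-inequality step is, if anything, a cleaner justification of the product bound than the paper's appeal to \cite{2023_Petrova_and_Wojtaszczyk}.

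One caveat concerns the obstacle you flagged. The sparse-MLP formulation the paper actually uses has $S^{(net)}\lesssim N(m+6)$ nonzero parameters with $m\asymp\log d_m$. The prefactor this yields, and the one obtained in the paper's own proof and in display \eqref{eq:variance-cover-simplified}, is therefore $d_m\log d_m$, not $d_m$. Since $d_m$ grows with $T$, the extra $\log d_m$ cannot be hidden in the $O(\cdot)$. Getting the bare $d_m$ in the statement requires imposing $O(d_m)$ nonzero MLP parameters directly, which is not what adopting the paper's formulation gives you.
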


\begin{proof}
For notational simplicity, denote the Transformer block by $g := g_{L+1}^{(tfw)}(X; W_{1:L+1}, w)$, and the MLP network by $f := f^{(mlp)}$. We consider the composed function class
\begin{equation}
    \mathcal F := \{ f \circ g : f \in \mathcal F^{(MLP)},\ g \in \mathcal G^{(TF)} \}.
\end{equation}
Since $f \in \mathcal F^{(MLP)}$ is Lipschitz continuous, for any $g, g' \in \mathcal G^{(TF)}$ we have
\begin{equation}
    \| f \circ g - f \circ g' \|_{\infty} \le \mathrm{Lip}(f)\, \| g - g' \|_{\infty}.
\end{equation}
Under the parameter constraints of \cite{2020_Schmidt-Hieber}, the MLP class is uniformly Lipschitz; that is,
\begin{equation}
    \sup_{f \in \mathcal F^{(MLP)}} \mathrm{Lip}(f) \le C < \infty.
\end{equation}
Hence, the map $g \mapsto f \circ g$ is $C$-Lipschitz uniformly over $f \in \mathcal F^{(MLP)}$.

By Lemma~3.1 in \cite{2023_Petrova_and_Wojtaszczyk}, which controls covering numbers under Lipschitz mappings, it follows that
\begin{equation}
    N_{\infty}(\varepsilon,\ f \circ \mathcal G^{(TF)}) \;\le\; N_{\infty}(\varepsilon/C,\ \mathcal G^{(TF)}).
\end{equation}
To cover the entire composed class $\mathcal{F}$, we discretize both the MLP and Transformer classes. Therefore
\begin{equation}
    N_{\infty}(\varepsilon,\ \mathcal F) \;\le\; N_{\infty}(\varepsilon/2,\ \mathcal F^{(MLP)}) \cdot N_{\infty}(\varepsilon/(2C),\ \mathcal G^{(TF)}),
\end{equation}
and further,
\begin{equation} \label{eq:comp-entropy-sum}
    \log N_{\infty}(\varepsilon,\ \mathcal F) \;\le\; \log N_{\infty}(\varepsilon/2,\ \mathcal F^{(MLP)}) + \log N_{\infty}(\varepsilon/(2C),\ \mathcal G^{(TF)}).
\end{equation}

\paragraph{MLP covering number.}
Define that $a_n \asymp b_n$ if $a_n \lesssim b_n$ and $b_n \lesssim a_n$ for two sequences. Let $\lceil\cdot\rceil$ denotes the ceiling function and $\vee$ denotes the maximum operator. Recalling the output MLP neural network in \eqref{eq:mlp_class}. By Lemma~5 of \cite{2020_Schmidt-Hieber}, for any integers $m \ge 1$, $N \ge (\beta+1)^s \vee(L_{\mathcal{I}}+1) \exp(s)$ and any $\delta>0$, we have $\mathcal{F}^{(MLP)}(D, \widetilde{d}_{m})$ with number of parameters $S^{(net)}$ such that 
\begin{equation}\label{eq:mlp-cover}
    \log N_\infty(\delta,\ \mathcal F^{(MLP)}) \le (S^{(net)}+1)\log\Big( 2\,\delta^{-1}\,(D+1) V^2\Big)
\end{equation}
where $V:= 2(d+1) \cdot \prod_{i=1}^{D-1}\left(d_{m}^{(i)} + 1\right)$, $D$ is the depth, $d_{m}^{(i)}$ the dimension of each hidden layer. Applying \eqref{eq:mlp-cover} with $\delta=\varepsilon/2$ and consider the defined MLP in section \ref{sec:transformer} gives
\begin{equation}\label{eq:mlp-cover-eps}
    \log N_\infty(\varepsilon/2,\ \mathcal F^{(MLP)}) \le (S^{(net)}+1) \log \Big( 4\, \varepsilon^{-1} \, (D+1) V^2\Big).
\end{equation}
Since $D = 8 + (m+5)\left(1 + \left\lceil\log_2(s \vee \beta)\right\rceil\right)$, $d_{m} := d_{m}^{(i)} = 6(s+\lceil\beta\rceil) N$ and $S^{(net)} \leq 141(s+\beta+1)^{3+s} N(m+6)$ by Theorem 5 in \cite{2020_Schmidt-Hieber}, combining the above result for MLP and choosing $m \asymp \log d_m$, we have
\begin{equation}
    \log N_\infty(\varepsilon/2,\ \mathcal F^{(MLP)}) = O\Bigg( \big(d_{m} \log (d_m) + 1\big) \cdot \log \Big( 16\, \varepsilon^{-1} \, (D+1) (d+1)^2 (d_{m}+1)^{2D-2} \Big) \Bigg).
\end{equation}

\paragraph{Transformer covering number.}
By Lemma~\ref{theorem:4.2_Trauger} (stated above), for any $\delta>0$,
\begin{equation}\label{eq:tf-cover}
    \log N_\infty(\delta,\ \mathcal G^{(TF)}) \le \frac{(\widetilde{\eta} + \eta)^3}{\delta^2}.
\end{equation}
Applying \eqref{eq:tf-cover} with $\delta=\varepsilon/(2C)$ yields
\begin{equation}\label{eq:tf-cover-eps}
    \log N_\infty(\varepsilon/(2C),\ \mathcal G^{(TF)}) \le \frac{(\widetilde{\eta}+\eta)^3}{(\varepsilon/(2C))^2} = \frac{4C^2(\widetilde{\eta} + \eta)^3}{\varepsilon^2}.
\end{equation}

Combining both covering numbers by substituting \eqref{eq:mlp-cover-eps} and \eqref{eq:tf-cover-eps} into \eqref{eq:comp-entropy-sum} gives
\begin{equation}
    \log N_\infty(\varepsilon,\ \mathcal{F}) \le O \Bigg(d_{m} \log (d_m) \cdot \log \Big( \varepsilon^{-1} \, D\, d\, d_{m}^{D}\Bigg) + \frac{4C^2(\widetilde{\eta} + \eta)^3}{\varepsilon^2}.
\end{equation}
\end{proof}
    \section{Bias Rate} \label{sec:bias}

We follow the setting of \cite{2022_Edelman_et_al} and slightly adapt their definitions to align with our model architecture. We begin with the following definitions.
\begin{enumerate}
    \item (\textbf{$\mathcal{I}$-sparsity}) A function $f:[0,1]^n \mapsto \mathcal{Y}$ is $\mathcal{I}$-sparse if it only depends on a fixed subset $\mathcal{I} \subseteq \{1, \cdots, n\}$ of its inputs:
    \begin{equation}
        x_i = x_i^{\prime} \quad \forall i \in \mathcal{I} \quad \Longrightarrow \quad f(x) = f\left(x^{\prime}\right).
    \end{equation}

    \item (\textbf{$\gamma$-injective}) An $\mathcal{I}$-sparse function $f$ is $\gamma$-injective if there exists $\gamma>0$ such that
    \begin{equation}
        \| f(x) - f(x') \|_\infty \;\ge\; \gamma \, \| \Pi_{\mathcal{I}} x - \Pi_{\mathcal{I}} x' \|_\infty, \qquad \forall x,x' \in [0,1]^n,
    \end{equation}
    where $\Pi_{\mathcal{I}}$ is a projection onto coordinates indexed by $\mathcal{I}$.
    
\end{enumerate}

We then make the following assumptions.
\begin{assumption}[Target Function]\label{assp:lemma_target_function}
Define $x \in [0,1]^n$ and let $\mathbb Z:[0,1]^n\to[0,1]^{d\times n}$ be a fixed embedding with $Z = \mathbb{Z}(x)$. Assume there exists $\mathcal I\subseteq\{1,\dots,n\}$ with $|\mathcal I|=s\le\min\{n,d\}$, and a $\beta$-H\"older function $\phi:[0,1]^s\to\mathbb R$ with constant $L_{\mathcal I}$, such that the target function $g^*:[0,1]^{d\times n}\to\mathbb R$ satisfies, for all $x \in [0,1]^{n}$,
\begin{equation}
    g^* \circ \mathbb{Z}(x) = \widetilde{\phi} \circ \widetilde{g}_{\mathcal I}(x),
\end{equation}
where $\widetilde{g}_{\mathcal{I}}(x)$ is $\mathcal I$-sparse, $1$-bounded and $\gamma$-injective, and $\widetilde\phi:[0,1]^d\to\mathbb R$ is defined by
\begin{equation}
    \widetilde\phi(u):=\phi\left((u_1,\dots,u_s)\right).
\end{equation}
\end{assumption}

{
\begin{assumption}[Recoverability by Self-Attention] \label{assp:lemma_recoverability}
    Assume $\mathbb{Z}$ satisfies structural conditions similar to those in Appendix B of \cite{2022_Edelman_et_al} (e.g., approximate orthogonality of positional encodings). Specifically, the embedding $Z={\mathbb{Z}(x)}$ satisfies the column-wise decomposition
\begin{equation}\label{eq:Z_decomp_short}
        Z_{:,j} = p_j + x_j u,\qquad j\in[n],
    \end{equation}
    where $\|p_j\|_2=\|u\|_2=1$, $\langle p_j,u\rangle=0$, and $|\langle p_i,p_j\rangle|\le\Delta$ for $i\neq j$. 
\end{assumption}}
Assumption~\ref{assp:lemma_recoverability} essentially requires that the key latent information be appropriately embedded in the input matrix so that attention layers can extract it effectively, as shown later. Following \cite{2022_Edelman_et_al}, we decompose the input columns into two components, and for illustrative purposes, we adopt a specific form, while acknowledging that other forms may also be viable\footnote{One form considered \cite{2022_Edelman_et_al} is that $Z_{:,j} = p_j + x_j u_1 +(1-x_j) u_0$ with $j \in [n],$ where $x_j$ is binary and $u_1,u_0$ are orthogonal to each other. Note that our proof also extends straightforwardly to such a case.}.

\begin{lemma}
    \label{lemma:edelman_B2_continuous}
    Let $x \in [0,1]^{n}$ and $\mathbb{Z} : [0,1]^{n} \to \mathbb{R}^{d \times n}$ with $Z = \mathbb{Z}(x)$ be a fixed embedding satisfying the structural conditions in Assumption~\ref{assp:lemma_recoverability}, with approximate orthogonality parameter $\Delta$ and $\Delta < 1/s$. Under Assumption \ref{assp:lemma_target_function}, there exists $g: \mathbb{R}^{d \times n} \mapsto \mathbb{R}$ {in the Transformer function class $\mathcal{G}$} defined in \eqref{eq:function_class} such that $g$ approximates $g^*$ under $\mathbb{Z}$, i.e.,
    \begin{equation}
        \sup_{Z \in \mathbb{Z}([0,1]^{n})} \big| g(Z) - g^*(Z) \big| \;\le\; \varepsilon.
    \end{equation}
    And the weights satisfy
    \begin{equation}
        \begin{split}
            \|W^{(Q)}_{h}\|_{2,1} \le& \frac{\log\!\left(\frac{4n}{\gamma}\right)}{1 - 2\Delta}, \qquad \|W^{(K)}_{h}\|_{2,1} \le 1, \qquad \|W^{(V)}_{h}\|_{2,1} \le 1, \qquad \|W^{(O)}_{h}\|_{2,1} \le 1
        \end{split}
    \end{equation}
    for $h=1,\dots,s$. And the weights in MLP satisfy $\max_{i}\left\|W_i\right\|_{\infty} \leq B$ and $\max_{i}\left|b_i\right|_{\infty} \leq B$ for $i=0,\ldots,D$ and $B >0$.
\end{lemma}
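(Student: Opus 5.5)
The approximant will be built in two stages, mirroring the factorization $g^*\circ\mathbb{Z}=\widetilde\phi\circ\widetilde g_{\mathcal I}$ of Assumption~\ref{assp:lemma_target_function}. The attention block will \emph{select} the $s$ relevant coordinates $(x_i)_{i\in\mathcal I}$. The MLP will then \emph{approximate} the $\beta$-H\"older function $\phi$ on $[0,1]^s$. The error budget is split as $\varepsilon=\varepsilon_{\mathrm{att}}+\varepsilon_{\mathrm{mlp}}$.

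\textbf{Step 1 (attention heads as soft selectors).} Use $H=s$ heads, one per index $i_h\in\mathcal I$, and set $W^{(K)}_h Z_{:,j}$ to produce the positional component $p_j$. Since $W^{(K)}_h$ projects onto span$\{p_j\}$, it has $\|W^{(K)}_h\|_{2,1}\le 1$. Take the query as a scaled copy of $p_{i_h}$, with scale $\lambda=\log(4n/\gamma)/(1-2\Delta)$ so that $\|W^{(Q)}_h\|_{2,1}\le\lambda$. Then the logits satisfy $\lambda\langle p_{i_h},p_j\rangle\ge\lambda(1-\Delta)$ for $j=i_h$ and $\le\lambda\Delta$ otherwise, where the gap uses $\langle p_j,u\rangle=0$ so that the $x_j u$ part does not interfere. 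With this choice of $\lambda$, the softmax puts mass at least $1-\gamma/4$ on column $i_h$, uniformly in $x$. The $\sqrt d$ scaling is absorbed into $\lambda$ up to constants. Next, choose $W^{(V)}_h$ to extract the coefficient $\langle u,Z_{:,j}\rangle=x_j$, and $W^{(O)}_h$ to write it into coordinate $h$; both have norm at most $1$. The output coordinate $h$ then equals $x_{i_h}+r_h$ with $|r_h|\le\gamma/4\cdot 2$, since $\|Z_{:,j}\|$ is bounded. The FFN is set to the identity on these coordinates (ReLU with a bias shift, since $x\in[0,1]$), and the linear pooling $w$ reads off one column. This produces $v(x)\in\mathbb{R}^s$ with $\|v(x)-\Pi_{\mathcal I}x\|_\infty\lesssim \gamma$ times a softmax tail. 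Sharpening $\lambda$ by an additive $\log(1/\varepsilon)$ makes this tail at most $\varepsilon_{\mathrm{att}}$.

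\textbf{Step 2 (MLP approximation of $\phi$).} Apply Theorem 5 of \cite{2020_Schmidt-Hieber}, already used in Lemma~\ref{lem:cover-tf-mlp}. It gives a ReLU network of depth $D=8+(m+5)(1+\lceil\log_2(s\vee\beta)\rceil)$, width $d_m=6(s+\lceil\beta\rceil)N$, and weights bounded by $1\le B$, with $\sup_{[0,1]^s}|\psi-\phi|\lesssim N^{-\beta/s}+N2^{-m}$. Clipping $v(x)$ to $[0,1]^s$ costs one extra ReLU layer.

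\textbf{Step 3 (combine).} Write
$|\psi(v(x))-\phi(\Pi_{\mathcal I}x)|\le|\psi(v)-\phi(v)|+L_{\mathcal I}\|v-\Pi_{\mathcal I}x\|_\infty^{\beta\wedge1}$.
Use $\gamma$-injectivity of $\widetilde g_{\mathcal I}$ to identify $\widetilde g_{\mathcal I}(x)$ with a bi-Lipschitz relabelling of $\Pi_{\mathcal I}x$, absorbing the factor $1/\gamma$ into the $\log(4n/\gamma)$ in the query norm. Both terms are then at most $\varepsilon/2$ by the choices above.

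The main obstacle is Step 1: showing the softmax concentration holds uniformly over continuous $x\in[0,1]^n$ and not only binary inputs. It must also be checked that the leakage from the $n-1$ off-target columns, which grows as $n e^{-\lambda(1-2\Delta)}$, is controlled with only the stated $\ell_{2,1}$ bound. This is where $\Delta<1/s$ and the $\log(4n/\gamma)$ scale are used. I would follow the proof of Lemma B.2 in \cite{2022_Edelman_et_al}, replacing their binary $u_0,u_1$ decomposition with the single direction $u$.
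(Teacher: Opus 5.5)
Your approximation argument is essentially sound, but it does not prove the lemma as stated, and it fails exactly at the query bound. In Step~1 you reach the target accuracy only by ``sharpening $\lambda$ by an additive $\log(1/\varepsilon)$''. Step~3 needs $L_{\mathcal I}\|v(x)-\Pi_{\mathcal I}x\|_\infty^{\beta\wedge 1}\le\varepsilon/2$, so $\varepsilon_{\mathrm{att}}$ must be polynomially small in $\varepsilon$. Your query then has $\|W^{(Q)}_h\|_{2,1}$ of order $\log\bigl(n/(\gamma\varepsilon_{\mathrm{att}})\bigr)/(1-2\Delta)$, which grows like $\log(1/\varepsilon)$. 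The lemma asserts the $\varepsilon$-free bound $\log(4n/\gamma)/(1-2\Delta)$. Your remark about ``absorbing the factor $1/\gamma$ into the $\log(4n/\gamma)$'' cannot repair this, because $\gamma$ is the fixed injectivity constant of the target, not a parameter you can tie to $\varepsilon$.

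The paper's route differs precisely here. It takes $W^{(K)}_h=e_1p_{i_h}^\top$, $W^{(V)}_h=e_1u^\top$, $W^{(O)}_h=e_he_1^\top$ and $W^{(Q)}_h=R\,e_1p_r^\top$, with $R=\log(4n/\gamma)/(1-2\Delta)$ fixed. This $R$ is chosen only so that the leakage $\|\alpha_h-e_{i_h}\|_1$ is at most $\gamma/2$. The paper then argues that $g^{(tf)}\circ\mathbb{Z}$ is $\gamma/2$-injective in $\Pi_{\mathcal I}x$, defines $\mathrm{Rec}$ with Lipschitz constant $2/\gamma$, and lets the MLP approximate $\phi\circ\mathrm{Rec}$ (H\"older with constant $L_{\mathcal I}(2/\gamma)^\beta$) instead of $\phi$. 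The leakage is never driven to zero. All $\varepsilon$-dependence moves into the MLP size, $\varepsilon\asymp(2/\gamma)^\beta d_m^{-\beta/s}$, and this is where the $\gamma$ inside the query bound comes from.

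You should know, however, that this device needs the attention output to depend on $x$ only through $\Pi_{\mathcal I}x$, and the rank-one construction does not give that. The logits are $R\langle p_{i_h},p_j\rangle$, so every column $j\notin\mathcal I$ receives softmax weight at least $e^{-R\Delta}/(e^{R}+(n-1)e^{R\Delta})>0$, and $x_j$ enters the output. The paper's increment bound restricts the sum to $j\in\mathcal I$, which tacitly assumes $x_j=x'_j$ off $\mathcal I$. For $n=2$, $s=1$ one can pick $x\neq x'$ with equal attention output but different $x_{i_1}$, so $\mathrm{Rec}$ is not well defined as written. Your bound $|r_h|\le\|\alpha_h-e_{i_h}\|_1\|x\|_\infty$, taken over all columns, is the correct treatment of leakage. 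What your argument honestly establishes is the lemma with the query bound enlarged by a term of order $\log(1/\varepsilon)/(1-2\Delta)$; state it that way rather than claim the $\varepsilon$-free bound.

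Three smaller repairs are also needed:
\begin{enumerate}
\item A projection onto $\mathrm{span}\{p_j\}$ has the wrong shape, since $W^{(K)}_h$ maps into $\mathbb{R}^{d/H}$, and its $\ell_{2,1}$ norm exceeds $1$. A query outputting $\lambda p_{i_h}$ has norm $\lambda\|p_{i_h}\|_1$, not $\lambda$. Use the rank-one weights above instead.
\item Absorbing the $1/\sqrt d$ into the query costs a factor $\sqrt d$, not a constant. The paper also silently drops it.
\item $\gamma$-injectivity is only a lower bound, so it does not make $\widetilde g_{\mathcal I}$ a bi-Lipschitz relabelling of $\Pi_{\mathcal I}x$. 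Like the paper, you effectively need $\widetilde g_{\mathcal I}(x)=\sum_{i\in\mathcal I}x_ie_{\pi(i)}$, or an added Lipschitz assumption on $\widetilde g_{\mathcal I}$.
\end{enumerate}
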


\begin{proof}
Fix $\mathcal I=\{i_1,\dots,i_s\}\subset[n]$ and let $Z = \mathbb Z(x)$. 

\paragraph{Step 1: (Attention Approximation Error).}
Taking $x \in[0,1]^n$, we aim to approximate the 1-bounded $\gamma$-injective (e.g., with $\gamma=1$) $\mathcal{I}$-sparse function
\begin{equation} 
    g_{\mathcal{I}}(x) := \sum_{i \in \mathcal{I}} x_i \, e_{\pi(i)}^{(d)} \qquad \in \mathbb{R}^d, 
\end{equation}
where $\pi: \mathcal{I} \rightarrow[s]$ is a bijection, and we write $i_h := \pi^{-1}(h)$.  

Set $r$ the index of the readout column. Under Assumption \ref{assp:lemma_recoverability}, we define
\begin{equation}\label{eq:weights_short}
    W_h^{(Q)} := R\, e_1^{(k)} p_r^\top,\quad
    W_h^{(K)} := e_1^{(k)} p_{i_{h}}^\top,\quad
    W_h^{(V)} := e_1^{(k)} u^\top,\quad
    W_h^{(O)} := e_h^{(d)} {e_1^{(k)}}^\top,
\end{equation}
with $R>0$ to be specified. For head $h$, 
\begin{equation}
    \bigl(\alpha_h(Z)\bigl)_{:,r} := \sigma_S \left( Z^{\top} {W_h^{(K)}}^{\top} W_h^{(Q)} Z_{:,r} \right) \qquad \in [0,1]^{n}
\end{equation}
By Lemma B.2 and B.7 in \cite{2022_Edelman_et_al}, we have
\begin{equation}
    \left\|\bigl(\alpha_h(Z)\bigl)_{:,r} - e_{i_h}^{(n)}\right\|_1 \le \frac{2n}{\exp({R(1 - 2\Delta)})}.
\end{equation}
Since $W_h^{(V)}Z = e_1^{(k)}(u^{\top} Z) = e_1^{(k)}(x_1,\dots,x_n) = e_1^{(k)} \sum_{j=1}^n x_j \left(e_j^{(n)}\right)^{\top}$, then
\begin{equation}
    W_h^{(O)}\,W_h^{(V)} Z = e_h^{(d)} {e_1^{(k)}}^\top \, e_1^{(k)} u^\top Z = e_h^{(d)}\sum_{j=1}^n x_j \,(e_j^{(n)})^\top .
\end{equation}
The attention output for head $h$ is
\begin{equation}
    g_h^{(msa)}(\mathbb{Z}(x)) = e_h^{(d)} \sum_{j=1}^n x_j {e_j^{(n)}}^{\top} \bigl(\alpha_h(\mathbb{Z}(x))\bigl)_{:,r} = e_h^{(d)} \sum_{j=1}^n x_j \, \bigl(\alpha_h(\mathbb{Z}(x))\bigl)_{j,r} \qquad \in [0,1]^{d}.
\end{equation}
Summing over $h=1,\dots,s$ gives the MSA output
\begin{equation}
    g^{(msa)}(\mathbb{Z}(x)) := \sum_{h=1}^s f_h^{(msa)}(\mathbb{Z}(x))
    = \sum_{h=1}^s e_h^{(d)}\sum_{j=1}^n x_j\, \bigl(\alpha_h(\mathbb{Z}(x))\bigl)_{j,r} \qquad \in [0,1]^{d}.
\end{equation} 

After the MSA sublayer, the FFN sublayer $\mathcal{G}^{(FF)}$ acts column-wise. We may choose the FFN parameters such that the overall FFN mapping acts as the identity on the readout. As a remark, the FFN can also be used to eliminate the residual connection (see Lemma 5 in \cite{2025_Jiao_et_al}). In what follows, we take $g^{(tf)}(\mathbb Z(x)) := g^{(msa)}(\mathbb Z(x))$. Thus
\begin{equation}
    \begin{split}
        \big\|g^{(tf)}(\mathbb{Z}(x)) - g_{\mathcal I}(x)\big\|_\infty
        &= \max_{h\in[s]}
        \left|\sum_{j=1}^n x_j\,\bigl(\alpha_h(Z)\bigl)_{j,r} - x_{i_h}\right| \\
        &= \max_{h\in[s]} \left|x^\top \Big(\bigl(\alpha_h(Z)\bigl)_{:,r}-e_{i_h}^{(n)}\Big)\right|\\
        &\le\ \max_{h\in[s]} \|x\|_\infty\,\left\|\bigl(\alpha_h(Z)\bigl)_{:,r}-e_{i_h}^{(n)}\right\|_1 \\
        &\le \max_{h\in[s]}\left\|\bigl(\alpha_h(Z)\bigl)_{:,r}-e_{i_h}^{(n)}\right\|_1 \\
        &\le\ \frac{2n}{\exp\!\big(R(1-2\Delta)\big)}.
    \end{split}
\end{equation}

Let $x,x'\in[0,1]^n$ satisfy $\Pi_{\mathcal I}x\neq \Pi_{\mathcal I}x'$. Set $Z:=\mathbb Z(x)$ and $Z':=\mathbb Z(x')$. For each $h\in[s]$, note that the attention weights $\big(\alpha_h(Z)\big)_{:,r}$ depend only on the positional encodings $\{p_j\}_{j=1}^n$ (and the readout index $r$), hence they are independent of $x$.
Moreover, by the explicit form of the construction,
\begin{equation} 
    e_h^{(d)} \cdot g^{(tf)}(Z) = \sum_{j=1}^n x_j\,\big(\alpha_h(Z)\big)_{j,r}, \qquad e_h^{(d)} \cdot g_{\mathcal{I}}(x) = x_{i_h}. 
\end{equation}
we have for each $h\in[s]$,
\begin{equation}\label{eq:inc_bound_head}
    \begin{split}
        &\Big| e_h^{(d)} \cdot \big(g^{(tf)}(Z) - g^{(tf)}(Z')\big) - e_h^{(d)} \cdot \big(g_{\mathcal I}(x) - g_{\mathcal I}(x')\big) \Big| \\
        =\;& \left|\sum_{j=1}^n \big(\alpha_h(Z)\big)_{j,r}(x_j-x'_j) - (x_{i_h}-x'_{i_h})\right| \\
        =\;& \left|\sum_{j=1}^n \Big(\big(\alpha_h(Z)\big)_{j,r}-(e_{i_h}^{(n)})_j\Big)(x_j-x'_j)\right| \\
        =\;& \left|\sum_{j\in\mathcal I} \Big(\big(\alpha_h(Z)\big)_{j,r}-(e_{i_h}^{(n)})_j\Big)(x_j-x'_j)\right| \\
        \le\;& \left\|\big(\alpha_h(Z)\big)_{:,r}-e_{i_h}^{(n)}\right\|_1 \cdot \left\|\Pi_{\mathcal I}x-\Pi_{\mathcal I}x'\right\|_\infty \\
        \le\;& \delta\,\left\|\Pi_{\mathcal I}x-\Pi_{\mathcal I}x'\right\|_\infty.
    \end{split}
\end{equation}
Taking the maximum over $h\in[s]$ yields
\begin{equation}\label{eq:inc_bound_final_pf}
    \sup_{\substack{x, x'\\ \Pi_{\mathcal{I}}x \neq \Pi_{\mathcal{I}}x'}} \frac{\Big\|g^{(tf)}(\mathbb Z(x)) - g^{(tf)}(\mathbb{Z}(x')) - \big(g_{\mathcal{I}}(x) - g_{\mathcal{I}}(x')\big)\Big\|_\infty} {\|\Pi_{\mathcal{I}}x-\Pi_{\mathcal I}x'\|_\infty} \;\le\; \delta.
\end{equation}

Now use the increment control to get injectivity of the Transformer representation. For any $x, x^{\prime}$ with $\Pi_{\mathcal{I}} x \neq \Pi_{\mathcal{I}} x^{\prime}$,
\begin{equation}
    \begin{split}
        &\left\|g^{(tf)}(\mathbb{Z}(x)) - g^{(tf)}\left(\mathbb{Z}\left(x^{\prime}\right)\right)\right\|_{\infty} \\
        = \quad & \left\|\Big(g_{\mathcal{I}}(x) - g_{\mathcal{I}}(x')\Big) + \Big(g^{(tf)}(\mathbb{Z}(x)) - g^{(tf)}(\mathbb{Z}(x')) - \big(g_{\mathcal{I}}(x) - g_{\mathcal{I}}(x')\big)\Big)\right\|_{\infty} \\ 
        \ge \quad & \left\|g_{\mathcal{I}}(x) - g_{\mathcal{I}}(x')\right\|_{\infty} - \left\|g^{(tf)}(\mathbb{Z}(x)) - g^{(tf)}(\mathbb{Z}(x')) - \big(g_{\mathcal{I}}(x) - g_{\mathcal{I}}(x')\big)\right\|_{\infty} \\ 
        \ge \quad & \left\|g_{\mathcal{I}}(x) - g_{\mathcal{I}}(x')\right\|_{\infty} - \delta \left\|\Pi_{\mathcal{I}} x - \Pi_{\mathcal{I}} x^{\prime}\right\|_{\infty}.
    \end{split}
\end{equation}
Since $g_{\mathcal{I}}$ is $\gamma$-injective under Assumption \ref{assp:lemma_target_function},
\begin{equation} \label{eq:tf_gamma_minus_delta_injective}
    \left\|g^{(tf)}(\mathbb{Z}(x)) - g^{(tf)}\left(\mathbb{Z}\left(x^{\prime}\right)\right)\right\|_{\infty} \geq(\gamma-\delta)\left\|\Pi_{\mathcal{I}} x - \Pi_{\mathcal{I}} x^{\prime}\right\|_{\infty} .
\end{equation}
Choose $R$ such that $\delta \leq \gamma / 2$, i.e.
\begin{equation} \label{eq:R_gamma_choice}
    R \geq \frac{\log (4 n / \gamma)}{1-2 \Delta}
\end{equation}
Then $g^{(tf)} \circ \mathbb{Z}$ is $\gamma / 2$-injective w.r.t. $\Pi_{\mathcal{I}} x$.

Define the set
\begin{equation}
    \mathcal S := \left\{ g^{(tf)}(\mathbb Z(x)) : x\in[0,1]^n \right\}\subset\mathbb R^d.
\end{equation}
and define a recovery map
\begin{equation}
    \mathrm{Rec}:\mathcal S\to[0,1]^s, \qquad \mathrm{Rec}\left(g^{(tf)}(\mathbb Z(x))\right) := \Pi_{\mathcal I}x.
\end{equation}
Since \eqref{eq:tf_gamma_minus_delta_injective} implies that $\mathrm{Rec}$ is Lipschitz on $\mathcal S$:
for any $x,x'$,
\begin{equation}\label{eq:Rec_Lipschitz}
    \begin{split}
        \left\|\mathrm{Rec}\left(g^{(tf)}(\mathbb Z(x))\right) - \mathrm{Rec}\!\left(g^{(tf)}(\mathbb Z(x'))\right)\right\|_\infty &= \|\Pi_{\mathcal I}x - \Pi_{\mathcal I}x'\|_\infty \\
        &\le \frac{1}{\gamma-\delta}\, \|g^{(tf)}(\mathbb Z(x)) - g^{(tf)}(\mathbb Z(x'))\|_\infty \\
        &\le \frac{2}{\gamma}\, \|g^{(tf)}(\mathbb Z(x)) - g^{(tf)}(\mathbb Z(x'))\|_\infty.
    \end{split}
\end{equation}

\paragraph{Step 2: (MLP Approximation Error).}
For any $u,v\in\mathcal S$, using the $\beta$-H\"older continuity of $\phi$ and \eqref{eq:Rec_Lipschitz},
\begin{equation}\label{eq:phi_holder_on_S}
    \big|\phi(\mathrm{Rec}(u)) - \phi(\mathrm{Rec}(v))\big| \le L_{\mathcal I}\left(\frac{2}{\gamma}\right)^\beta \|u-v\|_\infty^\beta.
\end{equation}
Thus the function
\begin{equation}
    u \;\mapsto\; \phi(\mathrm{Rec}(u)), \qquad u\in\mathcal S,
\end{equation}
is $\beta$-H\"older on $\mathcal S$ with constant $L_{\mathcal I}(2/\gamma)^\beta$ and intrinsic dimension $s$.

Recalling the output MLP neural network in \eqref{eq:mlp_class}. Define that $a_n \asymp b_n$ if $a_n \lesssim b_n$ and $b_n \lesssim a_n$ for two sequences. Let $\lceil\cdot\rceil$ denotes the ceiling function and $\vee$ denotes the maximum operator. By Schmidt-Hieber's Theorem 5, there exists an MLP $f^{(mlp)}$ with depth $D = 8 + (m+5)\left(1 + \left\lceil\log_2(s \vee \beta)\right\rceil\right)$, hidden layer width $d_{m} := d_{m}^{(i)} = 6(s+\lceil\beta\rceil) N$ for $i = 1,\ldots,D-1$, and number of nonzero parameters $S^{(net)} \le C_0\,(s+\beta+1)^{3+s}\,N(m+6)$, where $m$ and $N$ are any integers satisfying $m \ge 1$ and $N \ge (\beta+1)^s \vee(L_{\mathcal{I}} + 1) \exp(s)$ such that
\begin{equation}\label{eq:mlp_approx_no_psi}
    \sup_{u\in\mathcal S} \big|g^{(mlp)}(u) - \phi(\mathrm{Rec}(u))\big| = O \left(L_{\mathcal{I}}\left(\frac{2}{\gamma}\right)^\beta \bigl(N2^{-m} + N^{-\beta/s})\right).
\end{equation}

Since $\mathrm{Rec}(f^{(tf)}(\mathbb Z(x)))=\Pi_{\mathcal I}x$, for all $x\in[0,1]^n$,
\begin{equation}
    g^{(mlp)}\left(g^{(tf)}(\mathbb Z(x))\right) = \phi(\Pi_{\mathcal I}x) = f^*_{\mathcal I}(x).
\end{equation}
Noting that $d_m^{(i)}\asymp N$, we may equivalently write the bias term as
\begin{equation}\label{eq:bias_dm_m}
    \sup_{x\in[0,1]^n} \left| g^{(mlp)}\left(g^{(tf)}(\mathbb Z(x))\right) - g^*(x) \right| = \;\lesssim\; \left(\frac{2}{\gamma}\right)^\beta \Bigl(d_m\,2^{-m} + d_m^{-\beta/s}\Bigr).
\end{equation}
We choose $m \asymp \log N \asymp \log d_m$, so that $d_m2^{-m}\lesssim d_m^{-\beta/s}$ and hence
\begin{equation}
    \sup_{x\in[0,1]^n} \left| g^{(mlp)}\left(g^{(tf)}(\mathbb Z(x))\right) - g^*(x) \right| = O\left(\left(\frac{2}{\gamma}\right)^\beta \bigl(d_{m}\bigl)^{-\beta/s}\right),
\end{equation}
under the choice of $R$ in \eqref{eq:R_gamma_choice}.

\end{proof}
    \section{Additional Figure(s)} \label{sec:tables_and_figures}

\begin{figure}[h]
    \centering
    \caption{CoVaR and $\Delta$CoVaR Differences: Text vs. No-Text (Eight G-SIBs)}
    \includegraphics[width=0.95\textwidth, keepaspectratio]{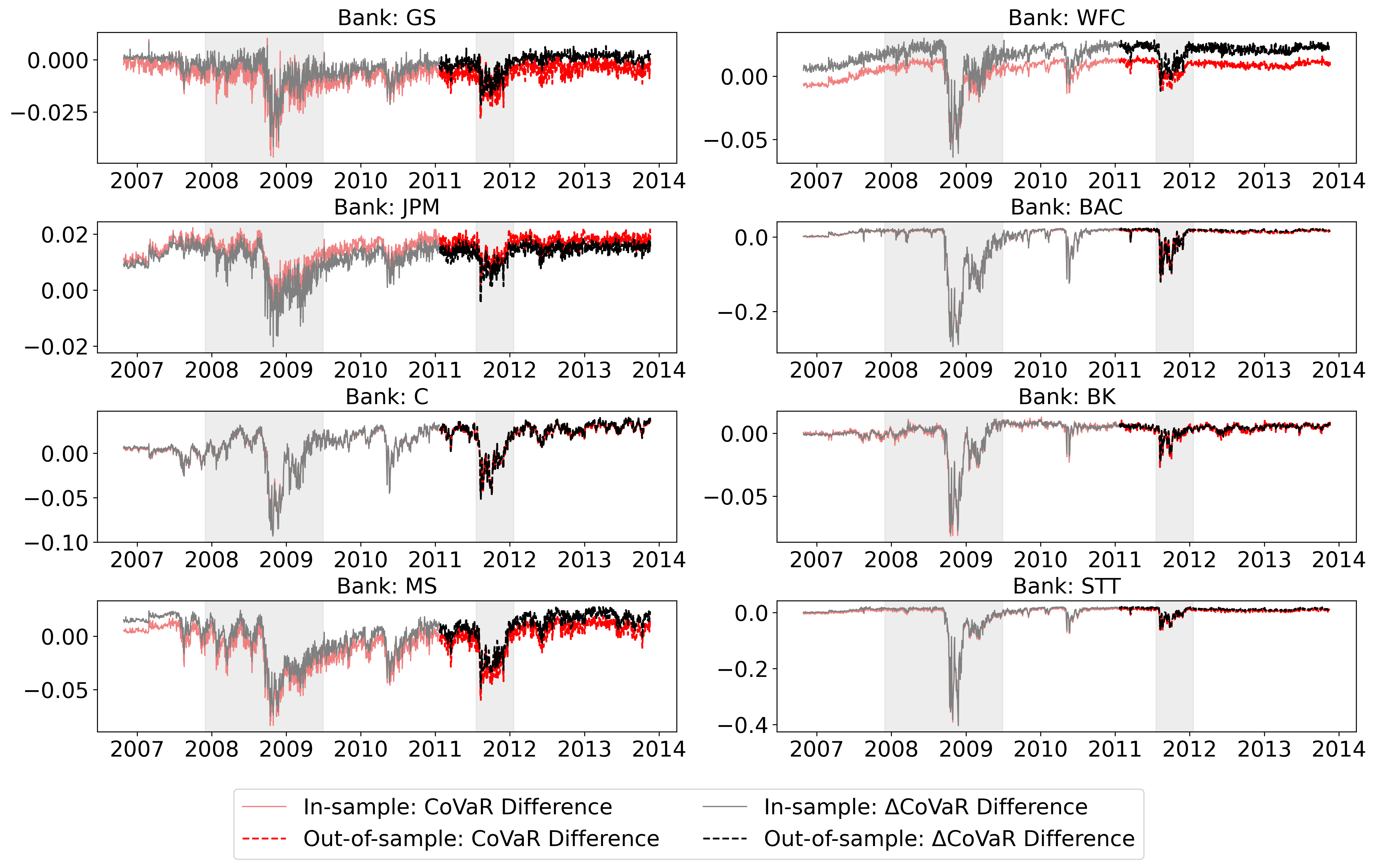}
    \caption*{\textit{Notes:} Differences in CoVaR (red) and $\Delta$CoVaR (black) between models with and without text for the eight Global Systemically Important Banks.}
    \label{fig:all_banks}
\end{figure}
\end{appendix}

\end{document}